\newcommand{\AI}{\Upsilon}
\newcommand{\eqqref}[1]{Eq.~\eqref{#1}}
\newtheorem{proposition}{Proposition}
\newtheorem{definition}{Definition}
\newtheorem{theorem}{Theorem}
\newtheorem{observation}{Observation}
\newtheorem{corollary}{Corollary}[theorem]
\newtheorem{remark}{Remark}[proposition]
\newtheorem{lemma}[theorem]{Lemma}
\begin{document}
\title{Ubiquitous Dynamical Time Asymmetry in Measurements on Materials and Biological Systems}

\author{Alessio Lapolla}
\affiliation{Mathematical bioPhysics Group, Max Planck Institute for
  Biophysical Chemistry, G\"{o}ttingen 37077, Germany}

\author{Jeremy~C. Smith}
\affiliation{Center for Molecular Biophysics, Oak Ridge National
  Laboratory, Oak Ridge, Tennessee 37830, USA}
\affiliation{
Department of Biochemistry and Cellular and Molecular Biology, University of Tennessee, Knoxville, Tennessee 37996, USA}

\author{Alja\v{z} Godec}
\email{agodec@mpibpc.mpg.de}
\affiliation{Mathematical bioPhysics Group, Max Planck Institute for Biophysical Chemistry, G\"{o}ttingen 37077, Germany}

\begin{abstract}
Many measurements on soft condensed matter (e.g., biological and
materials) systems track low-dimensional observables
projected from the full system phase space as a function of
time. Examples are dynamic structure factors, spectroscopic and
rheological response functions, and time series of distances derived
from optical tweezers, single-molecule spectroscopy and molecular
dynamics simulations. In many such systems the projection renders the
reduced dynamics non-Markovian and the observable is not prepared in,
or initially sampled from and averaged over,
a stationary distribution. We prove that such systems always exhibit
non-equilibrium, time asymmetric dynamics. That is, they evolve in time with a
broken time-translation invariance in a manner closely resembling aging dynamics. We
  identify the entropy associated with the breaking of time-translation
  symmetry that is a
  measure of the instantaneous thermodynamic displacement of latent, hidden
degrees of freedom from their stationary state. Dynamical time asymmetry is
 a general phenomenon, 
independent of the underlying energy surface, and is frequently even
visible in measurements on systems that have fully reached
equilibrium. This finding has
fundamental implications for the interpretation of many experiments
on, and simulations of, biological and materials systems.
\end{abstract}

\maketitle
\section*{Introduction}
Relaxation refers to the dynamics of approaching a stationary state
(e.g. thermodynamic equilibrium) and is a hallmark of non-equilibrium physics,
from condensed matter \cite{farhan,dattagupta_relaxation_2012,Kubo_1991} to single-molecule systems
\cite{Chen_2007} initially perturbed near
\cite{Onsager1,Onsager2,Kubo_1957,Kubo_1991,Ralf_close,Maes_2011,Baiesi_2013,polettini_nonconvexity_2013,Maes_2017,Maes_2019} or far
\cite{Kurchan,Kurchan2,lu_nonequilibrium_2017,klich_mpemba_2019,Saito_PRL_2019,Lapolla_PRL}
from equilibrium. 
{In extreme cases
the non-stationary behavior of a system extends over all
experimentally accessible time-scales -- a phenomenon often
referred to as  ``aging'' \cite{Bouchaud_92,Bouchaud,Mezard,Monthus,Dean_wow}. 
Aging is typically assumed to occur in systems whose energy landscapes
contain a large number (scaling exponentially
with the system size) of meta-stable states  \cite{Bouchaud_92,Bouchaud,Mezard,Monthus,Dean_wow,KMC}. It has been observed in polymeric \cite{Hodge,polymer}, spin \cite{spinG1,endof} and
colloidal glasses \cite{Weitz1,Weitz}, supercooled liquids
\cite{PRL,Wolynes,local,deBe} and recently in protein
internal dynamics \cite{Frauen,Brujic,Saleh,Jeremy}, where it may also
affect biological function \cite{Xue,Xie1,Xie2,Xie3}.

Typical manifestations of aging are a
complex, non-exponential relaxation spectrum and non-stationary
correlation and response functions
\cite{Hodge,polymer,RMP,PRL,Wolynes,local,spinG1,endof,Weitz1,Weitz,Frauen,Brujic,Saleh,Jeremy,Ralf} that depend strongly and systematically on the time elapsed since the system was
prepared \cite{Hodge,Kurchan2,Franz,Ritort,leDoussal} or, when derived from time-series measurements, on the duration of the
observation \cite{Ralf,Jeremy}. 
The temporal extent of apparent aging dynamics in experimental systems (e.g. spin glass materials),
although very long, may be finite
\cite{endof}. Throughout we will refer to aging
  systems with experimentally observable equilibration as
  ``transiently aging'' irrespective of the precise manner in which the relaxation time
  depends on the system size.

Theoretical studies on aging have focused mainly on
  non-stationary correlations and responses
\cite{Franz,Ritort,leDoussal,Ralf,PNAS,Dean_wow,Kurchan_fin,Kurchan_1994,Franz_PRX} as well as generalizations to
aging systems of the
fluctuation-dissipation relation
\cite{Kurchan,Kurchan2,Peliti,Peter,Eli1}. 
Aging dynamics has frequently
been associated with the existence of deep
traps with unbounded depth in the potential energy function
\cite{Bouchaud,Monthus}, fractal properties of the underlying free energy landscape
\cite{Frauen,Brujic,NComm}, the presence of disorder
\cite{leDoussal,Peliti}, and other effects \cite{Franz,Ritort,Ritort2,Eli2,KMC}.

Recent efforts in understanding relaxation dynamics that are not
limited to systems with unobservable stationary states focus on
diverse aspects of the thermodynamics of relaxation, e.g. the r\^ole
of initial conditions in the context of the so-called ``Mpemba effect'' (i.e. the phenomenon where
 a system can cool down faster when initiated at a higher temperature)
 \cite{lu_nonequilibrium_2017,klich_mpemba_2019}, asymmetries in the
 kinetics of relaxation from thermodynamically equidistant temperature
 quenches \cite{Lapolla_PRL}, a spectral duality between relaxation and
 first-passage processes \cite{David,Hartich_2019}, so-called
 ``frenetic'' concepts \cite{Maes_2017,Maes_2019}, and the
statistics of the 'house-keeping' heat \cite{Speck,Roldan} and
 entropy production \cite{Noh}. Important advances in understanding transients of
 relaxation also include information-theoretic bounds on the entropy production
during relaxation far from equilibrium \cite{Saito_PRL_2019} and the
so-called ``thermodynamic uncertainty relation'' for non-stationary
initial conditions that bounds transient currents by means of the total entropy production \cite{TUR}. 

Here, we look at non-stationary physical observables from a more general, ``first
principles'' perspective. 
By directly analyzing the mathematical structure of the
underlying multi-point probability density functions we reveal 
the universality of a broken time-translation invariance that we coin as
\emph{dynamical time asymmetry} (DTA).
We prove the established linear aging correlation
  functions to be ambiguous indicators of broken time-translation
  invariance.  DTA has many of the properties
commonly associated with aging but, unlike theoretical models of aging
\cite{KMC,Bouchaud,Bouchaud_92,Monthus,Mezard,Dean_wow,Bovier},  does not require any particular functional form of the dependence on
the aging time
nor that the relaxation time increases exponentially with
system size and is therefore experimentally unobservable. 
Moreover, we here show that specific properties, such
  as deep traps in the potential energy function \cite{Bouchaud,Monthus}, fractal properties of the underlying free energy landscape
\cite{Frauen,Brujic,NComm}, or the presence of disorder
\cite{leDoussal,Peliti} that are often required for aging to occur,
are not required for DTA dynamics, although they can
amplify the breaking of time-translation invariance. In fact, DTA typically
implies (transient) aging but the converse is not true.
Instead, we prove DTA to emerge whenever (i) a
physical observable corresponds to a
lower-dimensional projection in configuration space that renders the
reduced dynamics non-Markovian, and (ii) the projected physical observable is not
prepared in, or initially sampled from and averaged over, a stationary
distribution i.e., a distribution that does not change in
time.

Most measurements on condensed matter correspond to projections of type (i), examples being structure factors in
scattering experiments \cite{NComm,Wolynes,local,Weitz1,Weitz}, spectroscopic response
functions (e.g. magnetization \cite{spinG1,endof,Peliti} and dielectric responses
\cite{PRL,PNAS,polymer}), the rheology of soft materials \cite{rheo,rheo2},
diverse empirical order parameters \cite{Franz,Ritort,Ralf} and
measurements of mechanical responses \cite{Hodge}. These projections
also inevitably arise in
single-particle tracking \cite{local,NComm,Ralf} and measurements of
various reaction coordinates in all single-molecule experiments
(e.g. internal distances) and
simulations (e.g. projections onto dominant principal modes in
Principal Component Analysis)
\cite{Brujic,Xie1,Xie2,Xie3,Jeremy,Woodside1,Woodside2,Saleh,Woddside_rough,dynamical_disorder}.

In these
measurements (i) applies as soon as the latent degrees of freedom
(DOF) (those being
effectively integrated out) evolve on a time-scale similar to the
monitored observable. In contrast, (i) does not apply when the latent
DOF relax much faster than the observable, for example when
neglecting inertia and integrating out solvent degrees of freedom of a colloidal particle in a low Reynolds number
environment. Condition (ii) applies whenever the observable evolves
from a non-stationary initial condition. This includes all experiments
involving an instantaneous perturbation of the observable in
equilibrium (e.g. magnetization or dielectric, rheological and mechanical response),
and all experiments involving evolution from a quench, such as in temperature,
pressure, or volume (which inter alia includes scattering experiments on
supercooled liquids). Condition (ii) also holds
in situations where the observable is
neither perturbed nor quenched but is initially under-sampled
from equilibrium, that is, when it is sampled from equilibrium with a
limited number of repetitions (say $1-10^3$) such as 
in single-molecule FRET, AFM or
optical tweezers experiment, as well as particle-based computer
simulations. This yields a distribution that does not converge to the invariant measure.
  In fact, as regards DTA we prove quenching and
the under-sampling of equilibrium to be
qualitatively equivalent.             
Whenever both conditions (i) and (ii) are
fulfilled, DTA emerges irrespective of the details of the dynamics.

In the main text and in the examples we focus on systems whose
dynamics obey detailed balance and, as a whole, are initially prepared at
equilibrium. The monitored lower-dimensional  observable is assumed to
evolve from some non-equilibrium initial distribution (i.e. not the
marginalized equilibrium distribution \footnote{From a
thermodynamic point of view such
systems are characterized by a transiently positive entropy production
that vanishes upon reaching equilibrium}). Generalizations to
a non-equilibrium preparation of the full system (e.g. by a
temperature quench) are discussed in detail the Appendix.

\section*{Theory}
\addtocounter{section}{1}
We consider a mechanical system at least
weakly coupled to a thermal reservoir, such that the full system's
dynamics (i.e. all
degrees of freedom; Fig.~\ref{Fg1}a, red trajectory) obeys a
time-homogeneous Markovian
stochastic equation of motion \cite{Freidlin} (for details see
Appendix), which generates ergodic dynamics in phase space.
That is, starting from any initial condition
the system is assumed to evolve to a
unique stationary distribution
in a finite, but potentially extremely long,
time that may or may not be reached during an observation. This
assumption is true for a vast majority of soft matter and biological systems
of interest and also includes glassy materials.
To impose only the mildest of
assumptions we consider that the full system is prepared in an equilibrium state at $t=0$, i.e. the full
system was created at a time $t=-\infty$ and 
the initiation of an experiment or phenomenon imposes
a time origin at $t=0$, whereas the actual observation starts after some
time $t_a\ge 0$ (see
Fig.~\ref{Fg1}b), where $t_a$ is the so-called aging (or waiting) time and the measurement time-window is the time delay
$\tau=t-t_a$.  The more restrictive assumption of a non-stationary preparation (e.g. a
temperature quench \cite{Kurchan,Lapolla_PRL}) is treated in the
Appendix \ref{2B}. In practice, a stationary preparation means that
at $t=0$ the full system's configuration is distributed according to
a stationary, invariant probability measure. This refers either to the
initial statistical ensemble of configurations in a bulk system or to
the repeated sampling of individual initial configurations (say in a
single molecule experiment), which are
drawn randomly from the invariant probability measure.  
We assume that only the projected observable is being monitored at all
times $t\ge 0$. The assumptions stated above suffice to prove our
claims (for details see Appendix). 
\begin{SCfigure*}[\sidecaptionrelwidth][t]
\centering
\includegraphics[width=11.cm]{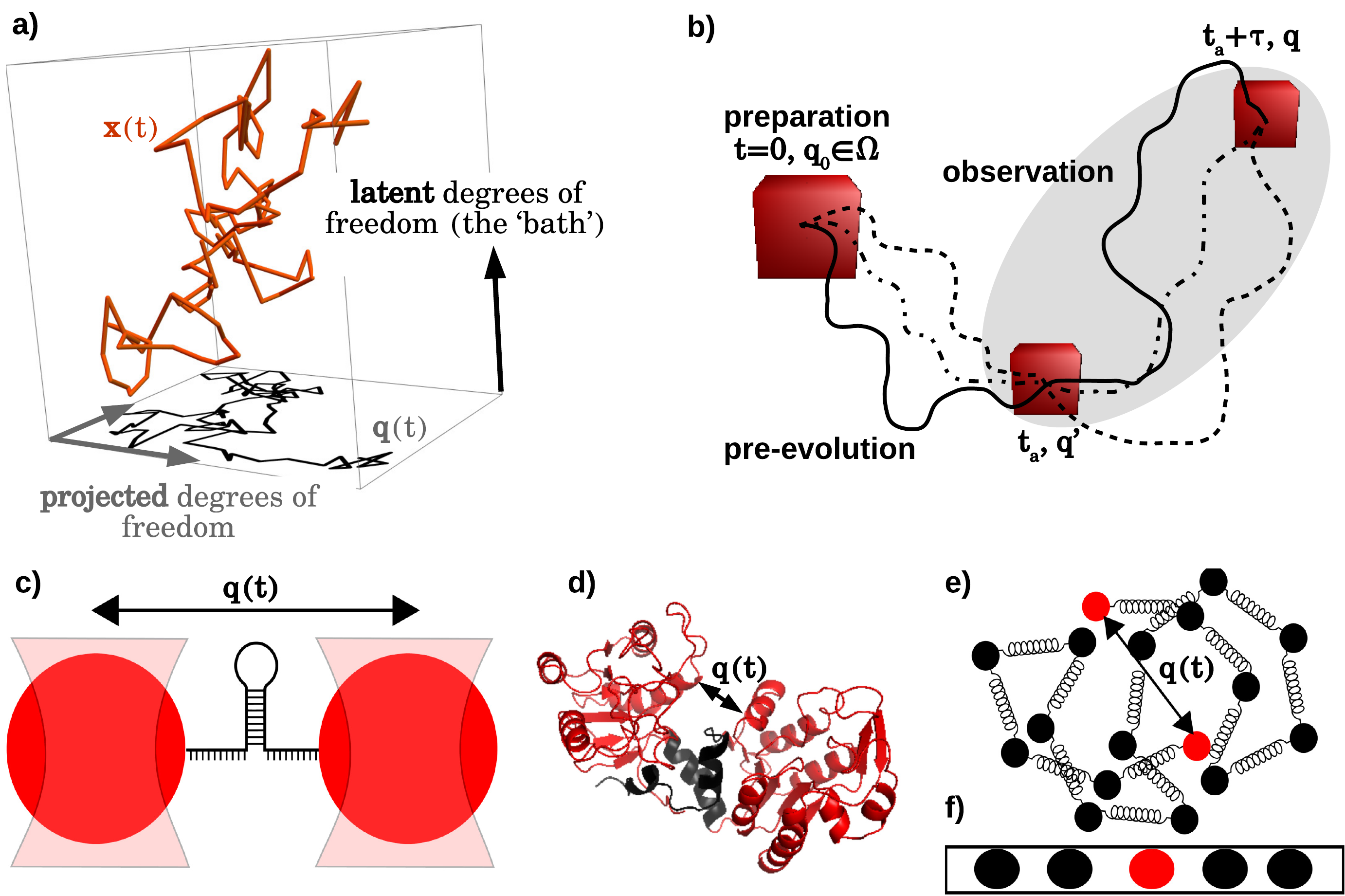}
\caption{\textbf{Schematics of projected observables, multi-point propagation and
    model systems.} a) A physical observable corresponding to a
  simple lower-dimensional projection (shadow trajectory) of the full
  system's trajectory (red line), defining projected and latent (hidden)
  degrees of freedom. b) Trajectories of length $t$ evolving from
  preparation, through an aging (or pre-evolution) period of length
  $t_a$, followed by the observation of duration $\tau=t-t_a$. c)
  Optical tweezers experimental set-up  probing DNA-hairpin
  dynamics; d) Structure of the yeast PGK protein with the reduced
  coordinate represented by the arrow. 
  e) Rouse model of a polymer chain, comprising Hookean
  springs with a zero rest-length immersed in a heat bath. The reduced
  coordinate corresponds to the end-to-end distance. f)
 Single file model with the tracer particle depicted
 in red.}
\label{Fg1}
\end{SCfigure*}

For simplicity we use $\langle \cdot\rangle$ interchangeably to
denote the average over an ensemble of trajectories at a given time
and over time along a given trajectory,
respectively, keeping in mind that they are identical only when the
trajectory is much longer than the longest relaxation time $t_{\mathrm{rel}}$. The state of the
observable is denoted by $q(t)\in\Xi$ (Fig.~\ref{Fg1}a, black
trajectory), which we assume, 
without loss of generality, to be one dimensional (for the
general case see the Appendix). Theoretically, each repetition of the experiment/process 
leads to an initial condition for $q(t)$ drawn randomly from the
reduced stationary probability
density $p_{\mathrm{inv}}(q_0)$. In practice, however, this is not
necessarily the case.
For example, supercooled liquids \cite{PRL,Wolynes,local} as well as polymeric \cite{Hodge,polymer}, spin \cite{spinG1,endof}, and
colloidal \cite{Weitz1,Weitz} glasses are
prepared by a quench in an external parameter (typically temperature)
\cite{Hodge,polymer,spinG1,endof,Weitz1,Weitz}, such that the
observable $q(t)$ nominally attains a non-stationary initial condition.    
A process may also start with the observable internally constrained to
a subdomain of $p_{\mathrm{inv}}(q_0)$, e.g. a chaperone stabilizing a
particular configuration of a folded protein, with the biological process starting
upon unbinding of the chaperone \cite{chaperone}. In another example
single-molecule enzyme experiments may monitor the statistics of substrate
turnover, where $q(t)$ reflects the geometry of the catalytic
site of an enzyme that is reactive only for a specific sub-ensemble of
configurations \cite{Xie1,Xie2,Xie3}. Binding of a substrate
molecule enforces an initial constraint on $q(t)$ thereby imposing
non-stationary initial conditions on the chemical reaction. Alternatively, we may simply choose to initialize the
experiment (i.e. reset our clock) a posteriori, such that $q(0)$ has a
preset value, or we are dealing with a single, or a
limited number of time-series  \cite{Jeremy} which do not
sample $p_{\mathrm{inv}}(q_0)$ sufficiently. In all these cases the observable
is effectively not prepared in a stationary state, i.e. $p_{0}(q_0)\ne p_{\mathrm{inv}}(q_0)$. 

The dynamics in aging systems is conventionally analyzed via
the normalized two-time correlation function \cite{spinG1,Franz,Ritort,leDoussal,Ralf,Kurchan2}
\begin{equation}
C_{t_a}(\tau)=\frac{\langle q(\tau+t_a)q(t_a)\rangle-\langle q(\tau+t_a)\rangle\langle q(t_a)\rangle}{\langle q(t_a)^2\rangle-\langle q(t_a)\rangle^2}.
  \label{corr_m}
\end{equation}  
A system is often said to be aging if $C_{t_a}(\tau)$ strongly depends
on $t_a$ in the sense that the relaxation of a system
takes place on time-scales that grow with the age of the system $t_a$, and continue to do so beyond the
largest times accessible within an experiment or simulation
\cite{Bouchaud,Monthus,Dean_wow,Bovier,KMC}. 

However,
the analysis and interpretation of time-series of physical observables
that show DTA require a fundamentally
different approach irrespective of whether equilibrium is attainable in an
experiment or not. We prove below that $C_{t_a}(\tau)$ cannot conclusively indicate
whether time-translation invariance is broken
(see Appendix \ref{Oh}, Lemma \ref{nonconc}); in particular it
  cannot disentangle broken  time-translation invariance from
  ``trivial'' correlations with a non-stationary initial
  condition (i.e. from ``weak'' or ``second order'' non-stationarity \cite{Pavliotis}).
  This is particularly problematic if one uses \eqqref{corr_m}
  as a ``definition of DTA'' 
  to infer whether
a complex experimental system, such as an individual protein molecule
\cite{Jeremy,Saleh}, evolves with broken time-translation invariance or not. \eqqref{corr_m} is nevertheless reasonable,
albeit sub-optimal,
for quantifying DTA in 
materials that are known to posses a broken time-translation invariance.

Our aim is to conclusively and unambiguously infer whether 
relaxation evolves with a broken
  time-translation invariance that is encoded in
  $G(q,t_a+\tau|q',t_a,q_0\!\in\!\Omega_0)$, the probability density for
the observable to be found in an infinitesimal volume
element centered at $q$ at time $\tau+t_a$ given that it was at $q'$
at time $t_a$ and started at $t=0$ somewhere
in a subdomain
$q_0\in\Omega_0\subset \Xi$ (Fig.~\ref{Fg1}b) with probability
$p_{0}(q_0)$. $\Omega_0$ is strictly non-empty and may be a point, an
interval or a union of intervals.

The dynamics of an observable $q(t)$ is generally said to be
time-translation invariant (mathematically referred to as ``strictly stationary''
\cite{Reichl,Pavliotis} or ``well-aged'' \cite{Keizer}) if the
underlying (effective) equations of motion that govern the evolution of $q(t)$ do \emph{not}
explicitly depend on time. That is, the probability of a path 
$\{q(t)\}$ for $t\in[t_a,t_a+\tau]$ does not depend on $t_a$.
This is the case, e.g. in Newtonian dynamics or Langevin 
  dynamics driven by Gaussian white noise \cite{Pavliotis} as well as generalized Langevin 
  dynamics driven by \emph{stationary} Gaussian colored noise
  \cite{Haenggi,Haenggi_2007,Fox}.
 Here $q(t)$ is said to be
time-translation invariant
  if and only if (see also Definition \ref{Def1} in Appendix \ref{Oh})
  \begin{equation}
    G(q,t_a+\tau|q',t_a,q_0\!\in\!\Omega_0)=G(q,t'+\tau|q',t',q_0\!\in\!\Omega_0),
\label{ttinv}    
  \end{equation}
  holds for any $\tau$ and $t'$ \footnote{Generally speaking ``strict
  stationarity'' and Definition \ref{Def1} in Appendix \ref{Oh} are
  equivalent only if $q(t)$ is a Gaussian process. If this is not the
  case Definition \ref{Def1} imposes a milder condition on time-translation symmetry.}.
Conversely, if time-translation invariance is broken we say that the
  system is 
  dynamically time asymmetric. That is, time-translation invariance is broken
 if and only if the two-point conditioned Green’s function
 $G(q,t_a+\tau|q',t_a,q_0\!\in\!\Omega_0)$ 
 depends on $t_a$ (see also Definition \ref{Def2} in Appendix \ref{Oh}).  The
 two-point conditioned Green’s function
is
 defined as
 \begin{equation}
G(q,t_a+\tau|q',t_a,q_0\!\in\!\Omega_0)\equiv\frac{P(q,t_a+\tau,q',t_a,q_0\!\in\!\Omega_0)}{P(q',t_a,q_0\!\in\!\Omega_0)},
   \label{2point}
 \end{equation}
 where $P(q,t_a+\tau,q',t_a,q_0\!\in\!\Omega_0)$ denotes the joint density
 of $q(t)$ to be found initially within $\Omega_0$ and to pass $q'$ at
 time $t_a$ and to end up in $q$ at time $t_a+\tau$, and
 $P(q',t_a,q_0\!\in\!\Omega_0)$ the  joint density
 of $q(t)$ to be found initially within $\Omega_0$ and to pass $q'$ at
 time $t_a$.

Note that there seems to be some relation between DTA
  and aging. A system is typically said to be aging
if $C_{t_a}(\tau)$ in Eq.~(\ref{corr_m})
   depends on $t_a$ (i.e. that $q(t)$ is weakly non-stationary) but in a specific
   manner, e.g. the so-called ``slow'', non-stationary component of
   $C_{t_a}(\tau)$ must scale for all large $t_a$ as some power of $\tau/t_a$
   \cite{Bouchaud,Monthus,Dean_wow} (for a rigorous discussion see
   \cite{Bovier}). However, this does not require that
   time-translation invariance (i.e. Eq.~(\ref{ttinv})) is
   broken \cite{Bouchaud,Monthus,Dean_wow}. So-called kinetically constrained
   models \cite{KMC} and the spherical p-spin model
   \cite{Peliti,Franz_PRX,Arous}, for example, have
   correlation functions \eqqref{corr_m} that show aging, but, when
   fully observed and \emph{not} averaged over disorder (and only
   then), satisfy
   Eq.~(\ref{ttinv}.
Clearly, if time-translation invariance is broken (see also
   Definition \ref{Def1} in the Appendix \ref{Oh}) then $C_{t_a}(\tau)$
   automatically depends on $t_a$. 
If the dynamics is furthermore such that $C_{t_a}(\tau)$
   depends on $t_a$ as some power of $\tau/t_a$ (see Eq.~\eqref{powerlaw}
   below as well as Eqs.~\eqref{full_a} and \eqref{full_c} as well as
   \cite{Ralf,Eli_aging,Eli_JCP}) and, in addition,
   equilibrium
   cannot be attained during an observation then DTA also implies aging dynamics. However, the converse
   is not true.

To connect  the aging correlation function in \eqqref{corr_m} with \eqqref{ttinv} we note
   that the numerator in
 \eqqref{corr_m} involves averages
\begin{eqnarray}
&& \langle q(t)\rangle\equiv\int_\Xi q G(q,t|q_0\!\in\!\Omega_0)dq\\ 
&& \langle q(\tau+t_a)q(t_a)\rangle\equiv\!\!\int_\Xi\int_\Xi qq'
 G(q,\tau+t_a,q',t_a|q_0\!\in\!\Omega_0)dqdq'\nonumber
\end{eqnarray}
 where the conditional 
 density of the projected observable $G(q,t|q_0\!\in\!\Omega_0)$ is
 discussed in \cite{Lapolla_f,Lapolla_PRL} and in Appendix \ref{2B} (see Eq.~\eqref{projected}).
The three-point conditional probability density
$G(q,\tau+t_a,q',t_a|q_0\!\in\!\Omega_0)$ -- the probability density for
the observable to pass through an infinitesimal volume
element centered at $q'$
at time $t_a$ and end up in $q$ at time $\tau+t_a$ having started at $t=0$ 
in a subdomain
$q_0\in\Omega_0\subset \Xi$ with probability $p_{0}(q_0)$, 
 is defined as (for details see Appendix \ref{2B}, Eq.~\eqref{Greens3})
 \begin{equation}
G(q,t_a+\tau, q',t_a|q_0\!\in\!\Xi_0)\equiv\frac{P(q,t_a+\tau,q',t_a,q_0\!\in\!\Omega_0)}{P(q_0\!\in\!\Omega_0)}.
\label{3p}
 \end{equation}
Based on the mathematical properties of
$G(q,t_a+\tau|q',t_a,q_0\!\in\!\Omega_0)$ and $G(q,t_a+\tau,
q',t_a|q_0\!\in\!\Xi_0)$ we prove in the Appendix \ref{Oh} (see Theorem \ref{invariant}, Corollary \ref{binvariant} and, Lemma \ref{nonconc})
that $C_{t_a}(\tau)$ in \eqqref{corr_m} can show a $t_a$-dependence
even if \eqqref{ttinv} is satisfied, i.e. when the system is
time-translation invariant. 
That is, if the system is dynamically time asymmetric then
$C_{t_a}(\tau)$ depends on $t_a$, whereas the converse is not
necessarily true. In turn this implies that one cannot determine on
the basis of $C_{t_a}(\tau)$ derived from a time-series $q(t)$ whether 
time-translation invariance is broken,
and a definitive and unambiguous indicator
must be sought for.

We demonstrate this using the
cleanest and most elementary example of a time-translation invariant
system -- a Brownian
particle confined to a box of unit length (i.e. $L=1$) evolving from a a point $\Omega_0=x_0$
and from a uniform
distribution within an interval $\Omega_0=[a,b]$ for some $0<a<b<1$. For this
example the denominator in \eqqref{2point} is defined as
$P(q',t,q_0\!\in\!\Omega_0)\equiv\int_a^bQ(q',t|q_0)dq_0$ and the
numerator as $P(q,t_a+\tau,q',t_a,q_0\!\in\!\Omega_0)\equiv
Q(q,\tau+t_a|q')P(q',t_a,q_0\!\in\!\Omega_0)$, where $Q(x,t|x_0)$ denotes
the propagator of the confined Brownian particle. Plugging into
\eqqref{2point} confirms the validity of \eqqref{ttinv} and hence
time-translation invariance. Nevertheless, the very same system
exhibits a $t_a$-dependence of the aging autocorrelation function
defined in
\eqqref{corr_m} over more than two orders of magnitude in time
measured in units of the relaxation time $t_{\rm rel}=L^2/D\pi^2$ as
depicted explicitly in Fig.~\ref{boxed}. Note that by allowing
  the box to become macroscopic in size (i.e. $L\to\infty$) the
  relaxation time and thereby the extent
of the $t_a$-dependence can become arbitrarily large when expressed in
absolute units.
\begin{figure}
\centering
\includegraphics[width=1.\linewidth]{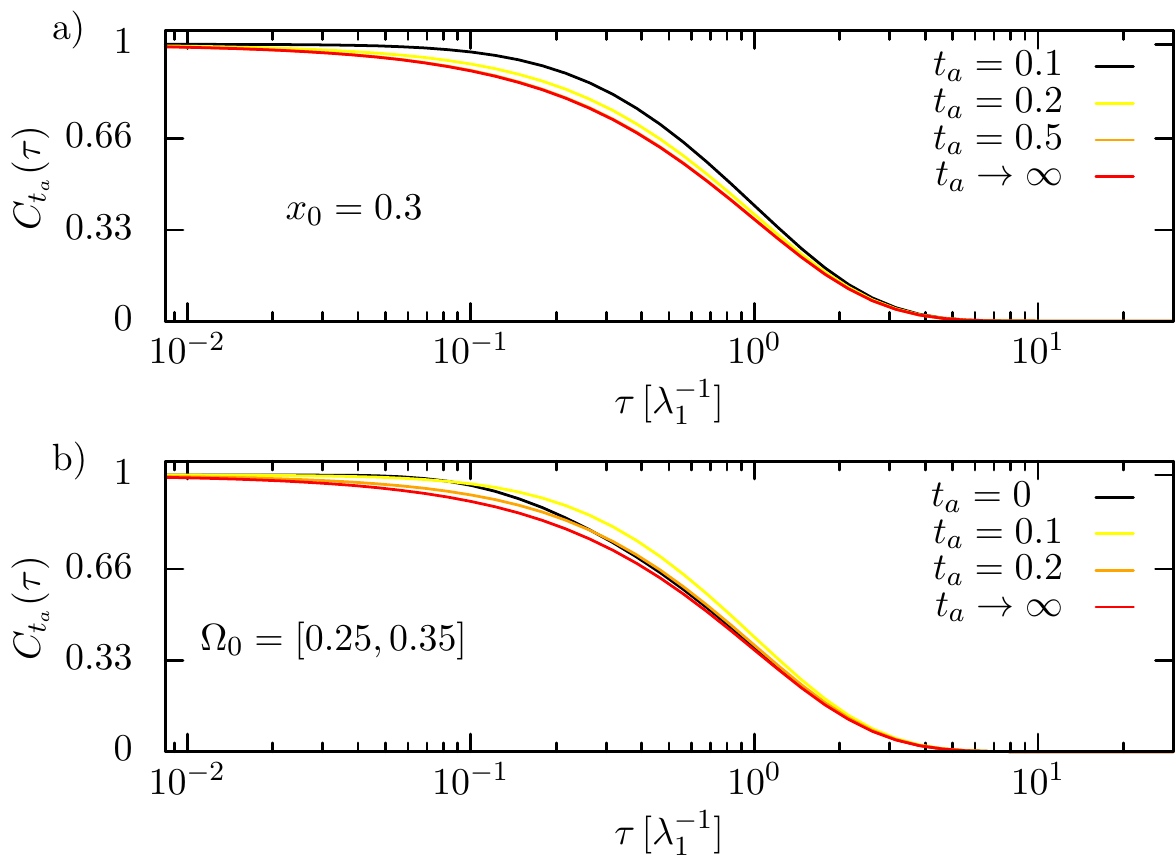}
\caption{\textbf{Aging correlation functions display fictitious
    dynamical time asymmetry  in
    time-translation invariant systems.} Analytical results for the aging correlation
  function $C_{t_a}(\tau)$ defined in \eqqref{corr_m}
  for a Brownian particle confined to a unit box evolving from a) the
  point $\Omega_0=0.3$ and b) the interval
  $\Omega_0=[0.25,0.35]$ for several values of the aging time
  $t_a$. Time $\tau$ is expressed in units of the relaxation time
  $\lambda_1^{-1}$.}
\label{boxed}
\end{figure}

A general mathematical analysis (see Appendix \ref{2B}) therefore necessarily ties
DTA to the
three-point (non-Markovian) conditional probability density,
$G(q,\tau+t_a,q',t_a|q_0\in\Omega_0)$. 
If the projected dynamics is Markovian it is in turn fully described by two-point conditional densities 
$G_{\mathrm{Markov}}(q,\tau+t_a,q',t_a|q_0\in\Omega_0)=G(q,\tau|q',0)G(q',t_a|q_0\in\Omega_0)$. If, on the
other hand, the reduced dynamics is non-Markovian but the initial
condition is sampled from the full (invariant) stationary density
$p_0(q_0)\to p_{\mathrm{inv}}(q_0)$ (or equivalently, $\Omega_0=\Xi$), we have $G(q,\tau+t_a,q',t_a|q_0\in\Xi)=G(q,\tau|q',0)p_{\mathrm{inv}}(q')$.   
In both cases there is no DTA (see Appendix~\ref{Oh}). To
quantify broken time-translation invariance on the level of reduced phase
space probability densities we therefore define the \emph{time asymmetry index} as
\begin{align}
\AI_{\Omega_0}(t_a,\tau)\equiv \int_{\Omega}dq&\int_\Omega dq'
\bigg[ G(q,\tau+t_a,q',t_a|q_0\!\in\!\Omega_0)\times\notag\\
&  \ln\frac{G(q,\tau+t_a,q',t_a|q_0\!\in\!\Omega_0)}{G(q,\tau|q')G(q',t_a|q_0\!\in\!\Omega_0)}\bigg]
\label{AI}
\end{align}
where for notational convenience we
henceforth drop the explicit dependence on $\Omega_0$,
i.e. $\AI_{\Omega_0}(t_a,\tau)\equiv\AI(t_a,\tau)$.
The time asymmetry index measures the relative entropy between
  the actual evolution of the observable 
and a corresponding ``fictitious'' dynamics that has the same probability
 density of the intermediate point $q$ at time $t_a$ but where at time
 $t_a$ the latent degrees of freedom
 are instantaneously quenched to equilibrium. Broken
   time translation invariance reflects that the effective equations
   of motion that govern the evolution of $q(t)$ change in time as a
   result of the relaxation of the hidden DOF the observable is
   coupled to. That is, if one were e.g. to derive an effective generalized Langevin
   equation for $q(t)$ the latter would contain a memory kernel and
   noise that depend explicitly on the time elapsed since the
   preparation of the system (see e.g. \cite{Robertson}).

 A broken time-translation invariance is evidently a clear signature
of non-equilibrium dynamics and therefore intimately related to
entropy production.  $\AI$ may thus also be given
 a thermodynamic interpretation as an \emph{entropy associated with
 the breaking of time-translation invariance} in analogy to the ``instantaneous
 excess free energy'' -- the relative entropy between
 $G(q,t|q_0\in\Omega_0)$ and $p_{\mathrm{inv}}(q)$
 \cite{Lebowitz,Mackey,Qian,Lapolla_PRL}. 
Therefore it appears that the entropy of breaking
time-translation invariance measures the instantaneous thermodynamic displacement of latent
degrees of freedom at time $t_a$ from their stationary state. Note
that $\AI(t_a,\tau)>0$ also implies a violation of the
fluctuation-dissipation theorem for non-Markovian system because it
implies that the ``bath'' is non-stationary \cite{GFDT}. In 
   general $\AI$ is experimentally measurable simply by monitoring the
   time-series of the observable $q(t)$ (for details see Appendix~\ref{experimental}).

The relative entropy is a pseudo-metric and therefore the absolute
value of the time asymmetry index (other than $\AI(t_a,\tau)=0$ implying
time-translation invariance and $\AI(t_a,\tau)>0$ its violation) does not necessarily immediately
allow for a quantitative comparison of DTA in different systems with disparate
dimensionality. It is always meaningful when one considers a
comparison of the  same system and observable under different conditions
(e.g. initial conditions, values of control parameters etc.). If one
aims at comparing quantitatively DTA in different systems and/or observables one should instead
consider a symmetrized version of the relative entropy (see e.g. \cite{Nielsen}).

The time asymmetry index is constructed to \emph{detect and quantify conclusively} broken
time-translation invariance according to \eqqref{ttinv}. It effectively
measures
the instantaneous relaxation of the latent degrees of freedom and is
unaffected by spurious
non-stationarity due to correlations between the value of the
observable at time $t_a+\tau$ and the particular ``initial'' value at
time $t_a$. These
correlations are spurious because they exist for any $t_a$ and relax
as a function of $\tau$ irrespective of whether a system is
time-translation invariant or not.

By construction
$\AI(t_a,\tau)\ge 0$ and is identically zero for any $t_a$ and
$\tau$ if and only if $q(t)$ is time-translation invariant. In
turn, the observable $q(t)$ is time-translation invariant if and only if it
is Markovian and/or $q(t=0)$ is sampled
from a distribution converging in law to the invariant measure (the proof is presented in
the Appendix~\ref{Oh}, Theorem \ref{ttinv} and Corollary \ref{binvariant}).
As a result $\AI(t_a,\tau)$ is identically zero for all $\tau$
and $t_a$ for the
time-translation invariant dynamics of a confined Brownian
particle evolving from a non-equilibrium initial condition (see,
however, the fictitious DTA due to weak non-stationarity that is implied by the aging autocorrelation
function in  Fig.~\ref{boxed}).
Moreover, the extent of DTA is limited by the relaxation time
$t_{\mathrm{rel}}$ such that $\AI(t_a,\tau)\to 0$ whenever
$t_a\gg t_{\mathrm{rel}}$ or $\tau\gg t_{\mathrm{rel}}$. Obviously, if the full
system is initially quenched into any non-stationary initial condition
(see e.g. \cite{Lapolla_PRL}),
then $\AI(t_a,\tau)> 0$ as long as the projection renders the
reduced dynamics non-Markovian. 
Therefore, as soon as $\AI(t_a,\tau)\ne 0$ for some values $t_a$ and
$\tau$ smaller than $t_{\mathrm{rel}}$, the dynamics is time asymmetric,
in specific cases with a self-similar scaling (see Appendix~\ref{Oh}, Propositions \ref{similar}
\& \ref{full}). In addition
the following generic structure emerges:
\begin{equation}
C_{t_a}(\tau)=(1-\varphi)g_1(\tau)+\varphi g_2(\tau,t_a),
\end{equation}
with
$0<\varphi<1$ and $g_{1,2}$ depending on the
details of the dynamics (see Appendix~\ref{Oh}, Theorem \ref{repr}) in agreement with
the properties of aging systems \cite{Franz,Ritort,leDoussal,Ralf,PNAS,Kurchan2,spinG1,Peter}. These
results are universal -- they are independent of
details of the dynamics, and, in particular, the underlying energy
landscape.

Microscopically reversible dynamics in general allows
  for a spectral expansion of propagators and thus correlation and response
  functions (see e.g. Appendix~\ref{lower}). Moreover, in specific cases the
  projection renders the observed dynamics self-similar with parameter
  $\alpha$, that is, a change of
  time-scale merely effects an $\alpha$-dependent renormalizion of the spectrum (for
  details see Definition \ref{selfsim} in the Appendix~\ref{2B}). This arises,
    for example, when the observable corresponds to an internal
    distance within a single polymer molecule \cite{Rouse_self_sim}
    (studied here in
    Figs.~\ref{Fg2}a and ~\ref{Fg3}) or within individual protein
    molecules \cite{Granek,proteins_self_sim}, as well as in diffusion on
    fractal objects \cite{fractals}.
The aging correlation function in
  \eqqref{corr_m} then displays a power-law scaling for $\alpha>0$
  (as in Fig.~\ref{Fg2}d and Eq.~\eqref{powr} in the Appendix~\ref{Oh}) or, when $\alpha=0$
  a logarithmic behavior (as observed in \cite{Saleh};
  see also Eq.~\eqref{logr} 
  in the Appendix~\ref{Oh}). The latter is
    mathematically equivalent to the logarithmic relaxation found in
  \cite{Amir}.
  For more details see
  Propositions \ref{similar} and \ref{full} in the Appendix~\ref{Oh},
  respectively. In particular for $\tau/t_a\gg 1$,  in the
  glassy literature referred to as the ``full aging''
  \cite{PNAS,Bouchaud_92,Rodriguez_2003,Amir} regime, we find (see Appendix~\ref{Oh}, 
  Eqs.~\eqref{full_a} and \eqref{full_c})
  \begin{equation}
    C_{t_a}(\tau)\simeq A + 
    \begin{cases}
      B_\alpha\left(\frac{t_a}{\tau}\right)^\alpha &, \alpha>0,\\
      B_\alpha\left(\frac{t_a}{\tau}\right) &, \alpha=0.
\end{cases}
    \label{powerlaw}
\end{equation}
with constants $A$ and $B_\alpha$ that depend on the details of the
dynamics. 
On a transient time-scale the asymptotic results in \eqqref{powerlaw} agree with
predictions of minimalistic ``trap'' models
\cite{Bouchaud,Monthus,Dean_wow} as well as fractional dynamics and random walks with diverging waiting times
\cite{Ralf,Johannes,Eli2} (for more details see also
  Remark~\ref{exmpl} in the Appendix~\ref{Oh}).
Fractional dynamics and random walks with long waiting times (that
as well display DTA \cite{Eli_aging,Eli_JCP,Johannes})
were in fact explicitly shown to arise as
transients in projected dynamics when the latent degrees of freedom
are orthogonal to $q(t)$ \cite{Lapolla_f} and in the spatial
coarse-graining of continuous dynamics on networks
\cite{David_networks}. The phenomenology of systems displaying an
algebraic scaling of $C_{t_a}(\tau)$  as in
\eqqref{powerlaw} is therefore by no means unique, and represents
only a specific class of dynamical systems with a broken
time-translation invariance. Dynamical time asymmetry is much more general.

\section*{Examples}
It is not difficult to verify the above
claims in practice as all corresponding quantities can readily be obtained from experimental or simulation-derived
time-series. To that end we analyze DTA in four very different systems (see
Fig.~\ref{Fg1}c-e): DNA hairpin dynamics measured by dual optical tweezers
experiments, where $q(t)$ reflects the end-to-end distance
(Fig.~\ref{Fg1}c and Appendix~\ref{DNA}) \cite{Woodside1,Woodside2},
extensive MD simulations of internal motions of
 yeast PGK, where $q(t)$ corresponds to the inter-domain
 distance  (Fig.~\ref{Fg1}d and Appendix~\ref{PGK})
 \cite{Jeremy}, as well as two theoretical examples:
 the end-to-end distance fluctuations of a Rouse polymer chain
 \cite{Fixman} (Fig.~\ref{Fg1}e  and Appendix~\ref{Rouse}) and
 tracer particle dynamics in a single file of impenetrable
 diffusing particles, where $q(t)$
 reflects the position of the tracer particle
 \cite{Lapolla_f,Lapolla_CPC,Lapolla_PRL} (Fig.~\ref{Fg1}f  and
 Appendix~\ref{sfile}). The underlying energy
 landscapes of these four systems are fundamentally very different; the DNA-hairpin
 exhibits two well-defined metastable conformational states/ensembles
 \cite{Woodside1,Woodside2}, the yeast PGK has a very rugged and
 apparently fractal energy landscape \cite{Jeremy}, that of the Rouse polymer is perfectly smooth and exactly parabolic,
 and that of the single file is flat with
 the tracer motion confined to a hyper-cone as a result of the
 non-crossing condition between particles. Yet, despite these striking
 differences, all systems display
 the same qualitative time asymmetric behavior, consistent with the
 proven universality of DTA. 

 \begin{figure*}[ht!]
\centering
\includegraphics[width=12.6cm]{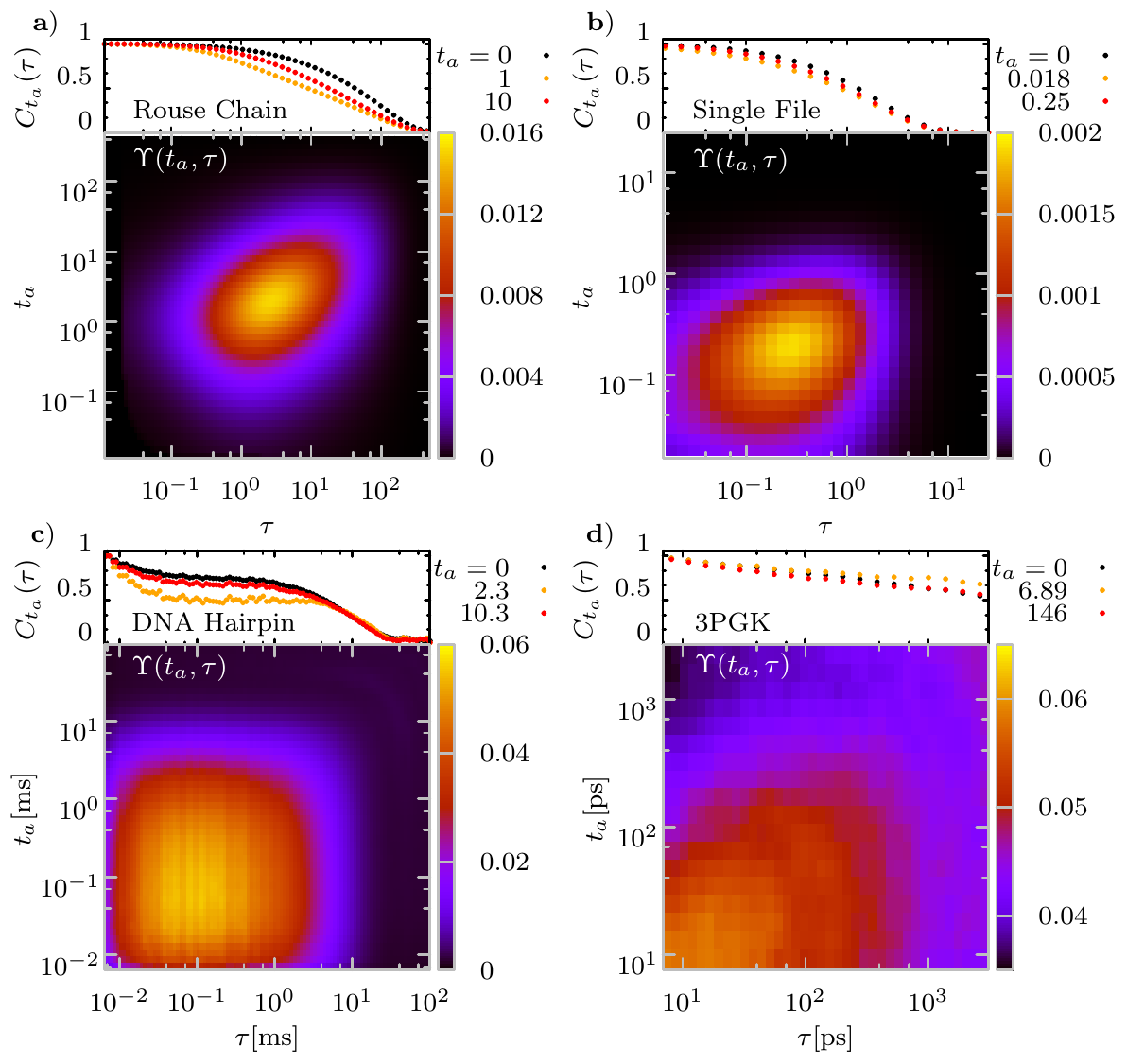}
\caption{\textbf{Aging two-point correlation function and time asymmetry
    index.} $C_{t_a}(\tau)$ for different
  values of aging time $t_a$ and corresponding $\AI(t_a,\tau)$
  for: a) the Rouse polymer chain with 50 beads with
  initial end-to-end distance in dimensionless units equal to
  $q_0=9.85$, which corresponds to the most likely end-to-end distance (the
  dimensionless relaxation time here corresponds to $t_{\mathrm{rel}}\simeq 253.38$); b) tracer-particle
  dynamics in a single file with $N=5$ confined to a
  box of unit length, tagging the
  central particle particle with initial condition $q_0=0.5$ (the
  relaxation time measured in natural units of the ``collision time'' of is  $t_{\mathrm{rel}}\simeq 2.5$), c)
  the DNA-hairpin extension determined from a trajectory of length of
  $2.75\cdot10^4$ ms sampled at $400
  \mathrm{kHz}$. The initial condition was taken at the absolute maximum of equilibrium probability density
  $q_0=2.0\pm 1$ nm, and $q$ refers to deviations from the mean
  distance $\langle d\rangle$, i.e. $q(t)=d(t)-\langle d\rangle$ (the relaxation time is  $t_{\mathrm{rel}}\approx
  15$ ms); The statistical error in
  determining $\AI(t_a,\tau)$ from the hairpin data is less than 1\%\ (see Fig.~\ref{err_an}
  in the Appendix~\ref{DNA});  d) inter-domain motion between the centers of mass of
  the N-terminal (residues 1-185) and C-terminal domains (residues 200-389) in yeast
  PGK determined from a $200$ ns atomistic MD simulation sampled
  every $150$ ps. The initial condition was $q_0=0.01\pm0.2$ nm
  relative to the average inter-domain distance $\langle d\rangle$, i.e. $q(t)=d(t)-\langle d\rangle$. c) was
  obtained from experimental data of Refs. \cite{Woodside1,Woodside2} and d) was determined from
  molecular dynamics simulations in Ref. \cite{Jeremy}. Further
  details can be found in Appendix~\ref{systems}. ``Transient aging'' in $C_{t_a}(\tau)$ arises
  whenever there is a region $(t_a,\tau)$ where
  $\AI (t_a,\tau)>0$. In the case of PGK (panel d)
  $t_{\mathrm{rel}}$ is not reached within the simulation time, which
  renders the system virtually eternally time asymmetric and ``forever aging''
  \cite{Jeremy,Ralf_forever}.}
\label{Fg2}
\end{figure*}
\begin{figure*}
\centering
\includegraphics[width=14.cm]{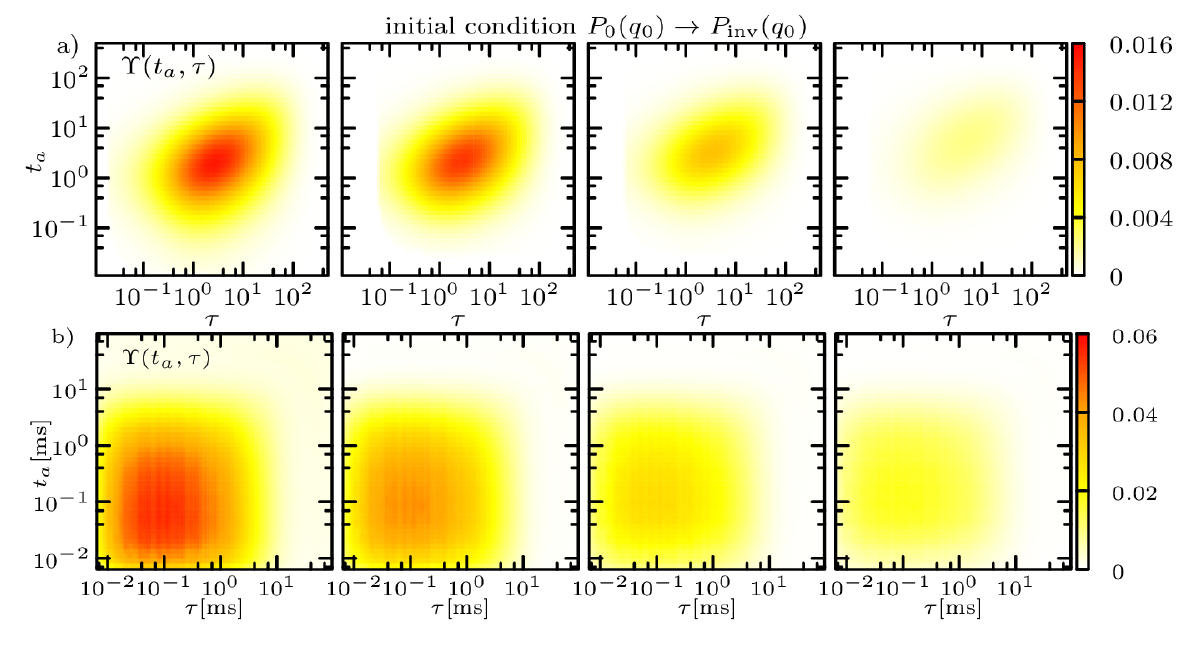}
\caption{\textbf{Attenuation and disappearance of dynamical time asymmetry upon approaching stationary initial conditions.} Gradual vanishing of
  the time asymmetry index $\AI (t_a,\tau)$ when the initial
  distribution of the projected coordinate $p_0(q_0)$ is sampled from a
  distribution being closer and closer to the density of the invariant measure,
  $p_\mathrm{inv}(q_0)$ for: a) Rouse
  model of a polymer chain (the parameters are the same as in
  Fig.~\ref{Fg2}); The initial end-to-end distance is sampled from
  intervals (from left to right): $q_0=$9.85,  $q_0\in [9-11]$,
  $q_0\in [7-13]$ and $q_0\in [4-16]$, respectively. b) experimental
  data for the DNA-hairpin. The initial conditions (relative to the
  mean value $\langle d\rangle$, that is, $d(t)=q(t)+\langle d\rangle$) are sampled (from
  left to right) from the following intervals: $q_0\in [1,3]$ nm,
  $q_0\in [-3,7]$ nm, $q_0\in [-6,10]$ and $q_0\in [-8,12]$ nm,
  respectively. When the initial condition is sampled from a
  distribution closer to the invariant measure, DTA vanishes
  confirming the claims of our theory.}
\label{Fg3}
\end{figure*}
 
 The aging correlation functions $C_{t_a}(\tau)$ and time asymmetry indices
 $\AI(t_a,\tau)$ are shown in Fig.~\ref{Fg2}. With the
 exception of the PGK protein, which does not equilibrate within the
 duration of the trajectory, in agreement with previous findings
 \cite{Jeremy}, DTA is manifested as a transient
 phenomenon. 
 The
 precise form of $C_{t_a}(\tau)$ depends on the details of the
 dynamics, which naturally vary between the systems. Moreover, the dependence of $C_{t_a}(\tau)$ on $t_a$ is
non-monotonic. 
 The
 generic form of $\AI (t_a,\tau)$ displays an initial increase towards
 a plateau, followed by a long-time decay to zero, which can be understood as
 follows. Irrespective of the details a finite time is required in
 order to allow for a build-up of memory, that is,
 of correlations between the instantaneous state of the projected observable
 and the initial condition of the latent variables. The memory at some point reaches a
 maximum. Afterwards, the memory of the preparation of the system is
 progressively lost as a result of the mixing of trajectories in full phase
 space during relaxation.
 Due to a relatively higher sampling frequency and sufficiently
 long sampling times that extend beyond the
   relaxation time all these effects are resolved in the
 experimental DNA-hairpin data but not in the case of the PGK
 simulation. 

Moreover, a hallmark of aging is that at least part of the relaxation of a system
takes place on time-scales that grow with the age of the system $t_a$, and continue to do so up to the
largest times accessible within an experiment or simulation. Interestingly,
Figs.~\ref{Fg2}  and \ref{Fg3} show that the relaxation time increases
(at least transiently) with the aging time, i.e. $\AI (t_a,\tau)$ decays with $t$ more slowly as
$t_a$ grows at least up to a threshold time. 
If an experiment or simulation does not reach this
threshold time the breaking of time-translation invariance would
seemingly take place on timescales that grow indefinitely, somewhat
similar to the aging phenomenon. 
Note that the threshold time may
become arbitrarily large in large systems (e.g. the relaxation time
and thus the threshold time in natural units for the Rouse polymer and
single file grow with the number of particles as $\propto N^2$ (see
e.g. Fig.~\ref{size} in the Appendix~\ref{Rouse}); for
any duration of an observation one may find a $N$ that makes DTA appear as everlasting). 

One appreciates that
   $\AI (t_a,\tau)$ truly quantifies the degree of broken
 time-translation invariance and \emph{not} correlations with the
 value of the observable at $t_a$. This is also the
   reason why $\AI (t_a,\tau)$ decays to zero on a time-scale
   shorter
   than $C_{t_a}(\tau)$. $C_{t_a}(\tau)$ starts at 1 and decays to zero as a result of
 ``forgetting the initial condition''. Because the probability density
 of being found at a given point always depends trivially on $t_a\ne0$ (see
 \eqqref{3p}) irrespective of whether time-translation invariance in \eqqref{ttinv} is broken or
 satisfied, $C_{t_a}(\tau)$ displays non-stationarity manifested in a $t_a$-dependence  even for time-translation
 symmetric dynamics. Conversely, $\AI (t_a,\tau)$ is
 constructed to not be affected by such spurious
 non-stationarity. Instead, it reflects how far the latent degrees
 of freedom are displaced from equilibrium at time $t_a$. In other
 words,  $\AI (t_a,\tau)$ compares
 the probability densities of the actual dynamics with
 those of fictitious dynamics that have the same probability
 density at time $t_a$ but in which at time
 $t_a$ the latent degrees of freedom are quenched to equilibrium (see
 \eqqref{AI}).

 One can look at
 $\AI (t_a,\tau)$ in two ways; as a function of $\tau$ at fixed
 $t_a$ and as a function of $t_a$ at fixed $\tau$. While the former
 intuitively reflects how the relaxation of the observable to equilibrium depends
 on the instantaneous (``initial'') state of the latent
 degrees of freedom at time $t_a$, the latter measures how the
 correlation of the value of the observable at two times separated by
 $\tau$ changes due to the relaxation of the latent degrees of freedom
 to equilibrium. The time asymmetry index
 therefore provides access to the dynamics of hidden
 degrees of freedom coupled to the observable through an analysis of time-series derived from
 measurements on the observable.

 A verification that a breaking of time-translation invariance occurs whenever the distribution of initial conditions sampled by the experiment has not converged to the equilibrium distribution 
follows from inspection of
 $\AI (t_a,\tau)$ evolving from an ensemble of initial
 conditions being closer and closer to an equilibrium distribution, i.e. $\Omega_0\to\Xi$ (see
 Fig.~\ref{Fg3} for the Rouse chain and DNA-hairpin). Indeed,
 $\AI (t_a,\tau)$ progressively
 vanishes when the initial condition becomes sampled from a
 distribution approaching the invariant measure,
 $p_{0}(q_0)\to p_{\mathrm{inv}}(q_0)$. In the Appendix~\ref{Oh} we
 prove that this is a general effect (Theorem \ref{invariant}), independent of any details of
 the dynamics.

\section*{Discussion}
Non-stationary behavior of physical observables is traditionally
considered as being important in systems with
 glassy, aging dynamics, such as polymer, spin or colloidal glasses,
 that attain glassy properties upon a quench in an
 external parameter
 \cite{Hodge,polymer,spinG1,endof,Weitz1,Weitz}. During, for example, a temperature quench, the
 system (e.g. a supercooled liquid or a set of spins) at some point cannot keep
 pace with rapid changes in the bath, and is pushed out of 
 equilibrium \cite{Kurchan}. After the quench at $t=0$ the
 observable is thus (at least weakly) non-stationary -- it is sampled from
 and averaged over a non-equilibrium ensemble, i.e. $p_{0}(q_0)\ne
 p_{\mathrm{inv}}(q_0)$. The absence of such an obvious quench
 rendered the origin of non-stationary, apparent aging behavior in
biological macromolecules somewhat mysterious
\cite{Frauen,Brujic,Jeremy,Xue,Xie1,Xie2,Xie3}. However, in biological
systems the observable can become quenched implicitly, e.g. by the
'locking in' of a protein's configuration by a chaperone
\cite{chaperone}, the configurational requirements for enzymatic catalysis
\cite{Xie1,Xie2,Xie3}, or simply by the under-sampling of equilibrium
such as in single-molecule experiments and particle-based computer
simulations \cite{Jeremy}, such that $p_{0}(q_0)\ne
p_{\mathrm{inv}}(q_0)$. In an experiment one can check for
non-stationarity of initial conditions, e.g. by inspecting whether 
histograms of the observable (also referred to as the
  ``occupation time fraction'' or ``empirical density'') at $t=0$ and at all later
times coincide \cite{Lapolla_PRR}.    

Here, we highlight a more general and wide-spread aspect of
out-of-equilibrium dynamics of physical observables -- dynamical
time asymmetry. The requirements for DTA to occur are much weaker than
for aging, and it is manifested in a very broad variety of
experimental situations, and in particular, one may also expect aging
physical observables probed in many experiments to display DTA. Even measurements on polymer, spin and colloidal
 glasses have built-in underlying
 projections. For example, in tensile creep experiments in polymeric glasses  the motion in a (cold)
 polymer is projected onto a local, effectively one-dimensional flow \cite{Hodge}. 
 In supercooled liquids and colloidal
 glasses the dynamics is typically projected onto
 local particle displacements, pair correlation functions and
 structure factors \cite{Wolynes,local,Weitz1,Weitz}. In bulk
 experiments with spin glasses and supercooled liquids
  one measures quantities such as the average single-spin
  auto-correlation function \cite{Bouchaud,Garrahan}
   , magnetization, conductance
 or the dielectric constant,  which correspond to projections of
 many-particle dynamics onto a scalar parameter \cite{endof,PRL,PNAS}. In biological
 macromolecules the projection may correspond to \cite{Brujic,Jeremy} or
 depend on \cite{Xie1,Xie2,Xie3} some internal distance within the
 macromolecule.
 These projections lead to non-Markovian observables evolving from
 non-stationary initial conditions which are in turn expected to show DTA. In fact we can
appreciate that the physical origin of DTA in both, 'traditional' glassy
systems \cite{Hodge,polymer,spinG1,endof,Weitz1,Weitz}  and biological
matter \cite{Frauen,Brujic,Jeremy,Xue,Xie1,Xie2,Xie3}, is
qualitatively the same and simply results from non-stationary initial
conditions of non-Markovian observables (see Observation \ref{obs2} in
the Appendix~\ref{Oh}). In
most of these aforementioned systems the dynamics is also aging \cite{Hodge,polymer,spinG1,endof,Weitz1,Weitz,Frauen,Brujic,Jeremy}.     

 It is important to realize that it is not possible to
   infer from a finite measurement whether the observed process is
   genuinely non-ergodic (i.e. a result of some true localization
   phenomenon in phase space) or whether the observation is made on an
   ergodic system but on a time-scale shorter the relaxation time
   \cite{Lapolla_f} (note that a comparison of the
   dynamics of PGK in Fig.~\ref{Fg2}d with a transient
   shorter than the relaxation time in any of the
   remaining examples in Fig.~\ref{Fg2}a-c shows no qualitative difference). A
 theoretical description of both scenarios on time-scales shorter than the
 relaxation time is in fact \emph{identical} (for details see 
 \cite{Lapolla_f} as well as \cite{Dean_wow} in the context of
 glasses).

Although sporting characteristics commonly associated
  with aging, DTA and aging are not quite the same thing. DTA does not
require the relaxation to take place on time-scales that grow
indefinitely with the age of the system $t_a$ beyond the
largest times accessible within an experiment or
simulation, nor does it impose requirements on the precise form of the
dependence on $t_a$. It is likely to be a ubiquitous phenomenon
that is frequently observed in measurements of projected observables. In turn, aging does not imply a broken time-translation
invariance according to \eqqref{ttinv}.

Note, however, that many
paradigmatic models of aging dynamics (e.g. continuous-time random
walks with diverging mean waiting times and fractional diffusion
 \cite{Eli_aging,Eli_JCP,Ralf}) display a (strongly) broken
 time-translation invariance. Furthermore, most \emph{experimental} observations of aging dynamics
monitor projected observables, e.g. magnetization, single-spin auto-correlation
functions averaged over the sample and potentially also over
disorder \cite{Hodge,polymer,spinG1,endof,Weitz1,Weitz,Frauen,Brujic,Jeremy}. The dynamics of these
observables is thus almost surely non-Markovian \cite{Lapolla_f} and
 expected to display DTA.

The observation of
$\AI(t_a,\tau)>0$ on a given scale of $t_a$ and $\tau$ implies that the dynamics
of the observable $q(t)$ fundamentally changes in the course of time
as a results of the relaxation of hidden DOF,
and \emph{does not} reflect correlations with the value of the
observable at zero time $q(0)$. That is, the effective equations
  of motion for $q(t)$ truly change in time.
In biological systems and in particular enzymes and other protein
nanomachines non-stationary effects are thought to influence function,
e.g. memory effects in catalysis \cite{Xie1,Xie2,Xie3}. This is
particularly important because some larger proteins potentially never relax
within their life-times, i.e. before they become degraded
(note that relaxation corresponds to attaining the spontaneous unfolding-refolding
equilibrium). This renders the dynamically time asymmetric regime virtually
'forever lasting' and implies that the system is aging
\cite{Jeremy}. As proteins are produced in the cell in an ensemble of
folded configurations under the surveillance of chaperones
\cite{chaperone}, our theory implies that DTA during
function \cite{Xie1,Xie2,Xie3} should arise naturally and generically due to the memory of
a protein's preparation.

We expect DTA to be particularly
pronounced in measurements on systems with entropy-dominated, temporally heterogeneous collective
conformational dynamics involving (transient) local
structure-formation where the background DOF evolve on the same time-scale
as the observable
\cite{Saleh}, and we suggest the breaking of time-translation
invariance to be closely related to the phenomenological notion of ``dynamical
disorder'' in biomolecular dynamics \cite{Xie1,Xie2,Xie3,dynamical_disorder}.   
 

Our results have some intriguing implications. First, a quench in an
external parameter and the mere under-sampling of equilibrium
distributions give rise to qualitatively equivalent manifestations  (but potentially with a largely different
magnitude and duration) of DTA as soon as the observable
follows a non-Markovian evolution (see Appendix~\ref{Oh}, Observation \ref{obs2}). This has important practical
consequences in fields such as single-molecule spectroscopy and computer simulations of soft and biological matter, which often suffer from
sampling constraints. Second, broken time-translation invariance is 'in the eye of the beholder',
insofar as its degree depends on the specific observable; 
there should exist a (potentially less) reduced coordinate, not
necessarily accessible to experiment (e.g. when we
  follow \emph{all} degrees of freedom), according to which
the same system will exhibit virtually time-translation
  invariant dynamics. However, auto-correlation functions will show a
  $t_a$-dependence for essentially any non-stationary initial condition in any system.

A broken time-translation invariance was shown to be linked to a
  form of entropy embodied in a \emph{time asymmetry index} that is a measure of the instantaneous thermodynamic displacement of latent, hidden
degrees of freedom from their stationary state. The time asymmetry index may
therefore be used to probe systematically the time-scale of dynamics of hidden, slowly relaxing
degrees of freedom relative to the time-scale of the evolution of the observable. In particular, it may be useful as a practical
tool to discriminate between situations where the hidden degrees of freedom evolve
through a sequence of local equilibria that would yield small
values of the time asymmetry index $\AI$ from those cases where their
evolution is transient and slow on the time-scale of the
observable thus implying a significant $\AI$. For
    example, $\AI$ may potentially provide additional
insight into the dominant folding mechanism of a protein from
single-molecule force-spectroscopy data
\cite{Woodside_rev}, in particular about the much debated heterogeneity of folding
trajectories and its functional relevance
\cite{defined,reply}.

The present theory ties dynamical time asymmetry in a general setting to
both the non-stationary preparation of
an observable and its non-Markovian time evolution. Thereby
it connects aspects of the better known phenomenology of aging of projected
observables
with the broken time-translation invariance observed in
  recent measurements on in soft and biological
materials on a common footing. Moreover, dynamical time asymmetry is
suggested to be a
ubiquitous phenomenon in biological and materials systems.

\begin{acknowledgments}
We thank Krishna Neupane and Michael T. Woodside for
providing unlimited access to their DNA-hairpin data and Peter Sollich
for clarifying discussion about physical aging and critical reading of
the manuscript. The financial support from the \emph{Deutsche Forschungsgemeinschaft} (DFG) through the
\emph{Emmy Noether Program "GO 2762/1-1"} (to AG) and
  from the \emph{Department of Eenergy}  through the grant DOE BER
FWP ERKP752 (to JCS) are gratefully
acknowledged.
\end{acknowledgments}

\let\oldaddcontentsline\addcontentsline
\renewcommand{\addcontentsline}[3]{}
\let\addcontentsline\oldaddcontentsline

\clearpage
\newpage
\onecolumngrid
\newcommand{\JJ}{\mathbf{J}}
\newcommand{\LL}{\hat{\mathcal{L}}}
\newcommand{\LLA}{\hat{\tilde{\mathcal{L}}}}
\newcommand{\LLB}{\hat{\mathcal{L}}^{\dagger}}
\newcommand{\bbf}{\mathbf{F}(\mathbf{x})}
\newcommand{\bbft}{\mathbf{\tilde{F}}(\mathbf{q})}
\newcommand{\bbftp}{\mathbf{\tilde{F}}(\mathbf{q'})}
\newcommand{\bbfq}{\langle\mathbf{\tilde{F}}(\mathbf{q})\rangle}
\newcommand{\bx}{\mathbf{x}}
\newcommand{\by}{\mathbf{y}}
\newcommand{\bR}{\mathbf{R}}
\newcommand{\ket}{\left.\right{\rangle}}
\newcommand{\bra}{\left{\langle}\right.}
\newcommand{\sss}{|\mathrm{ss}\rangle}
\newcommand{\sssl}{\langle\mathrm{ss}|}
\newcommand{\lflat}{\langle\text{--}|}
\newcommand{\rflat}{|\text{--}\rangle}

\newcommand{\leftBk}{\langle \psi^L_k|}
\newcommand{\leftKk}{|\psi^L_k\rangle}
\newcommand{\leftBki}{\langle \psi^L_{k_i}|}
\newcommand{\leftKki}{|\psi^L_{k_i}\rangle}
\newcommand{\leftBl}{\langle \psi^L_l|}
\newcommand{\leftKl}{|\psi^L_l\rangle}
\newcommand{\rightBk}{\langle \psi^R_k|}
\newcommand{\rightKk}{|\psi^R_k\rangle}
\newcommand{\rightBki}{\langle \psi^R_{k_i}|}
\newcommand{\rightKki}{|\psi^R_{k_i}\rangle}
\newcommand{\rightBl}{\langle \psi^R_l|}
\newcommand{\rightKl}{|\psi^R_l\rangle}
\newcommand{\ee}{\mathrm{e}}
\newcommand{\eeb}{\hat{\textbf{e}}}
\newcommand{\mcp}{\hat{\mathcal{P}}}
\newcommand{\BGamma}{\boldsymbol{\Gamma}}
\newcommand{\bq}{\mathbf{q}}
\newcommand{\bu}{\mathbf{u}}
\newcommand{\bv}{\mathbf{v}}
\newcommand{\bw}{\mathbf{w}}
\newcommand{\bql}{\mathbf{Q}}
\newcommand{\bss}{\mathbf{S}}
\newcommand{\mX}{\mathcal{X}}
\newcommand{\mY}{\mathcal{Y}}
\newcommand{\mZ}{\mathcal{Z}}
\newcommand{\bssu}{\underline{\mathbf{S}}}
\newcommand{\blam}{\boldsymbol{\Lambda}}
\newcommand{\bnabla}{\boldsymbol{\nabla}}
\newcommand{\eql}{Eq.~(}
\newcommand{\gradient}{\nabla\varphi(\bx)}
\newcommand{\rotation}{\boldsymbol{\vartheta}(\bx)}
\newcommand{\OO}{\mathbf{O}}
\newcommand{\OOt}{\mathbf{\tilde{O}}}
\newcommand{\DD}{\mathbf{D}}
\newcommand{\LA}{\boldsymbol{\Lambda}}
\newcommand{\nablaq}{\nabla_{\mathbf{q}}}
\newcommand{\nablaqp}{\nabla_{\mathbf{q}'}}
\newcommand{\bs}{\mathbf{s}}
\newcommand*{\QEDA}{\hfill\ensuremath{\blacksquare}}
\newcommand*{\peq}{P_{\mathrm{eq}}(\bx)}
\newcommand*{\peqo}{P_{\mathrm{eq}}(\bx_0)}
\newcommand*{\eeq}{\mathrm{eq}}
\newcommand*{\Ei}{\mathrm{Ei}}
\newcommand*{\MA}{\mathcal{A}}
\renewcommand{\thefigure}{A\arabic{figure}}
\renewcommand{\theequation}{A\arabic{equation}}
\setcounter{equation}{0}
\setcounter{figure}{0}
\setcounter{page}{1}
\setcounter{section}{0}


\begin{center}
  \textbf{APPENDIX}
\end{center}
\appendix

\begin{quotation}
In this Appendix we present the main theorems
needed for the article with the corresponding proofs. We treat the problem in a general setting, that is,
not assuming that the full system is initially prepared in equilibrium.
Further included are analytical results with details of
calculations for the Rouse polymer and single file diffusion, all
details of the numerical analyses of the DNA-hairpin and protein PGK data and further supporting
results.
\tableofcontents 
\end{quotation}

\section{Definitions, notation and preliminaries}
We consider a stable conservative mechanical system in a continuous
domain $\Omega\in \mathbb{R}^d$ that is at least weakly coupled to a
thermal bath with Gaussian statistics with the longest correlation time $\tau_b$ being
much shorter than that of the system, $\tau_s$ (i.e. $\tau_b\ll \tau_s$)
such that the bath can be considered as representing stationary white
noise on the time-scale of the system's dynamics \cite{Freidlin}. The thermal
bath is either external or the result of integrating out an additional subset of
internal degrees of freedom that relaxes much faster than the system.
At any time $t$ the state of the system is specified by a $d$-dimensional state
(column) vector $\bx_t\in\mathbb{R}^d$, whose entries are generalized
coordinates $x_{t,i}$. 
Note that the dynamics in soft matter and biological systems is
typically strongly overdamped which we also assume here. The extension
to underdamped systems is conceptually straightforward (since we 
consider microscopically reversible dynamics) \cite{Risken}, but since
a broken time-translation invariance in soft and biological matter is \emph{not} tied to momenta, we omit
these for convenience.  
We are strictly interested in the evolution of
$\bx_t$ for $t\gg\tau_b$. It is well known that under certain
technical conditions imposed on the dynamics of the bath \cite{Freidlin}, which we will not
further detail here but are strictly granted for the physical systems
relevant to the discussion, $\bx_t$ evolves according to the It$\mathrm{\hat{o}}$
equation
\begin{equation}
d\bx_t=\mathbf{F}(\bx_t)dt
+ \boldsymbol{\sigma}d\mathbf{W}_t
\label{Langevin}
\end{equation}
where $\mathbf{W}_t$ is a $d$-dimensional vector of independent
Wiener processes whose increments have a Gaussian distribution with
zero mean and variance $dt$,
i.e. $\mathbb{E}[dW_{t,i}dW_{t',j}]=\delta_{ij}\delta(t-t')dt$,
$\mathbb{E}[\cdot]$ denotes the expectation over the ensemble of
Wiener increments and where $\boldsymbol{\sigma}$ is a $d\times d$
symmetric noise matrix.
If momentum
coordinates were included $\boldsymbol{\sigma}$ would be positive semi-definite with
zeros in the sector of position variables and non-zero terms
proportional to the friction constant $\gamma$ in the momentum sector,
and is strictly positive definite with terms $\propto \gamma^{-1}$  for over-damped dynamics
(i.e. for $\gamma\gg 1$) \cite{Risken}). 
We focus on microscopically reversible
dynamics, that is, we consider $d$-dimensional Markovian diffusion
with a $d\times d$ symmetric positive-definite diffusion matrix
$\mathbf{D}=\boldsymbol{\sigma}\boldsymbol{\sigma}^T/2$ and mobility tensor
$\mathbf{M}=\mathbf{D}/k_{\mathrm{B}}T$ (with
$\beta^{-1}\equiv k_{\mathrm{B}}T$ being the thermal energy) in a drift field
$\mathbf{F}(\bx)$, such that $\mathbf{M}^{-1}\mathbf{F}(\bx)=-\nabla \varphi(\bx)$
is a gradient flow. 
The
drift field $\bbf:\mathbb{R}^d\to \mathbb{R}^d$, is either nominally confining
(in this case $\Omega$ is open) or is accompanied by corresponding reflecting boundary conditions at
$\partial \Omega$ (in this case $\Omega$ is closed) thus
guaranteeing the existence of an invariant measure and hence
ergodicity \cite{Freidlin,Risken}.

On the level of probability measures in phase space the dynamics is
governed by the (forward) Fokker-Planck operator $\LL: V\to V$, where $V$ is a complete normed
linear vector space with elements $f\in C^2(\mathbb{R}^d)$. In particular, 
\begin{equation}
  \LL=\nabla\cdot\mathbf{D}\nabla-\nabla\cdot \bbf.
\label{FPE}
\end{equation}
$\bbf$ is assumed to be sufficiently confining,
i.e. $\lim_{\bx\to\infty}P(\bx,t)=0, \forall t$ sufficiently fast to assure that
$\LL$ corresponds to a coercive and densely defined
operator on $V$ with a pure point spectrum \cite{Nier,Chupin,Reed}. 
$\LL$ propagates probability measures $\mu_t(\bx)$ in time, which will throughout
be assumed to possess well-behaved probability density functions
$P(\bx,t)$, i.e. $d\mu_t(\bx)=P(\bx,t)d\bx$. The nullspace of $\LL$
(i.e. the solution of $\LL \peq=0$)  is
the equilibrium (Maxwell-)Boltzmann-Gibbs distribution,
$\peq=Q^{-1}\ee^{-\beta\varphi(\bx)}$, with partition function
$Q=\int_{\Omega}d\bx\ee^{-\beta\varphi(\bx)}$. We define the (forward)
propagator $\hat{U}(t)=\mathrm{e}^{\LL t}$ that is the generator of a
semi-group $\hat{U}(t+t')=\hat{U}(t)\hat{U}(t')$. The formal solution
of the Fokker-Planck equation $(\partial_t-\LL) P(\bx,t)=0$ is thereby
given as $P(\bx,t)=\hat{U}(t)P(\bx,0)$.
The expectation over the ensemble of paths
$\bx_t$ will be denoted by $\langle
\cdot\rangle$ and in the case of a physical observable $\mathcal{B}(\bx_t)$ is given by
\begin{equation}
\langle \mathcal{B}(\bx_t)\rangle\equiv\int\mathcal{B}(\bx)d\mu_t(\bx)\equiv\int_{\Omega}\mathcal{B}(\bx)P(\bx,t)d\bx\equiv\int_{\Omega}\mathcal{B}(\bx)\hat{U}(t)P(\bx,0)d\bx
\label{expect}  
\end{equation}
Part of the analysis will involve the use of spectral theory in Hilbert
space, for which it is convenient to introduce the bra-ket notation;  the 'ket' $|g\rangle$
represents a vector in $V$ written in
position basis as $g(\bx)\equiv\langle\bx|g \rangle$, and the 'bra'
$\langle h |$ as the integral $\int d\bx h^{\dagger}(\bx)$. The scalar
product is defined with the Lebesgue integral $\langle h|g\rangle=\int d\bx
h^{\dagger}(\bx)g(\bx)$. In this notation we
have the following evolution equation for the
 probability density function starting from an initial
condition $|p_0\rangle$: $|p_t\rangle=\mathrm{e}^{\LL t}|p_0\rangle$.
Since the process is ergodic we have $\lim_{t\to\infty}\mathrm{e}^{\LL
  t}|p_0\rangle=|\mathrm{eq}\rangle$, where $\langle\bx|\mathrm{eq}\rangle=\peq$.
We also define the (typically non-normalizable)
'flat' state $\rflat$, such that $\langle\bx\rflat=1$ and
$\langle\text{--}|p_t\rangle=1$. Hence,
$\partial_t\langle\text{--}|p_t\rangle=0$ and $\lflat \LL=0$.

Whereas $\LL$ by itself is not self-adjoint, it is orthogonally
equivalent to a self-adjoint operator, i.e. the operator $\LL_s=\ee^{\beta
  \varphi(\bx)/2}\LL\ee^{-\beta \varphi(\bx)/2}$ is self-adjoint, and,
moreover the operator $\ee^{\beta \varphi(\bx)}\LL$ is
self-adjoint (for a proof see \cite{Risken}). Because any self-adjoint operator in Hilbert space is
diagonalizable, $\LL$ is diagonalizable as well, but with a separate
set of left and right bi-orthonormal eigenvectors $\leftBk$ and
$\rightKk$, respectively. That is, $\LL\rightKk=-\lambda_k\rightKk$
and $\leftBk\LL=-\lambda_k\leftBk$ with
real eigenvalues $\lambda_k\ge 0$ (assured by detailed balance) and where $\lambda_0=0$,
$|\psi^R_0\rangle=|\mathrm{eq}\rangle$, $\langle \psi^L_0|=\lflat$,
and $\leftBk\psi^R_l\rangle=\delta_{kl}$. Moreover, since $\ee^{\beta \varphi(\bx)}\LL$ is
self-adjoint it follows that that $\leftKk=\ee^{\beta \varphi(\bx)}\rightKk$. The resolution
of identity is given by $\mathbf{1}=\sum_k\rightKk\leftBk$ and the
propagator by $\hat{U}(t)=\sum_k\rightKk\leftBk\ee^{-\lambda_kt}$.

The Markovian Green's function of the process $\bx_t$ corresponds to the conditional probability density function for a localized initial
condition $\langle\bx|p_0\rangle=\delta(\bx-\bx_0)$ and is defined as $Q(\bx,t|\bx_0,0) = \langle \bx|\hat{U}(t)|\bx_0\rangle$,
such that the probability density starting from a general
initial condition $|p_0\rangle$ becomes $P(\bx,t,p_0)=\langle \bx|\hat{U}(t)|p_0\rangle\equiv\int d\bx_0 p_0(\bx_0) Q(\bx,t|\bx_0,0)$.
In the spectral representation the Green's function reads
\begin{equation}
Q(\bx,t|\bx_0,0)=\sum_k\psi^R_k(\bx)\psi_k^{L}(\bx_0)\ee^{-\lambda_kt},
\label{SGreen}  
\end{equation}
where the semi-group property means that
$Q(\bx,\tau|\bx_0,0)=Q(\bx,t+\tau|\bx_0,t)$ is independent of $t$ as is
easily verified via
\begin{equation}
\int_{\Omega} d\bx'
Q(\bx,t|\bx',t')Q(\bx',t'|\bx_0,0)=\sum_{k,l}\psi^R_k(\bx) \leftBk\psi^R_l\rangle\psi^{L}_l(\bx_0)\ee^{-\lambda_k(t-t')-\lambda_lt'}\equiv
Q(\bx,t|\bx_0,0),
\label{CKE}  
\end{equation}
where we have used that $\leftBk\psi^R_l\rangle=\delta_{k,l}$.

In the
presence of a time-scale separation giving rise to local equilibrium
the system's dynamics may be coarse-grained further into a
discrete-state Markov jump master equation (see
e.g. \cite{Kampen,Weber_2017}). In this case the configuration space
would be discrete and $d-$dimensional, $\LL$ would be replaced
by a $d\times d$ symmetric stochastic matrix $\mathbf{M}$, 
and the Fokker-Planck
equation by the master equation $\frac{d}{dt} Q=\mathbf{M} Q$. Since
this situation corresponds to an approximate, lower-resolution
dynamics of the system that is mathematically simpler and the mapping
between the Fokker-Planck equation and Markov-state jump dynamics is
well-known \cite{Kampen,Risken,Lapolla_PRR} and does not
introduce any further conceptual changes (the complete spectral-theoretic approach in
particular remains unchanged), we will without any loss of generality
focus on the continuous scenario.   

\setcounter{equation}{0}
\setcounter{figure}{0}
\renewcommand{\thefigure}{B\arabic{figure}}
\renewcommand{\theequation}{B\arabic{equation}}
\section{Dynamics of the projected lower-dimensional observable}
\label{lower}
In order to describe the dynamics of the $r$-dimensional projected observable
$\bq=\BGamma(\bx):\mathbb{R}^d\to\mathbb{R}^r$ with $r<d$ and $\bq$ lying in some
orthogonal system in Euclidean space
$\bq\in\Xi(\mathbb{R}^r)\subset\Omega(\mathbb{R}^d)$, we define
the operator $\mcp_{\mathbf{x}}(\BGamma;\bq)$, such that, when applied
to some function $Z(\bx)\in V$, $\mcp_{\mathbf{x}}(\BGamma;\bq)$ gives (see \cite{Lapolla_f})
\begin{equation}
 \mcp_{\bx}(\BGamma;\bq) Z(\bx)\equiv\int_{\Omega}d\bx\delta(\BGamma(\bx)-\bq)Z(\bx),
  \label{projection}
\end{equation}
where $\delta(\mathbf{y})$ is to be understood in the distributional sense.
We can now define the (in general) non-Markovian
two-point conditional probability density of projected dynamics
starting from $\bq_0\in\Xi_0$, where the subdomain  $\Xi_0$ is not
necessarily simply connected, with the extended operator $\mcp_{\bx}(\BGamma;\bq\in\Xi_0)=\int_{\Xi_0}
  d\bq\mcp_{\bx}(\BGamma;\bq)$ in terms of the single-point and joint
  two-point density $P_{p_0}^0(\bq_0\in\Xi_0)$ and $P_{p_0}(\bq,t,\bq_0\in\Xi_0)$,
  respectively, as 
\begin{equation}
G_{p_0}(\bq,t|\bq_0\in\Xi_0)=\frac{P_{p_0}(\bq,t,\bq_0\in\Xi_0)}{P_{p_0}^0(\bq_0\in\Xi_0)}\equiv\frac{\mcp_{\bx}(\Gamma;\bq)\mcp_{\bx_0}(\BGamma;\bq_0\in\Xi_0)Q(\bx,t|\bx_0,0)p_0(\bx_0)}{\mcp_{\bx_0}(\BGamma;\bq_0\in\Xi_0)p_0(\bx_0)}
\label{projected}
\end{equation}
with the convention that $P_{p_0}(\bq,t,\bq_0)$ and
$G_{p_0}(\bq,t|\bq_0)$ stand for $\Xi_0$ corresponding to a single point $\bq_0$. 
\emph{The full system is said to have a stationary preparation} if and
only if $p_0(\bx_0)=\peq$, whereas \emph{the projected observable is said to have a stationary preparation} if and
only if $\Xi_0=\Xi$. Note
that $\lim_{t\to\infty}P_{p_0}(\bq,t,\bq_0\in\Xi_0)=P_{\eeq}(\bq)\int_{\Xi_0}d\bq_0P_{p_0}(\bq_0)$,
where we have defined
$P_{\eeq}(\bq)\equiv\mcp_{\bx}(\Gamma;\bq)\peq$ as well as
$P_{p_0}(\bq_0)\equiv\mcp_{\bx_0}(\BGamma;\bq_0)p_0(\bx_0)$. In turn
it follows that $\lim_{t\to\infty}G_{p_0}(\bq,t|\bq_0\in\Xi_0)=P_{\eeq}(\bq)$.
Eq.~(\ref{projected}) demonstrates that the entire time evolution of projected dynamics
starting from a fixed condition $\bq_0$ depends on the initial
preparation of the full system $p_0(\bx_0)$ as denoted by the
subscript, which is the first signature of the non-stationary nature
of projected dynamics. In addition, the dynamics described by Eq.~(\ref{projected})
is, except for quite exotic projections $\BGamma(\bx)$, non-Markovian
(see \cite{Lapolla_f}).\\
\indent We can now define averages and two-point correlation functions of
$\bq(t)$. The $n$-th moment of the position averaged over an ensemble of all projected non-Markovian
evolutions prepared in the point $\bq_0$ while the full
system at $t=0$ is prepared in the state $p_0(\bx_0)$ is given by
\begin{equation}
\langle \bq (t)^n\rangle^{\Xi_0}_{p_0}\equiv\int_{\Xi}d\bq \bq^n
G_{p_0}(\bq,t|\bq_0\in\Xi_0), \quad \langle\bq^n\rangle^{\Xi_0}_{p_0}\equiv\int_{\Xi_0}  d\bq_0\bq_0^nP_{p_0}(\bq_0)
\label{Nexpected}
\end{equation}
where we are here only interested in $n=1,2$, whereas the most general tensorial two-point $(0,t)$ (non-aging) correlation
(i.e. covariance) matrix is defined as  
\begin{eqnarray}
\mathbf{C}_{\Xi_0}(t;p_0)\!&\equiv&\!\langle \bq(t)\otimes\bq(0)\rangle^{\Xi_0}_{p_0}-\langle \bq(t)\rangle_{p_0}^{\Xi_0}\otimes\langle\bq\rangle^{\Xi_0}_{p_0}\nonumber\\&=&\!\!\int_{\Xi}\!d\bq\! \int_{\Xi_0}\!d\bq_0 (\bq\otimes\bq_0)
P_{p_0}(\bq,t,\bq_0) - \langle \bq (t)\rangle^{\Xi_0}_{p_0}\otimes \langle\bq\rangle^{\Xi_0}_{p_0},
\label{TNcorr}
\end{eqnarray}
such that $\lim_{t\to\infty}\mathbf{C}(t;p_0)=0,\forall p_0$, where from the
scalar version is in turn obtained by taking the trace
\begin{equation}
C_{\Xi_0}(t;p_0)\equiv\langle \bq(t)\cdot\bq(0)\rangle^{\Xi_0}_{p_0}-\langle \bq(t)\rangle_{p_0}^{\Xi_0}\cdot\langle\bq\rangle^{\Xi_0}_{p_0}=\mathrm{Tr}\mathbf{C}_{\Xi_0}(t;p_0)
\label{Ncorr}
\end{equation}
with the convention $\mathbf{C}_{\Xi_0}(t;P_{\mathrm{eq}})=\langle
\bq(t)\cdot\bq(0)\rangle^{\Xi_0}_{\eeq}-\langle
\bq(t)\rangle_{\eeq}^{\Xi_0}\cdot\langle\bq\rangle^{\Xi_0}_{\eeq}\equiv\mathbf{C}_{\Xi_0}(t)$. We
can equivalently define the time-dependent variance of $\bq(t)$
with $\bq(0)=\bq_0\in\Xi_0$ as
\begin{equation}
\sigma_{\Xi_0}^2(t;p_0)\equiv\langle \bq(t)^2\rangle^{\Xi_0}_{p_0}-(\langle \bq(t)\rangle_{p_0}^{\Xi_0})^2
\label{tvar}
\end{equation}
\noindent \subsection{Spectral theory of projected dynamics}
We now use spectral theory of the Markovian Green's function in
\eql\ref{SGreen}) to analyze the general properties of the
non-Markovian time evolution of the projected lower-dimensional
observable $\bq(t)$. As the initial preparation of the full system
$p_0(\bx_0)$ was found to determine the point-to-point propagation of
the probability density of $\bq$, we
begin by expanding the initial condition of the full system $p_0(\bx_0)$
in the eigenbasis of $\LL$, i.e.
$p_0(\bx_0)=\sum_l|\psi_l^R\rangle\left\langle\psi_l^L|p_0\right\rangle$. The
only assumptions made for $p_0(\bx_0)$ are that it is normalized,
Lebesgue integrable (such that $\left\langle\psi_l^L|p_0\right\rangle$
exists) and locally sufficiently compact to assure that the projection
at time $t=0$ does not project onto an empty set of the observable $\bq_0$.
By further introducing
the elements of the following infinite-dimensional
matrices
\begin{equation}
\Psi_{kl}(\bq)=\leftBk \delta(\BGamma(\bx)-\bq)\rightKl, \quad
\Psi_{kl}(\Xi_0)=\int_{\Xi_0}d\bq\leftBk \delta(\BGamma(\bx)-\bq)\rightKl
\label{element}
\end{equation}
where $\lim_{\Xi_0\to\bq}\Psi_{kl}(\Xi_0)=\Psi_{kl}(\bq)$, we can express $P_{p_0}(\bq,t|\bq_0\in\Xi_0)$
in \eql\ref{projected}) 
as
\begin{equation}
P_{p_0}(\mathbf{q},t,\mathbf{q}_0\in \Xi_0)=\sum_{k}\ee^{-\lambda_kt}\Psi_{0k}(\bq)\sum_l\Psi_{kl}(\Xi_0)\left\langle\psi_l^L|p_0\right\rangle
\label{JnonMarkS}
\end{equation}
and since the preparation of the projected observable is $P_{p_0}(\bq_0\in\Xi_0)=\sum_l\Psi_{0l}(\Xi_0)\left\langle\psi_l^L|p_0\right\rangle$, the
conditional non-Markovian two-point density as
\begin{equation}
G_{p_0}(\mathbf{q},t|\mathbf{q}_0\in \Xi_0)=\frac{\sum_{k}\ee^{-\lambda_kt}\Psi_{0k}(\bq)\sum_l\Psi_{kl}(\Xi_0)\left\langle\psi_l^L|p_0\right\rangle}{\sum_l\Psi_{0l}(\Xi_0)\left\langle\psi_l^L|p_0\right\rangle}.
\label{GnonMarkS}
\end{equation}
For a stationary preparation of the full system,
i.e. $p_0(\bx_0)=P_{\mathrm{eq}}(\bx_0)$, we have that
$\left\langle\psi_l^L|P_{\mathrm{eq}}\right\rangle=\delta_{l,0}$ and
hence $P_{\mathrm{eq}}(\bq\in\Xi_0)=\Psi_{00}(\Xi_0)$ as
well as 
\begin{equation}
  G_{\eeq}(\mathbf{q},t|\mathbf{q}_0\in \Xi_0)=\frac{P_{\mathrm{eq}}(\bq,t,\bq_0\in\Xi_0)}{P_{\mathrm{eq}}(\bq_0\in\Xi_0)}=\frac{\sum_{k}\Psi_{0k}(\bq)\Psi_{k0}(\Xi_0)\ee^{-\lambda_kt}}{\Psi_{00}(\Xi_0)}.
\label{GnonMarkSeq}
\end{equation}
As a result 
\begin{eqnarray}
\langle \bq (t)\rangle^{\Xi_0}_{p_0}&=&\sum_{k}\ee^{-\lambda_kt}\left(\int_\Xi d\bq \bq\Psi_{0k}(\bq)\right)\frac{\sum_l\Psi_{kl}(\Xi_0)\left\langle\psi_l^L|p_0\right\rangle}{\sum_l\Psi_{0l}(\Xi_0)\left\langle\psi_l^L|p_0\right\rangle}\nonumber\\  
\langle \bq (t)\rangle^{\Xi_0}_{\eeq}&=&\sum_{k}\ee^{-\lambda_kt}\left(\int_\Xi d\bq \bq\Psi_{0k}(\bq)\right)\frac{\Psi_{k0}(\Xi_0)}{\Psi_{00}(\Xi_0)}.
\label{NexpectedS}
\end{eqnarray}
Furthermore, we find that
\begin{equation}
\lim_{\Xi_0\to\Xi}\Psi_{kl}(\Xi_0)=\!\int_{\Xi}\!d\bq \leftBk\delta(\BGamma(\bx)-\bq)\rightKl=\leftBk\int_{\Xi}\!d\bq \delta(\BGamma(\bx)-\bq)\rightKl=\langle\psi_k^L|\psi^R_l\rangle=\delta_{k,l},
\label{limit}
\end{equation}
where the order of integration can be exchanged since the delta function in the
distributional sense is smooth (i.e. the limit to a 'true'
delta-function is taken after the integrals) and the domain of
the $\bq$ integration $\Xi$ by definition includes all mappings
$\bq=\BGamma(\bx)$ such that $\int_{\Xi}d\bq
\delta(\BGamma(\bx)-\bq)=1$. As a result
$\lim_{\Xi_0\to\Xi}G_{\eeq}(\mathbf{q},t|\mathbf{q}_0\in
\Xi_0)=\Psi_{00}(\bq)= P_{\eeq}(\bq),\forall t$.  
Using these spectral-theoretic results it follows immediately that the
elements of the general tensorial second moment 
matrix read
\begin{eqnarray}
\langle (\bq(t)\otimes\bq(0))_{ij}\rangle^{\Xi_0}_{p_0}&=&\sum_{k}\ee^{-\lambda_kt}\left(\int_{\Xi_i}dq_i
  q_i\Psi_{0k}(q_i)\right)\sum_l\left\langle\psi_l^L|p_0\right\rangle\left(\int_{\Xi_{0,j}}dq_{0,j}q_{0,j}\Psi_{kl}(q_{0,j})\right)\nonumber\\
\langle (\bq(t)\otimes\bq(0))_{ij}\rangle^{\Xi_0}_{\eeq}&=&\sum_{k}\ee^{-\lambda_kt}\left(\int_{\Xi_i}dq_i
q_i\Psi_{0k}(q_i)\right)\left(\int_{\Xi_{j,0}}dq_{0,j}q_{0,j}\Psi_{k0}(q_{0,j})\right),
\label{TNcorrS}
\end{eqnarray}
which, once plugged into Eq.~(\ref{TNcorr}) together with
Eq.~(\ref{NexpectedS}) and the right member of \eql\ref{Nexpected}),
yield the tensorial correlation (or covariance) matrix $\mathbf{C}(t;p_0)$.
The case treated in the main text, that is, when the projected
coordinate is one-dimensional and the full-system's preparation is
stationary, follows trivially by appropriate simplification of
Eq.~(\ref{projection}) and insertion into \eql\ref{GnonMarkSeq}), which
leads to
\begin{eqnarray}
\!\!\!\!\!\!\!\!\!C(t;\eeq)&\!\equiv&\!\langle q(t)q(0)\rangle^{\Xi_0}_{\eeq}-\langle
q(t)\rangle^{\Xi_0}_\eeq\langle q(0)\rangle_{\eeq}^{\Xi_0}\nonumber\\
&=&\!\sum_{k}\ee^{-\lambda_kt}\!\left(\int_{\Xi}\!\!d
qq\Psi_{0k}(q)\right)\!\!\left(\!\!\int_{\Xi_0}dq_{0}q_{0}\Psi_{k0}(q_{0}) -\frac{\Psi_{k0}(\Xi_0)}{\Psi_{00}(\Xi_0)}\int_{\Xi_0}\!\! dq_0 q_0\Psi_{00}(q_0)\right)\!\!.
\label{EQCorr}
\end{eqnarray}
As we now show in the following section \emph{dynamical
time asymmetry} (i.e. broken time-translation invariance) is inherently
tied to non-Markovian three-point probability density functions of the
projected observable.

\noindent \subsection{Three-point dynamics and breaking of
  time-translation invariance}\label{2B}
In order to describe dynamical time asymmetry we introduce two times, the
so-called ``aging'' (or ``waiting'') time, $t_a$, and the observation time window $\tau=t-t_a$. More
precisely, we consider, as in the previous section, that the \emph{full system} was prepared at $t=0$ in a
general (not necessarily stationary) state $p_0(\bx_0)$, whereby the choice of time origin is dictated by
the initiation of an experiment or the onset of a phenomenon. The
actual observation starts at some later (aging) time $t_a\ge 0$ and is
carried out until a time $t$ and hence has a duration $\tau=t-t_a$. An
example of a non-stationary preparation of a full system would be a temperature quench of a system equilibrated at some
different temperature. We assume, as before, that only the
lower-dimensional observable $\bq(t)$ is observed for all times $t\ge 0$.   

We now define time-delayed, ``aging'' observables. The normalized tensorial aging correlation
matrix is defined as
\begin{eqnarray}
\hat{\mathbf{C}}^{\Xi_0}_{t_a}(\tau;p_0)\equiv\frac{\mathbf{C}^{\Xi_0}_{t_a}(\tau;p_0)}{\mathbf{C}^{\Xi_0}_{t_a}(0;p_0)}=\frac{\langle
  \bq(\tau+t_a)\otimes\bq(t_a)\rangle^{\Xi_0}_{p_0}-\langle
  \bq(\tau+t_a)\rangle^{\Xi_0}_{p_0}\otimes\langle\bq(t_a)\rangle^{\Xi_0}_{p_0}}{\langle
  \bq(t_a)\otimes\bq(t_a)\rangle^{\Xi_0}_{p_0}-\langle
  \bq(t_a)\rangle_{p_0}^{\Xi_0}\otimes\langle\bq(t_a)\rangle^{\Xi_0}_{p_0}}
\label{TaNcorr}
\end{eqnarray}
such that
$\hat{C}_{t_a}(\tau;p_0)\equiv\mathrm{Tr}\hat{\mathbf{C}}_{t_a}(\tau;p_0)$
and for the one-dimensional coordinate starting from a system prepared
in a stationary state that is studied in the main paper  
\begin{equation} 
\hat{C}^{\Xi_0}_{t_a}(\tau,\eeq)\equiv \hat{C}^{\Xi_0}_{t_a}(\tau)\equiv\frac{C^{\Xi_0}_{t_a}(\tau)}{C^{\Xi_0}_{t_a}(0)}=\frac{\langle q(\tau+t_a)q(t_a)\rangle^{\Xi_0}-\langle q(\tau+t_a)\rangle^{\Xi_0}\langle q(t_a)\rangle^{\Xi_0}}{\langle q(t_a)^2\rangle^{\Xi_0}-(\langle q(t_a)\rangle^{\Xi_0})^2}.
 \label{a_corr}
\end{equation}
From the definitions of aging observables in
Eqs.~(\ref{TaNcorr}-\ref{a_corr}) it follows that these are inherently
tied to three-point probability density functions at times $0,t_a,$
and $t_a+\tau$. The full system's dynamics,  corresponding to a
Hamiltonian dynamics coupled to a Markovian heat bath, is Markovian and
time-translation invariant. The
three-point joint density therefore reads
\begin{equation}
 P^{p_0}_{\mathrm{full}}(\bx,t_a+\tau,\bx',t_a,\bx_0)=Q(\bx,t_a+\tau|\bx',t_a)Q(\bx',t|\bx_0,0)p_0(\bx_0).
\label{Mthreej}
\end{equation}
Using the definitions from the previous section and
introducing the shorthand notation
$\mathbf{\mcp}_{\bx,\bx',\bx_0}(\BGamma;\bq,\bq',\bq_0\in\Xi_0)\equiv\mcp_{\bx}(\BGamma;\bq)\mcp_{\bx'}(\BGamma;\bq')\mcp_{\bx_0}(\BGamma;\bq_0\in\Xi_0)$
the \emph{three-point joint density} is defined as
\begin{eqnarray}
  P_{p_0}(\bq,t_a+\tau,\bq',t_a,\bq_0\in\Xi_0)&\equiv&\mathbf{\mcp}_{\bx,\bx',\bx_0}(\BGamma;\bq,\bq',\bq_0\in\Xi_0)P^{p_0}_{\mathrm{full}}(\bx,t_a+\tau,\bx',t_a,\bx_0)\nonumber\\
 &=&\sum_{k,l}\ee^{-\lambda_k\tau-\lambda_l t_a}\Psi_{0k}(\bq)\Psi_{kl}(\bq')\sum_m\Psi_{lm}(\Xi_0)\langle\psi_{m}^L|p_0\rangle.
\label{threej}
\end{eqnarray}
Under the milder (as far as the non-stationarity of $\bq(t)$ is concerned) assumption that the full system at $t=0$ is in equilibrium, that is
$p_0(\bx_0)=\peqo$ as we have assumed in the main text,
$\langle\psi_{m}^L|P_{\eeq}\rangle=\delta_{m,0}$ and Eq.~(\ref{threej})
simplifies to
\begin{eqnarray}
  P_{\eeq}(\bq,t_a+\tau,\bq',t_a,\bq_0\in\Xi_0)&\equiv&\mathbf{\mcp}_{\bx,\bx',\bx_0}(\BGamma;\bq,\bq',\bq_0\in\Xi_0)P^{P_{\eeq}}_{\mathrm{full}}(\bx,t_a+\tau,\bx',t_a,\bx_0)\nonumber\\
 &=&\sum_{k,l}\ee^{-\lambda_k\tau-\lambda_l t_a}\Psi_{0k}(\bq)\Psi_{kl}(\bq')\Psi_{l0}(\Xi_0).
\label{threejEq}
\end{eqnarray}
The corresponding three-point conditional probability densities are in
turn defined by 
\begin{eqnarray}
  \label{Greens3}
 \!\!\!\!G_{p_0}(\bq,t_a+\tau,\bq',t_a|\bq_0\in\Xi_0)&\equiv&\frac{P_{p_0}(\bq,t_a+\tau,\bq',t_a,\bq_0\in\Xi_0)}{P_{p_0}(\bq_0\in\Xi_0)}\nonumber\\
  &=&\frac{\sum_{k,l}\ee^{-\lambda_k\tau-\lambda_l t_a}\Psi_{0k}(\bq)\Psi_{kl}(\bq')\sum_m\Psi_{lm}(\Xi_0)\langle\psi_{m}^L|p_0\rangle}{\sum_l\Psi_{0l}(\Xi_0)\left\langle\psi_l^L|p_0\right\rangle},\\
 \!\!\!\!G_{\eeq}(\bq,t_a+\tau,\bq',t_a|\bq_0\in\Xi_0)&\equiv&\frac{P_{\eeq}(\bq,t_a+\tau,\bq',t_a,\bq_0\in\Xi_0)}{P_{\mathrm{eq}}(\bq_0\in\Xi_0)}\nonumber\\
  &=&\frac{\sum_{k,l}\ee^{-\lambda_k\tau-\lambda_l t_a}\Psi_{0k}(\bq)\Psi_{kl}(\bq')\Psi_{l0}(\Xi_0)}{\Psi_{00}(\Xi_0)}.
\label{Greens3e}
\end{eqnarray}
A broken time-translation invariance is, however, most explicitly visible
by means of  what we will refer to as the two-point conditioned Green's function:  
\begin{eqnarray}
  \label{propa}
 \!\!\!\!\tilde{G}_{p_0}(\bq,t_a+\tau|\bq',t_a,\bq_0\in\Xi_0)&\equiv&\frac{P_{p_0}(\bq,t_a+\tau,\bq',t_a,\bq_0\in\Xi_0)}{P_{p_0}(\bq,t_,\bq_0\in\Xi_0)}\nonumber\\
  &=&\frac{\sum_{k,l}\ee^{-\lambda_k\tau-\lambda_l t_a}\Psi_{0k}(\bq)\Psi_{kl}(\bq')\sum_m\Psi_{lm}(\Xi_0)\langle\psi_{m}^L|p_0\rangle}{\sum_{k}\ee^{-\lambda_kt_a}\Psi_{0k}(\bq')\sum_l\Psi_{kl}(\Xi_0)\left\langle\psi_l^L|p_0\right\rangle},\\
 \!\!\!\!\tilde{G}_{\eeq}(\bq,t_a+\tau|\bq',t_a,\bq_0\in\Xi_0)&\equiv&\frac{P_{\eeq}(\bq,t_a+\tau,\bq',t_a,\bq_0\in\Xi_0)}{P_{\mathrm{eq}}(\bq,t_a,\bq_0\in\Xi_0)}\nonumber\\
  &=&\frac{\sum_{k,l}\ee^{-\lambda_k\tau-\lambda_l t_a}\Psi_{0k}(\bq)\Psi_{kl}(\bq')\Psi_{l0}(\Xi_0)}{\sum_{k}\ee^{-\lambda_kt_a}\Psi_{0k}(\bq')\Psi_{k0}(\Xi_0)}.
\label{propaEq}
\end{eqnarray}
By means of Eqs.~(\ref{Greens3}) and (\ref{Greens3e}) we can now
determine aging expectation values entering \eql\ref{TaNcorr}) and
\eql\ref{a_corr}), which, for a general matrix element $\langle
q_i(\tau+t_a)q_j(t_a)\rangle$ read
\begin{eqnarray}
\langle q_i(\tau+t_a)q_j(t_a)\rangle_{p_0}^{\Xi_0}&=&\int_{\Xi_j} dq_i\int_{\Xi_j}
dq_jq_iq_jG_{p_0}(q_i,t_a+\tau,q_j,t_a|\bq_0\in\Xi_0)\nonumber\\
\langle q_i(\tau+t_a)q_j(t_a)\rangle_{\eeq}^{\Xi_0}&=&\int_{\Xi_i} dq_i\int_{\Xi_j} dq_jq_iq_jG_{\eeq}(q_i,t_a+\tau,q_j,t_a|\bq_0\in\Xi_0)
\label{agingobs}
\end{eqnarray}
The dynamics of the projected observable $\bq(t)$ is typically
referred to as aging if correlation functions like $\hat{\mathbf{C}}_{t_a}(\tau;p_0),\hat{C}_{t_a}(\tau;p_0)$ and/or
  $C_{t_a}(\tau)$ defined in Eqs.~(\ref{TNcorr}-\ref{Ncorr}) depend on
$t_a$. However, the observables in Eq.(\ref{agingobs}) only capture
linear correlations in systems with broken time-translation
invariance, and moreover display a $t_a$-dependence even in Markovian systems which are
time-translation invariant but evolve from a non-stationary initial condition (see Lemma
\ref{nonconc} below).
These correlation functions are therefore by no means conclusive
indicators of broken time-translation invariance. We
therefore propose the \emph{time asymmetry index}, $\AI$ -- a new,
conclusive (albeit not unique) indicator of broken time-translation invariance, which we define as
\begin{equation}
\AI_{\Xi_0}(t_a,\tau)\equiv\hat{\mathcal{D}}_{\bq,\bq'}\left[G_{p_0}(\bq,\tau+t_a,\bq',t_a|\bq_0\in\Xi_0)||G_{p_0}(\bq,\tau|\bq')G_{p_0}(\bq',t_a|\bq_0\in\Xi_0)\right],
\label{AI}  
\end{equation}
where we have introduced the Kullback-Leibler divergence (or relative entropy)
\begin{equation}
  \hat{\mathcal{D}}_{\textbf{y}_1,\textbf{y}_2}[p||q]\equiv
\iint\limits_{{\rm
    supp}\,p}
d\mathbf{y}_1d\mathbf{y}_2p(\mathbf{y}_1,\mathbf{y}_2)\ln\frac{p(\mathbf{y}_1,\mathbf{y}_2)}{q(\mathbf{y}_1,\mathbf{y}_2)},
\label{KL}
\end{equation}
which has the property
$\hat{\mathcal{D}}_{\textbf{y}_1,\textbf{y}_2}[p||q]\ge 0$ with the
equality being true if and only if $p(\mathbf{y}_1,\mathbf{y}_2)$ is
equal to $q(\mathbf{y}_1,\mathbf{y}_2)$ almost everywhere \cite{klarticle}.
The rationale behind this choice is that it is defined to measure
exactly the existence and degree of broken time-translation
invariance and we will use this property in the following section to assert the necessary and
sufficient conditions for the emergence of dynamical time asymmetry. 
We are now in a position to prove the central claims in the
manuscript.

\setcounter{equation}{0}
\setcounter{figure}{0}
\renewcommand{\thefigure}{C\arabic{figure}}
\renewcommand{\theequation}{C\arabic{equation}}
\section{Main theorems with proofs}\label{Oh}
\theoremstyle{definition}
\begin{definition}{Time-translation invariance
    \cite{Noether,Reichl}.}\label{Def1}
The dynamics of the observable $\bq(t)$ resulting from the projection defined in \eql\ref{projection})
of the full Markovian dynamics $\bx_t$ evolving according to
Eq.~(\ref{Langevin}) is said to relax to equilibrium in a 
time-translation invariant manner (i.e. stationary) if and only if  the two-point conditioned Green's
function in Eqs.~(\ref{propa}-\ref{propaEq}) does not depend on $t_a$,
that is
\begin{equation*}
\tilde{G}_{p_0}(\bq,t_a+\tau|\bq',t_a,\bq_0\in\Xi_0)=\tilde{G}_{p_0}(\bq,t'+\tau|\bq',t',\bq_0\in\Xi_0),\forall
\tau,t'>0.
\end{equation*}
\label{inv}
\end{definition}
\theoremstyle{definition}
\begin{definition}{Dynamical time asymmetry.}\label{Def2}
The dynamics of the projected observable
  $\bq(t)$ is said to be dynamically time asymmetric if its relaxation to equilibrium is not
time-translation invariant.
\end{definition}
\theoremstyle{definition}
\begin{definition}{Trivial non-stationarity.}
The dynamics of the projected observable $\bq(t)$ is said to be
trivially non-stationary if the relaxation is time-translation
invariant but evolves from a non-equilibrium initial condition of the
full system, $p_0(\bx_0)\ne P_{\eeq}(\bx_0)$. 
\label{trivial}
\end{definition}  
\begin{theorem}
\label{invariant}  
The dynamics of the observable $\bq(t)$ resulting from the projection defined in \eql\ref{projection})
of the full Markovian dynamics $\bx_t$ evolving according to Eq.~(\ref{Langevin})
is time-translation invariant if and only if at least one of the
following is true:\vspace{0.2cm}\\
(1)  the projected dynamics $\bq(t)$ is Markovian\\
(2)  the full system and projected observable are both prepared in and sampled from
  equilibrium, that is $p_0(\bx_0)=\peq$, $\Xi_0\to\Xi$ such that
  $\lim_{\Xi_0\to\Xi}P_{\eeq}(\bq_0\in\Xi_0)\to 1$.\vspace{0.2cm}\\   
If either of these two assumptions is true
$\AI_{\Xi_0}(t_a,\tau)=0,\forall t_a,\tau>0$.
\end{theorem}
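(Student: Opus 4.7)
The plan is to establish both directions of the biconditional using the spectral representations in Eqs.~\eqref{propa}--\eqref{propaEq}, together with the key identity \eqref{limit}, and then to derive $\AI = 0$ as a direct consequence of the factorization structure appearing inside the logarithm of \eqref{AI}. For the sufficient direction I would treat the two cases separately.

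For case (1), if $\bq(t)$ is Markovian then by Chapman-Kolmogorov the three-point joint density factorizes as $P_{p_0}(\bq,t_a+\tau,\bq',t_a,\bq_0\in\Xi_0) = G_{p_0}(\bq,\tau|\bq')\, P_{p_0}(\bq',t_a,\bq_0\in\Xi_0)$. Substituted into \eqref{propa}, the $t_a$-carrying factor cancels and $\tilde{G}_{p_0}$ reduces to $G_{p_0}(\bq,\tau|\bq')$, which is independent of $t_a$; by Definition~\ref{Def1} the dynamics is time-translation invariant. Moreover, this is precisely the factorization that appears as the reference distribution inside the logarithm of \eqref{AI}, so $\AI_{\Xi_0}(t_a,\tau)=0$ identically. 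For case (2), I would insert $p_0=\peq$, so that $\langle\psi_m^L|\peq\rangle=\delta_{m,0}$ collapses Eq.~\eqref{propa} to Eq.~\eqref{propaEq}. The limit $\Xi_0\to\Xi$ combined with \eqref{limit} forces $\Psi_{l0}(\Xi_0)\to\delta_{l,0}$, so only the $l=0$ term (for which $\lambda_0=0$) survives in the $l$-sum of both numerator and denominator. What remains is $\sum_k e^{-\lambda_k\tau}\Psi_{0k}(\bq)\Psi_{k0}(\bq')/\Psi_{00}(\bq')$, manifestly independent of $t_a$ and coinciding with the reference density inside \eqref{AI}; the KL divergence then vanishes by its defining property.

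For the necessary direction I would argue by contrapositive: assume neither (1) nor (2) holds and exhibit $(t_a,\tau)$ with $\AI_{\Xi_0}(t_a,\tau)>0$. Failure of (2) means the coefficients $b_l\equiv\sum_m\Psi_{lm}(\Xi_0)\langle\psi_m^L|p_0\rangle$ are not concentrated at $l=0$, so the $l$-sum in the numerator of \eqref{propa} contains at least one genuinely $t_a$-dependent term weighted by $e^{-\lambda_l t_a}$ with $\lambda_l>0$. Failure of (1) is equivalent, via the spectral characterization in \cite{Lapolla_f}, to the non-factorization of $\Psi_{kl}$, which prevents that residual $t_a$-dependence from being absorbed into the single-point marginal in the denominator. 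Consequently $\tilde{G}_{p_0}(\bq,t_a+\tau|\bq',t_a,\bq_0\in\Xi_0)\neq G_{p_0}(\bq,\tau|\bq')$ on a set of positive measure, and the strict positivity of the relative entropy \eqref{KL} yields $\AI_{\Xi_0}(t_a,\tau)>0$.

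The main obstacle is precisely this necessity step: ensuring that the residual $t_a$-dependence identified in the spectral expansion cannot be cancelled by accidental coincidences between the spatial factors $\Psi_{kl}(\bq')$ and the spectral weights $b_l$. I would handle this by treating the identity $\tilde{G}_{p_0}=G_{p_0}(\bq,\tau|\bq')$ as a functional equation in $(\bq',t_a)$ and exploiting the completeness and linear independence of the eigenfunctions $\{\psi_l^R\}$, $\{\psi_l^L\}$ on $V$: matching coefficients of the independent exponentials $e^{-\lambda_l t_a}$ forces either the off-diagonal entries $\Psi_{kl}$ with $l\neq 0$ to vanish identically (recovering Markovianity, case (1)) or the weights $b_l$ to collapse to $\delta_{l,0}$ (recovering stationary preparation, case (2)). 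This linear-algebraic closure argument, together with the $\AI\ge 0$ property of relative entropy, completes the biconditional and delivers the $\AI=0$ statement.
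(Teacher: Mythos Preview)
Your sufficiency argument is essentially identical to the paper's: case (1) via the Chapman--Kolmogorov factorization and case (2) via $\langle\psi_m^L|\peq\rangle=\delta_{m,0}$ together with $\Psi_{l0}(\Xi)=\delta_{l,0}$ from \eqref{limit}, both collapsing the logarithm in \eqref{AI} to zero.

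For necessity the paper takes a slightly different tack. Rather than treating $\tilde{G}_{p_0}=G_{p_0}(\bq,\tau|\bq')$ as a functional equation in $t_a$ and matching coefficients of the exponentials $e^{-\lambda_l t_a}$, it argues directly in the spatial variable: from the linear independence of $\langle\psi_0^L|$ and $\langle\psi_l^L|$ it asserts that $\Psi_{0k}(\bq)\ne\Psi_{lk}(\bq)$ for $l\neq 0$ except on a null set, which combined with smoothness and positivity forces the KL divergence to be strictly positive unless (1) or (2) holds. Both routes rest on the same completeness/linear-independence principle and are comparably heuristic at the final step; your temporal coefficient-matching is arguably cleaner in intent but must contend with possible eigenvalue degeneracies (group terms with equal $\lambda_l$ before invoking independence), whereas the paper's spatial argument sidesteps that at the cost of leaving the implication ``$\psi_l^L$ independent $\Rightarrow$ $\Psi_{lk}(\cdot)$ distinguishable under the projection'' somewhat implicit.
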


\begin{proof}
We first prove sufficiency. If the projection $\mcp$ is such that
\emph{1.} above holds then
$G_{p_0}(\bq,\tau+t_a,\bq',t_a|\bq_0\in\Xi_0)=G_{p_0}(\bq,\tau|\bq')G_{p_0}(\bq',t_a|\bq_0\in\Xi_0)$
for any $\tau,t_a,\bq,\bq',\bq_0,\Xi_0$ and $p_0(\bx_0)$, such that the
logarithmic term in \eql\ref{KL}) is identically zero everywhere and
hence $\AI_{\Xi_0}(t_a,\tau)=0, \forall t_a,\tau$. Conversely,
if \emph{2.} is true then due to \eql\ref{limit}) we have $G_{\eeq}(\mathbf{q},t|\mathbf{q}_0\in
\Xi)=\Psi_{00}(\bq)= P_{\eeq}(\bq),\forall t$ and according to
\eql\ref{Greens3}) also
$G_{p_0}(\bq,t_a+\tau,\bq',t_a|\bq_0\in\Xi)=G_{p_0}(\bq,\tau|\bq')$,
such that the
logarithmic term in \eql\ref{KL}) is again identically zero everywhere and
hence $\AI_{\Xi_0}(t_a,\tau)=0, \forall t_a,\tau$. This proves
sufficiency. 

To prove necessity we first recall that
$\hat{\mathcal{D}}_{\textbf{y}_1,\textbf{y}_2}[p||q]= 0$ if and only if $p(\mathbf{y}_1,\mathbf{y}_2)$ is
equal to $q(\mathbf{y}_1,\mathbf{y}_2)$ almost everywhere
\cite{klarticle}. In addition, as a result of \eql\ref{element}) and irrespective of the projection (as long
as it does not project onto an empty set) the time evolution of
$G_{p_0}(\bq,\tau+t_a,\bq',t_a|\bq_0\in\Xi_0)$ in \eql\ref{Greens3})
and \eql\ref{Greens3e}) as well as $G_{p_0}(\bq,\tau|\bq')$ and
$G_{p_0}(\bq',t_a|\bq_0\in\Xi_0)$ in \eql\ref{GnonMarkS}) is smooth and
continuous $\forall t_a>0, \tau>0$. Moreover,
$\Psi_{0k}(\bq)\ne\Psi_{lk}(\bq)$ for $l\ne 0$ except for potentially
on a set of $\bq$ with zero measure because of \eql\ref{element}) and
since $\langle \psi^L_0|$ and $\langle \psi^L_l|$ are linearly
independent. Therefore, because
$G_{p_0}(\bq,\tau+t_a,\bq',t_a|\bq_0\in\Xi_0)\ge 0, \forall
\tau,t_a,\bq,\bq',\bq_0,\Xi_0$ and $p_0(\bx_0)$ the Kullback-Leibler
divergence in \eql\ref{KL}) cannot not be zero almost everywhere
except if either one or both of the statements \emph{1.} or \emph{2.} above are
true. This completes the proof of necessity.
\end{proof}

\begin{corollary}
\label{binvariant}    
The dynamics of the projected observable $\bq(t)$ displays a dynamical time asymmetry 
 as soon as the projection $\mcp$ renders it non-Markovian and
it is initially not prepared in, and averaged over, an equilibrium
initial condition, i.e. $P_{p_0}(\bq_0\in\Xi_0)\ne
P_{\eeq}(\bq_0\in\Xi)=1$. If this is true then
$\AI_{\Xi_0}(t_a,\tau)>0$ at least on a dense set of
$t_a$ and $\tau$ with non-zero measure. 
\end{corollary}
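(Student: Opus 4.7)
The plan is to read this corollary as essentially the contrapositive of Theorem~\ref{invariant}, supplemented by a regularity argument that promotes pointwise nonvanishing of $\AI_{\Xi_0}$ to nonvanishing on a dense set of positive measure.

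First I would recall from the characterization of vanishing Kullback--Leibler divergence that $\AI_{\Xi_0}(t_a,\tau)=0$ is equivalent to the almost-everywhere factorization
\[
G_{p_0}(\bq,t_a+\tau,\bq',t_a|\bq_0\in\Xi_0)=G_{p_0}(\bq,\tau|\bq')\,G_{p_0}(\bq',t_a|\bq_0\in\Xi_0).
\]
This identity simultaneously encodes (i) the Markov property of the projected observable on $[t_a,t_a+\tau]$ and (ii) time-translation invariance of the two-point conditioned propagator in the sense of Definition~\ref{Def1}. The hypotheses of the corollary are precisely the negations of the two alternatives (1) and (2) in Theorem~\ref{invariant}, so by the necessity part of that theorem the projected dynamics is not time-translation invariant. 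Combined with the assumed failure of the Markov property, the factorization above cannot hold on a set of full measure in $(\bq,\bq')$, and therefore $\AI_{\Xi_0}(t_a^\star,\tau^\star)>0$ for at least one pair $(t_a^\star,\tau^\star)\in(0,\infty)^2$.

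Second, to upgrade pointwise positivity to a dense set of nonzero measure I would exploit the spectral representations \eqref{JnonMarkS}, \eqref{GnonMarkS} and \eqref{Greens3}. Both the three-point density and the factorized product are absolutely convergent sums of exponentials $\ee^{-\lambda_k\tau-\lambda_l t_a}$ with smooth position-dependent amplitudes built from the matrix elements $\Psi_{kl}$. Consequently $\AI_{\Xi_0}(t_a,\tau)$ is real-analytic in $(t_a,\tau)$ on the open quadrant $(0,\infty)^2$. The zero set of a nonconstant real-analytic function on a connected open subset of $\mathbb{R}^2$ has Lebesgue measure zero, so $\AI_{\Xi_0}>0$ on the complement -- an open dense subset of $(0,\infty)^2$ of positive measure, as claimed.

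I anticipate the main obstacle to lie in the first step: one must rule out that the failure of factorization occurs only on a subset of configurations $(\bq,\bq')$ of zero Lebesgue measure, which would leave the integrand of \eqref{KL} zero almost everywhere despite the abstract non-Markovianity. This reduces to showing that the spectral channels signalling non-Markovianity -- those contributions in \eqref{Greens3} with $l\neq 0$ that would otherwise be killed by $\langle\psi^L_l|\peq\rangle=\delta_{l,0}$ -- cannot conspire to cancel on an open set. That nondegeneracy is, however, the same linear-independence statement for the left eigenvectors $\langle\psi^L_k|$ that is already invoked in the necessity part of Theorem~\ref{invariant}, so once it is imported verbatim the analyticity argument closes the proof.
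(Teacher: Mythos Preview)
Your approach is the same as the paper's: the corollary is obtained as the contrapositive of Theorem~\ref{invariant}. The paper's own proof is a single sentence stating that the result follows immediately from a straightforward extension of that theorem's proof, without spelling out anything further.

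Your proposal is therefore more explicit than the paper on two points. First, you make precise that the vanishing of $\AI_{\Xi_0}$ is equivalent to the almost-everywhere factorization, and you correctly locate the nondegeneracy input (linear independence of the $\langle\psi^L_k|$) in the necessity half of Theorem~\ref{invariant}. Second, your real-analyticity argument for upgrading a single point of positivity to an open dense set of full measure is a genuine addition: the paper invokes only smoothness and continuity of the Green's functions in its proof of Theorem~\ref{invariant} and does not separately justify the ``dense set of non-zero measure'' clause in the corollary. Smoothness alone gives an open neighbourhood of positive measure but not density; your analyticity route closes that gap, provided one checks that integration over $(\bq,\bq')$ and the logarithm preserve analyticity in $(t_a,\tau)$, which is routine given the exponential spectral form and strict positivity of the densities.
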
  
\begin{proof}
The proof follows immediately from a straightforward extension of the proof of Theorem \ref{invariant}.
\end{proof}
\begin{lemma}
\label{nonconc}  
Aging 
correlation functions like $\hat{\mathbf{C}}_{t_a}(\tau;p_0),\hat{C}_{t_a}(\tau;p_0)$ and/or
$C_{t_a}(\tau)$ defined in Eqs.~(\ref{TNcorr}-\ref{Ncorr}) are not
conclusive indicators of the dynamical time asymmetry because they cannot discriminate between
trivial non-stationarity and broken time-translation invariance, that is, they can display
a dependence on $t_a$  even if the relaxation to equilibrium is
time-translation invariant.
\end{lemma}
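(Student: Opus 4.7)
The plan is to prove the lemma by explicit counterexample: exhibit a system that is time-translation invariant in the sense of Definition \ref{Def1} yet whose aging correlation functions depend non-trivially on the aging time $t_a$. By Theorem \ref{invariant} a Markovian projected observable is time-translation invariant irrespective of its initial condition; combining this with Definition \ref{trivial}, it suffices to take a Markovian $\bq(t)$ evolving from a non-equilibrium density $p_0(\bx_0)\ne P_{\mathrm{eq}}(\bx_0)$ and to check that at least one of the aging correlations in Eqs.~(\ref{TaNcorr})--(\ref{a_corr}) retains a genuine dependence on $t_a$. Throughout I would work with the scalar one-dimensional form $C_{t_a}^{\Xi_0}(\tau)$, since the tensorial $\hat{\mathbf{C}}_{t_a}(\tau;p_0)$ and its trace contain it as a diagonal component, so a single $t_a$-dependent matrix element implies that the full tensorial statement holds too.

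First I would insert the spectral representation of the Markovian propagator developed in Section \ref{lower} together with the expansion $p_0=\sum_l|\psi_l^R\rangle\langle\psi_l^L|p_0\rangle$ into the defining expressions for $\langle q(t_a)\rangle_{p_0}^{\Xi_0}$, $\langle q(t_a)^2\rangle_{p_0}^{\Xi_0}$, and $\langle q(t_a+\tau)q(t_a)\rangle_{p_0}^{\Xi_0}$. This produces sums of the form $\sum_l c_l\,e^{-\lambda_l t_a}\,(\cdots)$ and $\sum_{k,l}c_l\,e^{-\lambda_k\tau-\lambda_l t_a}\,\langle\psi_k^L|q|\psi_l^R\rangle\langle\psi_0^L|q|\psi_k^R\rangle$, with coefficients $c_l\equiv\langle\psi_l^L|p_0\rangle$. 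Since $p_0\ne P_{\mathrm{eq}}$, at least one $c_l$ with $l\ge 1$ is nonzero, so each moment individually carries a non-trivial $t_a$-dependence through the factors $e^{-\lambda_l t_a}$.

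The main obstacle is to establish that this $t_a$-dependence actually survives the subtractions in the numerator $\langle q(t_a+\tau)q(t_a)\rangle-\langle q(t_a+\tau)\rangle\langle q(t_a)\rangle$ and in the denominator $\langle q(t_a)^2\rangle-\langle q(t_a)\rangle^2$ of $C_{t_a}^{\Xi_0}(\tau)$, rather than cancelling identically. The key structural observation is that the two-point moment has a genuine double-sum shape with off-diagonal cross terms $\langle\psi_k^L|q|\psi_l^R\rangle$ for $k\ne l$, whereas the product of single-point moments only pairs $(0,l)$ matrix elements; these two structures cannot coincide term-by-term except by coincidence, so cancellation cannot be complete. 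To make this quantitative I would choose a concrete $p_0$ for which the $(k,l)=(1,1)$ cross contribution to the numerator is non-vanishing, which is a generic condition on the overlap $\langle\psi_1^L|p_0\rangle$ and on the matrix element $\langle\psi_1^L|q|\psi_1^R\rangle$. The Brownian particle confined to a unit interval, prepared either at a point $\Omega_0=x_0$ or uniformly on an interval $\Omega_0=[a,b]$, provides such a realization in closed form; its propagator is explicit, the projected dynamics is Markovian (the observable is $\bx$ itself), and Fig.~\ref{boxed} in the main text already demonstrates numerically that $C_{t_a}(\tau)$ depends markedly on $t_a$ over more than two decades in $\tau/t_{\mathrm{rel}}$.

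To close the argument I would invoke part (1) of Theorem \ref{invariant}: for this same Markovian observable $\AI_{\Xi_0}(t_a,\tau)=0$ for all $t_a,\tau>0$, so by Definitions \ref{Def1}--\ref{Def2} the dynamics is time-translation invariant and exhibits only trivial non-stationarity in the sense of Definition \ref{trivial}. The co-existence of $\AI_{\Xi_0}(t_a,\tau)\equiv 0$ with a manifestly $t_a$-dependent $C_{t_a}^{\Xi_0}(\tau)$ means that the aging correlation functions of Eqs.~(\ref{TaNcorr})--(\ref{a_corr}) cannot discriminate trivial non-stationarity from genuinely broken time-translation invariance, which is precisely the claim of the lemma.
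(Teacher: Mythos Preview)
Your proposal is correct and follows essentially the same strategy as the paper: exhibit a Markovian observable evolving from a non-equilibrium initial condition, verify that it is time-translation invariant (you via Theorem~\ref{invariant}, the paper by directly computing the two-point conditioned Green's function $\tilde G_{p_0}(\by,t_a+\tau|\by',t_a,\by_0)=Q(\by,\tau|\by')$), and then argue that the aging correlation function nevertheless depends on $t_a$ through the factor $Q(\by',t_a|\Xi_0)$ in the three-point density. The paper likewise invokes the Brownian particle in a box as the explicit realization. One small addition in the paper's proof that you omit is the converse remark---that genuine broken time-translation invariance can in principle be invisible in the linear correlation functions of Eqs.~(\ref{TaNcorr})--(\ref{a_corr}) and only show up at higher order---which strengthens the ``cannot discriminate'' claim in both directions, though the ``that is'' clause of the lemma only strictly requires the direction you prove.
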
  

\begin{proof}
A simple example suffices to prove this claim. Consider that the
observable $\by_t$ is evolving according to Markov dynamics (for
conditions imposed on $\mcp$ for this to occur please see \cite{Lapolla_f})  with Green's function
$Q(\by,t|\by_0)$. It is not difficult to show that the relaxation to
equilibrium is time-translation invariant. Namely, consider,
the probability density of $\by$ at a time $\tau+t_a$ given that at time $t_a$ the system
was found in a point $\bq'$ whereby it evolved there from an
initial probability density $p_0(\by)$:
\begin{eqnarray}
\tilde{G}_{p_0}(\by,\tau+t_a|\by',t_a,\by_0)&\equiv&\frac{\int_{\Xi_0}d\by_0 Q(\by,\tau+t_a|\by',t_a)Q(\by',t_a|\by_0)p_0(\by_0)}{\int_{\Xi_0}d\by_0
  Q(\by',t_a|\by_0)p_0(\by_0)}\nonumber\\
&=& Q(\by,\tau+t_a|\by',t_a)=Q(\by,\tau|\by'),
\end{eqnarray}
where we allow (redundantly) and under-sampling of $p_0$ by setting $\Xi_0\ne\Xi$. 
Clearly, and expectedly, the relaxation process $\by(t_a)\to \by(t_a
+\tau)$ is  time-translation invariant -- it depends only on $\by'=\by(t_a)$ but
does not depend on how this state was reached.
Analogously to \eql\ref{Greens3}) we also define the
three-point joint probability density of $\by$ evolving from an
initial probability density $p_0(\by)$
\begin{eqnarray}
G_{p_0}(\by,\tau+t_a,\by',t_a|\Xi_0)&\equiv&\frac{\int_{\Xi_0}d\by_0 Q(\by,\tau+t_a|\by',t_a) Q(\by',t_a|\by_0)p_0(\by_0)}{\int_{\Xi_0}d\by_0p_0(\by_0)}\nonumber\\
&=&Q(\by,\tau|\by')
  Q(\by',t_a|\Xi_0),
\label{Markov3}  
\end{eqnarray}
where the propagation
from $\by'=\bq(t_a)\to \bq(t_a +\tau)$ only depends on $\bq(t_a)$ but
not on how this state was reached. It follows immediately that the
time asymmetry index $\AI$ for this process is identically zero (See Theorem~\ref{invariant} and Corollary~\ref{binvariant}).
Nevertheless, because $G_{p_0}(\by,\tau+t_a,\by',t_a|\Xi_0)$
in \eql\ref{Markov3}) depends on the probability that the system is
found at time $t_a$ in $\bq'$ (but not on how it got there) the aging
correlation function obtained from \eql\ref{Markov3}) would display
a dependence on $t_a$ as long as $p_0(\by_0)\ne P_{\eeq}$ and
$\Xi_0\ne\Xi$, which
implies non-stationarity in a trivial sense (i.e. this would equally
well be the case even 
simple Brownian diffusion of a particle in a box, which is manifestly
time-translation invariant). 

Conversely, it is also possible that the relaxation is indeed not
translationally invariant according to Definition~\ref{inv} but a dependence on $t_a$
only arises in the evolution of higher order (even or odd)
correlation functions and is not visible in first order correlations in
Eqs.~(\ref{TaNcorr}-\ref{a_corr}). 
\end{proof}

\begin{observation}{Characteristic scaling of aging correlation functions.}
An interesting and very common observation in the existing
literature on glassy, aging dynamics is an irreducible
structure $C_{t_a}(\tau;p_0)=f(\tau+t_a,t_a)$ (see
e.g. \cite{Herisson,LN,Ritort,Franz}) frequently accompanied by a
characteristic power-law scaling of aging autocorrelation functions
\cite{Franz,Ritort,Jeremy,LN}. A particularly striking observation is the
frequently observed so-called 'full aging' regime where the observation
time window $\tau$ becomes much longer than the aging time $t_a$,
$\tau\gg t_a$, and the following simple scaling emerges
$C_{t_a}(\tau;p_0)\propto t_a/\tau$ \cite{Bouchaud_92,Amir,PNAS,Rodriguez_2003}.
Below we explain the emergence of irreducible
structure as well as power-law-decaying aging autocorrelation
functions incl. the full aging within the context of our spectral-theoretic approach.
\end{observation}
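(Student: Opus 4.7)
The plan is to start from the spectral expansion of the three-point joint density in Eq.~\eqref{Greens3} and the definition of the aging correlation in Eqs.~\eqref{TaNcorr}--\eqref{agingobs}. Substituting Eq.~\eqref{Greens3} into the integrand of Eq.~\eqref{agingobs} and carrying out the integrations over $\bq$ and $\bq'$ yields
\begin{equation}
\langle q(t_a+\tau) q(t_a)\rangle_{p_0}^{\Xi_0}=\sum_{k,l} e^{-\lambda_k\tau-\lambda_l t_a}\,\mathcal{A}_{kl}(\Xi_0,p_0),
\end{equation}
where $\mathcal{A}_{kl}$ collects the integrals of $q\Psi_{0k}(q)$, $q'\Psi_{kl}(q')$, and the initial-condition factors. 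Rewriting $e^{-\lambda_k\tau-\lambda_l t_a}=e^{-\lambda_k(t_a+\tau)}\,e^{(\lambda_k-\lambda_l) t_a}$ makes explicit that each term depends jointly on the absolute times $t=t_a+\tau$ and $t_a$, yielding the claimed structure $C_{t_a}(\tau;p_0)=f(t_a+\tau,t_a)$. The first key step is to observe that reducibility of $f$ to a function of the single argument $\tau$ requires $\mathcal{A}_{kl}\propto\delta_{l,0}$ (equivalent to Markovian projected dynamics) or $\lambda_l=0$ for all contributing $l$ (equivalent to equilibrium initial sampling), which recovers exactly the dichotomy of Theorem~\ref{invariant}.

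For the power-law scaling I would restrict to self-similar projected dynamics, for which by construction a rescaling $t\mapsto c\,t$ induces $\lambda\mapsto\lambda/c^\alpha$ while leaving the overlap structure invariant up to a prefactor. Passing to the continuum limit I replace the spectral sum by an integral with density $\rho(\lambda)\sim\lambda^{\alpha-1}$ as $\lambda\to 0$, so that after centring around the mean the numerator of $C_{t_a}(\tau;p_0)$ takes the form
\begin{equation}
N(t_a,\tau)=\int_0^\infty\!d\lambda\,\rho(\lambda)e^{-\lambda\tau}\int_0^\infty\!d\mu\,\rho(\mu)e^{-\mu t_a}\,\mathcal{M}(\lambda,\mu),
\end{equation}
with $\mathcal{M}$ collecting the overlap factors and assumed regular and non-vanishing at the origin. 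A Tauberian argument applied to the $\lambda$-integral then extracts the characteristic power-law decay, the exponent being inherited directly from $\rho$; the limiting case $\alpha=0$ with $\rho(\lambda)\sim 1/\lambda$ produces the logarithmic behaviour alluded to in the main text.

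Next I would pass to the full-aging regime $\tau\gg t_a$. Here the outer exponential $e^{-\lambda\tau}$ restricts the dominant contribution to $\lambda\lesssim 1/\tau$, so that $\mu t_a$ (with $\mu$ of order $\lambda$) is uniformly small and the inner exponential may be Taylor-expanded, $e^{-\mu t_a}\approx 1-\mu t_a+\mathcal{O}((\mu t_a)^2)$. The zeroth-order term reproduces the equilibrium plateau $A$, while the term linear in $t_a$, combined with the Tauberian exponent from the $\lambda$-integral and the matching exponent from the $\mu$-integral, yields the asymptotic $B_\alpha(t_a/\tau)^\alpha$ correction for $\alpha>0$ and $B_0\, t_a/\tau$ for $\alpha=0$, after normalizing by the variance $\langle q(t_a)^2\rangle-\langle q(t_a)\rangle^2$ whose $t_a$-dependence is likewise fixed by the self-similarity exponent. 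This recovers Eq.~\eqref{powerlaw}.

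The principal obstacle will be controlling the regularity of $\mathcal{M}(\lambda,\mu)$ near $\lambda=\mu=0$: to apply the Tauberian argument cleanly I need $\mathcal{M}$ to be continuous and non-vanishing there, which in turn hinges on the small-eigenvalue behaviour of the spectral overlaps $\Psi_{0k},\Psi_{kl}$. I expect this is straightforward to verify explicitly for the Rouse chain, the single-file tracer, and fractal diffusion -- the paradigmatic self-similar examples -- but a fully general statement would require additional structural assumptions on the projection $\BGamma$, to be recorded as hypotheses rather than derived from the generic framework.
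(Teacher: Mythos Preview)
Your proposal is close in spirit to the paper's treatment, but the self-similar and full-aging parts differ in their assumptions, and the full-aging step contains a gap.

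A structural remark first: the Observation itself is a preamble, not a proved statement; the actual content resides in the subsequent Theorem~\ref{repr}, Proposition~\ref{similar}, and Proposition~\ref{full}. Your proposal is effectively a sketch of these three results.

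For the irreducible structure, the paper (Theorem~\ref{repr}) proceeds exactly as you do: write the aging correlation as the double spectral sum $\sum_{k,l} e^{-\lambda_k\tau-\lambda_l t_a}\mathcal{A}_{kl}$, then split off $l=0$ (the stationary piece $g_1(\tau)$) from $l\ge 1$ (the non-stationary piece $g_2(\tau,t_a)$). Your argument that reducibility forces $\mathcal{A}_{kl}\propto\delta_{l,0}$ or $\lambda_l=0$ is consistent with the dichotomy of Theorem~\ref{invariant}.

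For the power-law scaling, the paper (Proposition~\ref{similar}) makes a more restrictive and more concrete assumption than you do: it \emph{postulates} factored weights $w_{kl}=\delta_1^{-\alpha k}\delta_2^{-\alpha l}y_1$ together with geometric eigenvalues $\lambda_k=(\delta^k\tau_0)^{-1}$, which decouples the double sum into a product of single sums. Each is converted to an integral and evaluated as an incomplete Gamma function, producing separate power laws in $\tau$, $t=\tau+t_a$, and $t_a$. Your route via a continuous spectral density $\rho(\lambda)\sim\lambda^{\alpha-1}$ and a Tauberian estimate is equivalent \emph{when the weights factor}, but you carry a general coupled kernel $\mathcal{M}(\lambda,\mu)$. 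Without factorization the two integrals do not separate, and the clean extraction of $(t_a/\tau)^\alpha$ becomes substantially harder; the paper simply assumes factorization rather than proving it.

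The full-aging step is where your argument breaks. You claim that for $\tau\gg t_a$ the outer cutoff $e^{-\lambda\tau}$ forces $\lambda\lesssim 1/\tau$, and that ``$\mu$ of order $\lambda$'' so $\mu t_a$ is small and $e^{-\mu t_a}$ may be Taylor-expanded. But nothing ties $\mu$ to $\lambda$: the $\mu$-integral has its own cutoff $e^{-\mu t_a}$, which localizes $\mu\sim 1/t_a$, not $\mu\sim 1/\tau$, so $\mu t_a=\mathcal{O}(1)$ and the expansion is unjustified. Even granting the expansion, the zeroth- and first-order terms in $\mu t_a$ yield, after the $\mu$-integral, constants and a term linear in $t_a$; invoking a ``matching exponent from the $\mu$-integral'' on top of the Taylor expansion is double-counting. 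The paper avoids this: it first obtains explicit powers $\tau^{-\alpha}$, $(\tau+t_a)^{-\alpha}$, $t_a^{-\alpha}$ from the factored Gamma-function evaluation, and only then expands $(\tau+t_a)^{-\alpha}$ in $t_a/\tau$ to obtain Eq.~\eqref{full_a}. The $\alpha=0$ case (Proposition~\ref{full}) comes analogously from $\Gamma(0,x)\simeq -\ln x$ and expanding $\ln\!\big((\tau+t_a)/\tau\big)\approx t_a/\tau$.
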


\begin{theorem}{A representation result.}
\label{repr}  
Let $\varphi$ be a positive real number smaller than 1, $0<\varphi<1$,
and the functions $g_1(t):\mathbb{R}^+\to\mathbb{R}$ and $g_2(\tau,t):\mathbb{R}^+\times\mathbb{R}^+\to\mathbb{R}$ be smooth for $t>0$ and
$t,\tau>0$, respectively. A matrix element of aging correlation
functions, $C_{t_a,i,j}(\tau;p_0)$ or $C_{t_a}(\tau;p_0)$, defined in
Eqs.~(\ref{TaNcorr}-\ref{a_corr}) has the following irreducible
structure in the form of a stationary contribution $g_1(t)$ and a
non-stationary contribution $g_2(\tau,t)$:
\begin{equation*}
C_{t_a,i,j}(\tau;p_0) = (1-\varphi)g_1(\tau)+\varphi g_2(\tau,t_a).
\end{equation*}
\end{theorem}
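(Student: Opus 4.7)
My plan is to build the decomposition directly from the spectral representation of the three-point conditional density $G_{p_0}(\bq,\tau+t_a,\bq',t_a|\bq_0\!\in\!\Xi_0)$ in Eq.~\eqref{Greens3}. The crucial observation is that the double sum over eigenmodes $(k,l)$ splits naturally at $l=0$: since $\lambda_0=0$, all $l=0$ terms carry the factor $e^{-\lambda_l t_a}=1$ and therefore depend on $t_a$ only through the (time-independent) normalization $P_{p_0}(\bq_0\!\in\!\Xi_0)$, while all $l\ne 0$ terms carry a genuine $e^{-\lambda_l t_a}$ factor. After integrating against $q_i q_j$ to form $\langle q_i(\tau+t_a)q_j(t_a)\rangle_{p_0}^{\Xi_0}$ as in \eqref{agingobs}, this partition is preserved.

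Concretely, I would introduce the shorthand $\pi_{lm}\equiv\Psi_{lm}(\Xi_0)\langle\psi_m^L|p_0\rangle$ and the moment integrals $M_k^{(i)}\equiv\int_{\Xi_i}dq_i\,q_i\Psi_{0k}(q_i)$ and $N_{kl}^{(j)}\equiv\int_{\Xi_j}dq_j\,q_j\Psi_{kl}(q_j)$. Then the un-normalized numerator of $\langle q_i(\tau+t_a)q_j(t_a)\rangle_{p_0}^{\Xi_0}$ reads $\sum_{k,l}e^{-\lambda_k\tau-\lambda_l t_a}M_k^{(i)}N_{kl}^{(j)}\sum_m\pi_{lm}$, and the analogous splitting can be carried out for $\langle q_i(\tau+t_a)\rangle_{p_0}^{\Xi_0}\langle q_j(t_a)\rangle_{p_0}^{\Xi_0}$ using \eqref{NexpectedS}. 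The $l=0$ contribution to the covariance, divided by the constant $P_{p_0}(\bq_0\!\in\!\Xi_0)=\sum_l\Psi_{0l}(\Xi_0)\langle\psi_l^L|p_0\rangle$, gives a smooth function of $\tau$ alone which I identify (up to a normalization factor $w_1$) with $g_1(\tau)$; the residual $l\ne 0$ contribution, again divided by the same constant, gives a smooth function of both $\tau$ and $t_a$ which I identify (up to a factor $w_2$) with $g_2(\tau,t_a)$. Setting $\varphi=w_2/(w_1+w_2)$ and pulling the overall scale into the definitions of $g_{1,2}$ yields the claimed representation.

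The step I expect to require the most care is controlling the prefactor $\varphi$ so that the claim $0<\varphi<1$ holds strictly rather than as a soft inequality. Strict positivity of both $w_1$ and $w_2$ is essentially equivalent to (i) the stationary part being nonvanishing, which is guaranteed whenever the observable has nontrivial equilibrium fluctuations and $\Xi_0$ has nonzero measure, and (ii) the non-stationary part being nonvanishing, which requires precisely that the projected dynamics be non-Markovian and $p_{0}(\bq_0)\!\ne\!P_{\mathrm{eq}}(\bq_0)$ on $\Xi_0$, mirroring Corollary~\ref{binvariant}. I would therefore state the representation with the standing assumption that $\AI_{\Xi_0}(t_a,\tau)>0$ on a set of positive measure, and then show by contradiction that neither $w_1$ nor $w_2$ can vanish under that assumption, using the linear independence of the left eigenvectors $\{\langle\psi_l^L|\}$ exactly as in the proof of Theorem~\ref{invariant}. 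Smoothness of $g_{1,2}$ then follows immediately from the uniform convergence of the spectral series on compact subsets of $\tau,t_a>0$, which is granted by the coercivity of $\hat{\mathcal{L}}$ and the resulting discrete spectrum $\lambda_k\to\infty$.
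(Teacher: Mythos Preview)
Your approach is essentially identical to the paper's: both proofs split the double spectral sum $\sum_{k\ge 0}\sum_{l\ge 0}$ in Eq.~\eqref{Greens3} at $l=0$, identify the $l=0$ piece (which carries $e^{-\lambda_0 t_a}=1$) as the stationary part $g_1(\tau)$, and lump the remainder plus the product-of-means term into $g_2(\tau,t_a)$. The paper's proof is in fact terser than yours and does not address the strictness of $0<\varphi<1$ or the smoothness of $g_{1,2}$ at all, so your additional care on those points goes beyond what the paper supplies.
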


\begin{proof}
We recall the definition of aging expectation values in
\eql\ref{agingobs}) and simply split the double sum in
Eqs.~(\ref{Greens3}) and (\ref{Greens3e}) into two parts
$\sum_{k\ge 0}\sum_{l\ge 0}\to\sum_{k\ge 0;l=0}+\sum_{k\ge
  0}\sum_{l\ge 1}$. The second term in the numerator of
Eqs.~(\ref{TaNcorr}-\ref{a_corr}) is nominally non-stationary
(i.e. depends on both $\tau=t-t_a$ and $t_a$). Collecting terms we obtain the
representation stated in the theorem.
\end{proof}

\theoremstyle{definition}
\begin{definition}{Self-similar dynamics.}
 \label{selfsim} 
  Let us write a general non-aging correlation function in \eql\ref{TNcorr}) as
  \begin{equation*}
\mathbf{C}_{ij}(t;p_0)=\sum_{k>0}w^{ij}_k(\Xi_0;p_0)\ee^{-\lambda_k t}.
  \end{equation*}   
The dynamics of the projected observable $\bq(t)$ is said to be
(transiently) self-similar on a time-scale
$0< t \lesssim \lambda_{k_{\mathrm{min}}}^{-1}$ (see e.g. \cite{Montroll})
if a time-scale change $t\to t\delta$ does not change the relaxation
beyond a renormalization of the weights,
$w^{ij}_k\to\tilde{w}^{ij}_k(\delta)$. That is
if $\lambda_k$
and $w_k$ are not independent, such that $\exists k_{\mathrm{min}}\in \mathbb{Z}^+$ and constants
$(\tau_0,\alpha,\delta,y)\in\mathbb{R}^+$ with $0<\delta<1$ and $y<1$ such that
for $k\in \mathbb{Z}^+>k_{\mathrm{min}}$ we have $\lambda_k=(\delta^{k}\tau_0)^{-1}$ and $w_k=\delta^{-\alpha
  k}y$. Then we have, on the time-scale $0<\tau_0\ll t\lesssim
\lambda_{k_{\mathrm{min}}}^{-1}$
  \begin{eqnarray}
\mathbf{C}_{ij}(t;p_0)&=&
\sum_{0<k<k_{\mathrm{min}}}w^{ij}_k(\Xi_0;p_0)\ee^{-\lambda_k
  t}+\sum_{k\ge k_{\mathrm{min}}}^{\infty}w^{ij}_k(\Xi_0;p_0)\ee^{-\lambda_k
  t}\nonumber\\
&\simeq& \mathrm{const} + y\int_{k_{\mathrm{min}}}^\infty dx \delta^{-\alpha
  x}\ee^{-t\delta^{-x}/\tau_0}\nonumber\\
&=& \mathrm{const}+
\frac{y}{\ln\delta^{-1}}\left(\frac{t}{\tau_0}\right)^{-\alpha}\Gamma\left(\alpha,\lambda_{k_{\mathrm{min}}}t\right)\nonumber\\
&=&\mathrm{const}+\frac{y}{\ln\delta^{-1}}\left[\Gamma(\alpha)\left(\frac{t}{\tau_0}\right)^{-\alpha}
-\mathcal{O}\left(\frac{\delta^{-\alpha k_{\mathrm{min}}}}{\alpha}\right)\right]
\label{ssim}
  \end{eqnarray}   
where $\Gamma(\alpha)$ and $\Gamma(\alpha,z)$ denote the complete and upper
incomplete Gamma functions, respectively, and $\simeq$ stands for
asymptotic equality, that is that the fraction of the left and the
right hand side converges to 1 for  $0<\tau_0\ll t\lesssim
\lambda_{k_{\mathrm{min}}}^{-1}$. $\mathbf{C}_{ij}(t;p_0)$ thus transiently decays asymptotically according to a power-law with exponent $\alpha$.
\end{definition}

\begin{proposition}{Self-similar time asymmetric dynamics.}
\label{similar}
Let us write the non-normalized aging correlation function, i.e. the numerator in
Eqs.~(\ref{TaNcorr}-\ref{a_corr}), compactly as 
\begin{equation}
\mathbf{C}_{t_a,ij}(t;p_0)=\sum_{k>0}\sum_{l>0}\left(w^{ij}_{kl}(\Xi_0;p_0)\ee^{-\lambda_k
  (t-t_a)-\lambda_lt_a}- \tilde{w}^{ij}_{kl}(\Xi_0;p_0)\ee^{-\lambda_kt-\lambda_lt_a}\right).
\label{agingcor}
\end{equation}
We now extend the idea of self-similar scaling in Definition
\ref{selfsim} to aging correlations. Let $C_1,C_2,C_4\in\mathbb{R}^+$, $C_3\in\mathbb{R}$,
$\lambda_{k}=(\delta_1^k\tau_0)^{-1}$, $w_{kl}=\delta_1^{-\alpha
  k}\delta_2^{-\alpha l}y_1$, $\tilde{w}_{kl}=\delta_3^{-\alpha
  k}\delta_2^{-\alpha l}y_1$ for suitably chosen $0<\delta_1,\delta_2,\delta_3<1$,
$y_1<1$, $\tau=t-t_a$ and $(\tau_0,\alpha,k_\mathrm{min})$ as in Definition
\ref{selfsim}. Then we have, for $0<\tau_0\ll t,\tau,t_a\lesssim
\lambda_{k_{\mathrm{min}}}^{-1}$ asymptotically
\begin{eqnarray}
  \mathbf{C}_{t_a,ij}(t;p_0)&\simeq&
  C_1\!+y_1\Gamma(\alpha)\left[\frac{C_2}{\ln\delta_1^{-1}}\left(\frac{\tau}{\tau_0}\right)^{-\alpha}\!\!\!-
    \frac{C_4}{\ln\delta_3^{-1}}\left(\frac{t}{\tau_0}\right)^{-\alpha}\right]\nonumber\\
&+&\frac{y_1\Gamma(\alpha)}{\ln\delta_2^{-1}}\left(\frac{t_a}{\tau_0}\right)^{-\alpha}\left[C_3+\frac{\Gamma(\alpha)}{\ln\delta_1^{-1}}\left(\frac{\tau}{\tau_0}\right)^{-\alpha}-\frac{\Gamma(\alpha)}{\ln\delta_3^{-1}}\left(\frac{t}{\tau_0}\right)^{-\alpha}\right]
  +\!\mathcal{O}(\kappa_{\mathrm{max}})
\label{powr}
\end{eqnarray}
where
$\kappa_{\mathrm{max}}\equiv\displaystyle{\alpha^{-1}\max\left(\frac{C_2\delta_1^{-\beta_{\mathrm{min}}}}{\ln\delta_1^{-1}},\frac{C_4\delta_3^{-\beta_{\mathrm{min}}}}{\ln\delta_3^{-1}},\frac{C_3\delta_2^{-\beta_{\mathrm{min}}}}{\ln\delta_2^{-1}},\frac{\Gamma(\alpha)(\delta_1\delta_2)^{-\beta_{\mathrm{min}}
  }}{\ln\delta_1\ln\delta_2},\frac{\Gamma(\alpha)(\delta_3\delta_2)^{-\beta_{\mathrm{min}}
  }}{\ln\delta_3\ln\delta_2}\right)}$ and  $\beta_{\mathrm{min}}=\alpha
k_{\mathrm{min}}$. On a ``good'' scale of $t_a$, i.e. where
$y_1\Gamma(\alpha)C_3(\tau_0/t_a)^\alpha /\ln\delta^{-1}_1 \simeq{\rm
  const}\equiv B$ is effectively 
constant (that is, varies slowly) with respect to $(t_a/t)^\alpha$
then 
  \begin{equation}
    \mathbf{C}_{t_a,ij}(t;p_0)\simeq C_1+B\left[1+\frac{C_2\ln \delta_2}{C_3\ln\delta_1}\left(\frac{t_a}{\tau}\right)^\alpha-\frac{C_4\ln \delta_3}{C_3\ln\delta_1}\left(\frac{t_a}{\tau+t_a}\right)^\alpha\right]+\mathcal{O}\left(\left[\frac{\tau_0^2}{t_a\tau}\right]^\alpha\right)
  \end{equation} 
  Moreover, when $\tau\gg t_a$ we have the the ``anomalous full aging'' scaling 
  \begin{eqnarray}
    \mathbf{C}_{t_a,ij}(t;p_0)&\simeq& C_1+B\left[1+\frac{C_2\ln
        \delta_2-C_4\ln
        \delta_3}{C_3\ln\delta_1}\left(\frac{t_a}{\tau}\right)^\alpha+\alpha\frac{C_4\ln
        \delta_3}{C_3\ln\delta_1}\left(\frac{t_a}{\tau}\right)^{\alpha+1}\right]+\mathcal{O}\left(\left[\frac{t_a}{\tau}\right]^{\alpha+2}\right)\nonumber\\
        &=& B_1+ B_2 \left(\frac{t_a}{\tau}\right)^\nu
    +\mathcal{O}\left(\left[\frac{t_a}{\tau}\right]^{\alpha+2}\right)
  \label{full_a}  
  \end{eqnarray} 
 where $\nu=\alpha$ if $C_2\ln \delta_2\ne C_4\ln \delta_3$ and
 $\nu=\alpha+1$ otherwise.
\end{proposition}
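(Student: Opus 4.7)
My plan is to exploit the multiplicative factorization of the weights, $w_{kl}=(\delta_1^{-\alpha k})(\delta_2^{-\alpha l})y_1$ and $\tilde w_{kl}=(\delta_3^{-\alpha k})(\delta_2^{-\alpha l})y_1$, together with the additive separation of the exponents $\mathrm{e}^{-\lambda_k \tau-\lambda_l t_a}$ and $\mathrm{e}^{-\lambda_k t-\lambda_l t_a}$ in \eqref{agingcor}, so that each of the two double sums collapses into the product of two independent one-dimensional sums over $k$ and $l$. Each one-dimensional sum is of exactly the self-similar type treated in Definition~\ref{selfsim}: replacing the sum over $k\ge k_{\min}$ by an integral produces the representative quantity $\frac{1}{\ln\delta^{-1}}(s/\tau_0)^{-\alpha}\Gamma(\alpha,\lambda_{k_{\min}} s)$ with $(s,\delta)\in\{(\tau,\delta_1),(t,\delta_3),(t_a,\delta_2)\}$.

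In the regime $\tau_0\ll t,\tau,t_a\lesssim \lambda_{k_{\min}}^{-1}$, each upper incomplete Gamma may be replaced by $\Gamma(\alpha)$ modulo a correction of order $\delta^{-\beta_{\min}}/\alpha$. The residual slow-mode contributions from $0<k,l\le k_{\min}$ together with subleading pieces of the integral evaluation vary slowly in time on the scale of interest and are absorbed into the constants $C_{1\text{--}4}$, while all errors collectively give $\mathcal{O}(\kappa_{\max})$. Multiplying out the three one-dimensional factors -- a single $l$-factor times one $k$-factor from the $w$-sum and one from the $\tilde w$-sum -- and regrouping by powers of $t_a$ reproduces the decomposition \eqref{powr}. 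In particular, the $C_3$ term originates from the slow-$k$/fast-$l$ portion, while the cross terms $(t_a/\tau_0)^{-\alpha}(\tau/\tau_0)^{-\alpha}$ and $(t_a/\tau_0)^{-\alpha}(t/\tau_0)^{-\alpha}$ arise from the fast-$k$/fast-$l$ products.

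On a ``good scale'' of $t_a$ the slowly-varying $(t_a/\tau_0)^{-\alpha}$-prefactor of the $C_3$ piece is identified with $B$ and factored out of the two cross terms; using $\ln\delta^{-1}=-\ln\delta$ the logarithmic coefficients rearrange into the stated ratios $(C_{2,4}\ln\delta_{2,3})/(C_3\ln\delta_1)$, and the power combinations simplify via $(\tau/\tau_0)^{-\alpha}(t_a/\tau_0)^\alpha=(t_a/\tau)^\alpha$ and analogously for $t=\tau+t_a$. This yields the intermediate display with the mixed ratios $t_a/\tau$ and $t_a/(\tau+t_a)$.

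The only genuinely non-routine step is the full-aging expansion for $\tau\gg t_a$. Setting $t=\tau+t_a$ and using the binomial series
\begin{equation*}
\left(\frac{t_a}{\tau+t_a}\right)^{\!\alpha}=\left(\frac{t_a}{\tau}\right)^{\!\alpha}-\alpha\left(\frac{t_a}{\tau}\right)^{\!\alpha+1}+\mathcal{O}\!\left(\left[\frac{t_a}{\tau}\right]^{\alpha+2}\right),
\end{equation*}
and substituting, the coefficient of $(t_a/\tau)^\alpha$ is proportional to $C_2\ln\delta_2-C_4\ln\delta_3$, giving $\nu=\alpha$ generically; in the fine-tuned case $C_2\ln\delta_2=C_4\ln\delta_3$ this coefficient vanishes and the next-order term with coefficient $\alpha C_4\ln\delta_3/(C_3\ln\delta_1)$ takes over, producing $\nu=\alpha+1$. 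The main obstacle throughout is bookkeeping the various error contributions -- slow-mode residuals, incomplete-Gamma remainders, and Euler--Maclaurin discretization errors from replacing the sums by integrals -- consistently so that they repackage into the single $\mathcal{O}(\kappa_{\max})$ of \eqref{powr} and $\mathcal{O}([t_a/\tau]^{\alpha+2})$ of \eqref{full_a}; this requires $\beta_{\min}=\alpha k_{\min}$ to be large enough that $\delta^{-\beta_{\min}}$ does not dominate over the retained $\Gamma(\alpha)$ contributions.
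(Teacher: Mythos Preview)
Your proposal is correct and follows essentially the same route as the paper's own proof: the paper splits each double sum into the four blocks $\sum_{0<k<k_{\min}}\sum_{0<l<k_{\min}}+\sum_{k\ge k_{\min}}\sum_{0<l<k_{\min}}+\sum_{0<k<k_{\min}}\sum_{l\ge k_{\min}}+\sum_{k\ge k_{\min}}\sum_{l\ge k_{\min}}$, notes that the slow-mode blocks are effectively constant on the scale $t\lesssim\lambda_{k_{\min}}^{-1}$, evaluates the fast-mode pieces via the self-similar integral of Definition~\ref{selfsim}, and then obtains the full-aging scaling by the same binomial expansion of $(t_a/(\tau+t_a))^\alpha$ that you write down. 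Your factorization of the double sum into a product of one-dimensional sums is exactly what makes the paper's four-block split work, so the two arguments coincide; your discussion of the error bookkeeping is somewhat more explicit than the paper's but not different in substance.
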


\begin{proof}
  To prove the proposition we split each of the double sums as
\begin{equation*}
  \sum_{k>0}\sum_{l>0}=\sum_{0<k< k_\mathrm{min}}\sum_{0<l<
    k_\mathrm{min}}+\sum_{k\ge k_\mathrm{min}}\sum_{0<l<
    k_\mathrm{min}}+\sum_{0<k< k_\mathrm{min}}\sum_{l\ge
    k_\mathrm{min}}+\sum_{k\ge k_{\mathrm{min}}}\sum_{l \ge k_{\mathrm{min}}}
\end{equation*}
and note that $\int_{0}^{k_{\mathrm{min}}} dx \delta^{-\alpha
  x}\ee^{-t/(\delta^x\tau_0)}\simeq \mathrm{const}$ for
$t\lesssim\lambda_{k_{\mathrm{min}}}^{-1}$. The rest follows directly from
the computation in Definition
\ref{selfsim} upon rearranging and collecting terms. To the the
``anomalous full aging'' scaling we expand
$(t_a[\tau+t_a])^{\alpha}=(1+\tau/t_a)^{-\alpha}=(t_a/\tau)^\alpha(1-\alpha\tau/t_a+\mathcal{O}((t_a/\tau)^2)$
and collect terms. Upon identifying the constants $B_1$ and $B_2$ we
arrive at Eq.~\eqref{full_a} which completes the proof.
\end{proof}
\begin{remark}
Note that the scaling-from in Definition~\ref{selfsim} arises,
    for example, when the observable corresponds to an internal
    distance within a single macromolecule (such e.g as the Rouse chain) \cite{Rouse_self_sim}
     or within individual protein
    molecules \cite{Granek,proteins_self_sim}, as well as in diffusion on
    fractal objects \cite{fractals}.
\end{remark}

\begin{proposition}{Logarithmic relaxation and 'full aging'.}
\label{full}
Let $\mathbf{C}_{t_a,ij}(t;p_0)$ be written as in \eql\ref{agingcor}) in
Proposition~\ref{similar} and let $\alpha=0$, $\delta_1=\delta_3$ and $C_2=C_4=C\in\mathbb{R}^+$  (i.e. $1/f$ self-similar
scaling with logarithmic relaxation
\cite{Amir,PNAS,Bouchaud_92}; in fact a simple change of
integration variable $x\to \delta^{-x}$ in
Eqs.~(\ref{ssim})-(\ref{powr}) with $\alpha=0$ shows that this case is
mathematically equivalent to the analysis in \cite{Amir,PNAS}). Then we have, for $0<\tau_0\ll t,\tau,t_a\lesssim
\lambda_{k_{\mathrm{min}}}^{-1}$ asymptotically
\begin{eqnarray}
\label{logr}  
  \mathbf{C}_{t_a,ij}(t;p_0)&\simeq&
  C_1\!+\frac{C_2y_1}{\ln\delta_1^{-1}}\ln\!\left(\frac{\tau+t_a}{\tau}\right)+\frac{y_1}{\ln\delta_2^{-1}}\ln\!\left(\frac{1}{\lambda_{k_{\mathrm{min}}}t_a}\right)\!\left[C_3+\ln\!\left(\frac{\tau+t_a}{\tau}\right)\right]
  \!+\!\mathcal{O}(\kappa)
\end{eqnarray}    
where
$\kappa=\max\left(\lambda_{k_{\mathrm{min}}}t,\lambda_{k_{\mathrm{min}}}\tau,\lambda_{k_{\mathrm{min}}}t_a\right)$. Moreover,
in te limit $\tau\gg t_a$ we find
\begin{equation}
 \label{fulls}  
\mathbf{C}_{t_a,ij}(t;p_0)\simeq C_1 + \frac{y_1}{\ln\delta_1^{-1}}\!\left(C_2+\frac{\ln\delta_1^{-1}}{\ln\delta_2^{-1}}\ln\!\left(\frac{1}{\lambda_{k_{\mathrm{min}}}t_a}\right)\right)\frac{t_a}{\tau}+\frac{C_3y_1}{\ln\delta_2^{-1}}\ln\!\left(\frac{1}{\lambda_{k_{\mathrm{min}}}t_a}\right)\!+\!\mathcal{O}(\kappa),
\end{equation}
such that when $\lambda_{k_{\mathrm{min}}}t_a=\mathcal{O}(1)$ we
recover the so-called 'full aging' scaling \cite{Amir,PNAS,Rodriguez_2003}
\begin{equation} 
  \mathbf{C}_{t_a,ij}(t;p_0)\simeq C_1 +
  C_2\frac{t_a}{\tau}
\label{full_c}  
\end{equation} 
\end{proposition}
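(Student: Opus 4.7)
The plan is to mirror the proof of Proposition~\ref{similar}, replacing the power-law tails obtained for $\alpha>0$ with their $\alpha\to 0$ logarithmic counterparts, and using the degeneracy $\delta_1=\delta_3$, $C_2=C_4=C$ to collapse the two separate log divergences of the $k$-tail into a single logarithm of a ratio.

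First I would split each of the two double sums in \eqref{agingcor} into the four regions $(k,l)\in\{(<k_{\mathrm{min}},<k_{\mathrm{min}}),\,(\ge k_{\mathrm{min}},<k_{\mathrm{min}}),\,(<k_{\mathrm{min}},\ge k_{\mathrm{min}}),\,(\ge k_{\mathrm{min}},\ge k_{\mathrm{min}})\}$, exactly as in the proof of Proposition~\ref{similar}. On the time window $0<\tau_0\ll t,\tau,t_a\lesssim \lambda_{k_{\mathrm{min}}}^{-1}$ the two ``small-index'' regions contribute only essentially constant pieces, which I would absorb into $C_1$ and into the constants $C_2,C_3$ multiplying the slowly-varying factors. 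In each remaining tail region I would replace the sum by an integral, noting that with $\alpha=0$ the weights collapse to $w_{kl}=\tilde{w}_{kl}=y_1$, so the double sum factorizes into a $k$-integral and an $l$-integral.

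The key computation is the $\alpha\to 0$ analogue of Eq.~\eqref{ssim}. The change of variables $u=\sigma/(\delta_1^x\tau_0)$ reduces the $k$-integral to an exponential integral,
\[
\int_{k_{\mathrm{min}}}^\infty\!\!\ee^{-\lambda_k\sigma}\,dk=\frac{E_1(\lambda_{k_{\mathrm{min}}}\sigma)}{\ln\delta_1^{-1}},
\]
and the small-argument expansion $E_1(z)=-\gamma-\ln z+\mathcal{O}(z)$ applies since $\lambda_{k_{\mathrm{min}}}\sigma\lesssim 1$. Under the assumption $\delta_1=\delta_3$ and $C_2=C_4$, the constants $-\gamma$ and the cut-off piece $\ln\lambda_{k_{\mathrm{min}}}$ cancel between the $w$-weighted $\ee^{-\lambda_k\tau}$ contribution and the $\tilde{w}$-weighted $\ee^{-\lambda_k t}$ contribution, leaving the clean difference
\[
\frac{E_1(\lambda_{k_{\mathrm{min}}}\tau)-E_1(\lambda_{k_{\mathrm{min}}}(\tau+t_a))}{\ln\delta_1^{-1}}=\frac{1}{\ln\delta_1^{-1}}\ln\!\left(\frac{\tau+t_a}{\tau}\right)+\mathcal{O}(\kappa).
\]
The analogous $l$-integral yields $\ln\bigl(1/(\lambda_{k_{\mathrm{min}}}t_a)\bigr)/\ln\delta_2^{-1}$, and the cross term from the doubly-tail region produces the product of the two logarithms. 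Collecting the four pieces and appending the absorbed constants $C_1,C_2,C_3$ reproduces \eqref{logr}.

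For the full-aging limit $\tau\gg t_a$, I would Taylor-expand $\ln(1+t_a/\tau)=t_a/\tau+\mathcal{O}((t_a/\tau)^2)$ in \eqref{logr}; the surviving lowest-order pieces give \eqref{fulls}. Finally, when $\lambda_{k_{\mathrm{min}}}t_a=\mathcal{O}(1)$, the slowly varying factor $\ln(1/(\lambda_{k_{\mathrm{min}}}t_a))$ is itself $\mathcal{O}(1)$ and may be absorbed into the renormalized constants of \eqref{full_c}. I expect the main obstacle to be the careful bookkeeping of which sub-leading pieces are genuinely constant on the window $0<\tau_0\ll t,\tau,t_a\lesssim\lambda_{k_{\mathrm{min}}}^{-1}$ versus which are slowly varying and must be retained as $\ln(1/(\lambda_{k_{\mathrm{min}}}t_a))$, together with uniform control of the correction $\mathcal{O}(\kappa)$ arising jointly from the tails of $E_1$ and from the continuum approximation to the mode sums near $k_{\mathrm{min}}$.
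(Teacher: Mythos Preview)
Your proposal is correct and follows essentially the same route as the paper. The paper's proof is terser: it simply invokes the small-argument expansion $\Gamma(0,x)=-\ln x+\gamma+\mathcal{O}(x)$ of the upper incomplete Gamma function appearing in the intermediate step of Proposition~\ref{similar} (recall $\Gamma(0,x)=E_1(x)$, so this is exactly your exponential-integral expansion), plugs it in with $\alpha=0$, $\delta_1=\delta_3$, $C_2=C_4$, and then Taylor-expands $\ln(1+t_a/\tau)$ for the full-aging limit---precisely the steps you lay out in more explicit detail.
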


\begin{proof}
The proof of the proposition is straightforward and follows from
noticing that in the limit $x\ll1$ we have
$\Gamma(0,x)=-\ln(x)+\gamma+\mathcal{O}(x)$. Plugging into the
expression in Proposition~\ref{similar} we find, upon elementary
manipulations, the result in Proposition~\ref{full}. The 'full aging'
scaling is further obtained by Taylor expanding the logarithm to first
order. 
\end{proof}  

\begin{remark}\label{exmpl}
The representation of $C_{t_a,i,j}(\tau;p_0)$ given in Theorem \ref{repr} is indeed frequently observed in
experiments on glassy systems \cite{LN}, while self-similar aging
dynamics with a power law scaling as in Proposition \ref{similar} has
been observed both in glassy systems \cite{Herisson,Ritort,LN} as well as in individual
protein molecules \cite{Jeremy} and, in a similar form, emerges in the case
of phenomenological so-called continuous time random walk models with
diverging waiting times \cite{EliIgor}. Notably, in the specific case
of $1/f$ self-similar dynamics  our analysis also recovers the
well-known, yet puzzling, ``full aging'' limiting scaling of
$C_{t_a,i,j}(\tau;p_0)$ (see \eql\ref{full_c}) 
\cite{Bouchaud_92,Amir,PNAS,Rodriguez_2003}, and in
particular the presence of the logarithmic correction in $t_a$ to the full
aging scaling in \eql\ref{fulls}) may potentially explain the observation that a
perfect $f(t_a/\tau)$ collapse is only observed for a specific ``good''
range values of the aging time $t_a$ \cite{Rodriguez_2003}.  Moreover,
the power-law aging in Eq.~(\ref{full_a}) for $t\gg t_a$ agrees with
the ``renewal aging'' in fractional
dynamics \cite{Ralf} (since $z^\alpha$ is the leading order term of
the expansion of the incomplete Beta function, $B(z,\alpha,1-\alpha)$,
as $z\to 0$). 
These specific results, as
wall as others, therefore emerge as special cases of the framework
presented in this work.
\end{remark}  

\begin{observation}{With respect to dynamical time asymmetry, the under-sampling of equilibrium is equivalent to a temperature quench.}\label{obs2}
Let $P^T_\eeq(\bx)$ be the equilibrium probability density function of
the full system at a temperature $T$ prior to a quench in temperature, 
that is different from the ambient temperature $T_0$, i.e. $T>T_0$. Expanding
$P^T_\eeq(\bx)$ in the eigenbasis of $\LL$ (at the ambient
temperature)  we find
$P^T_\eeq(\bx)=\sum_l|\psi_l^R\rangle\left\langle\psi_l^L|P^T_\eeq\right\rangle$,
where $\lim_{T\to
  T_0}\left\langle\psi_l^L|P^T_\eeq\right\rangle=\delta_{l0}$. Let us
further assume that the observable $\bq$ is fully sampled from
$P^T_\eeq(\bx)$ (like in the case of a supercooled liquid),
i.e. $\Xi_0=\Xi$, such that according to \eql\ref{limit}) we have
$\Psi_{kl}(\Xi)=\delta_{kl}$.
\begin{eqnarray*}
G_{P^T_\eeq}(\bq,t|\bq_0\in
\Xi)&=&\sum_{k}\ee^{-\lambda_kt}\Psi_{0k}(\bq)\left\langle\psi_k^L|P^T_\eeq\right\rangle\\
G_{P^T_\eeq}(\bq,t_a+\tau,\bq',t_a|\bq_0\in\Xi)&=&\sum_{k,l}\ee^{-\lambda_k\tau-\lambda_l
    t_a}\Psi_{0k}(\bq)\Psi_{kl}(\bq')\langle\psi_{l}^L|P^T_\eeq\rangle.
\end{eqnarray*}
since $\Psi_{kk}(\Xi)=\left.\lflat
P^T_\eeq\right\rangle=1$. According to Theorem \ref{invariant}  and
Corollary \ref{binvariant} a temperature quench gives rise
to broken time-translation invariance as longs as the projection renders the dynamics
non-Markovian.
Now consider an system prepared in equilibrium $p_0(\bx)=\peq$ but
with the
projected observable undersampled from said equilibrium, i.e. for a
domain $\bq_0\in\Xi_0\subset \Xi$ such that
$P_{\eeq}(\bq_0\in\Xi_0)\ne 1$. Then (see \eql\ref{GnonMarkSeq}) and \eql\ref{Greens3e}))
\begin{eqnarray*}
G_{\eeq}(\mathbf{q},t|\mathbf{q}_0\in \Xi_0)&=&\sum_{k}\ee^{-\lambda_kt}\Psi_{0k}(\bq)\frac{\Psi_{k0}(\Xi_0)}{\Psi_{00}(\Xi_0)}\\
 G_{\eeq}(\bq,t_a+\tau,\bq',t_a|\bq_0\in\Xi_0) &=& \sum_{k,l}\ee^{-\lambda_k\tau-\lambda_l
    t_a}\Psi_{0k}(\bq)\Psi_{kl}(\bq')\frac{\Psi_{l0}(\Xi_0)}{\Psi_{00}(\Xi_0)},
\end{eqnarray*}
which has a broken time-translation invariance as long as the projection renders the dynamics
non-Markovian according to Theorem \ref{invariant} and Corollary \ref{binvariant}.
Clearly, the only difference between the two non-Markovian time
evolutions is in the factor $\psi_{l}^L|P^T_\eeq\rangle$ versus
$\Psi_{l0}(\Xi_0)/\Psi_{00}(\Xi_0)$, which demonstrates that the
effect of
temperature quench and under-sampling of equilibrium are indeed
(qualitatively) virtually
indistinguishable as stated in the observation.
\end{observation}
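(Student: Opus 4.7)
The plan is to express the three-point conditional Green's function, which governs dynamical time asymmetry, in each of the two scenarios using the spectral-theoretic framework of Appendix~\ref{2B}, and then to read off their structural equivalence by a direct side-by-side comparison.

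First I would treat the temperature-quench scenario. Because $P^T_{\eeq}$ is not a null eigenvector of $\LL$ (the latter being defined with respect to the ambient temperature $T_0$), I expand it in the bi-orthonormal eigenbasis, $P^T_{\eeq}(\bx_0)=\sum_l|\psi_l^R\rangle\langle\psi_l^L|P^T_{\eeq}\rangle$, with overlaps that are generically non-zero for $l>0$ whenever $T\ne T_0$. Since the observable is fully sampled ($\Xi_0=\Xi$), Eq.~\eqref{limit} collapses the projection factor to $\Psi_{kl}(\Xi)=\delta_{kl}$. Substituting into Eqs.~\eqref{GnonMarkS} and~\eqref{Greens3} directly yields the pair of closed expressions displayed in the observation.

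Next I would treat the under-sampling scenario. Here $p_0(\bx_0)=P_{\eeq}(\bx_0)$, so that in the spectral expansion of $p_0$ only the stationary mode survives, $\langle\psi_l^L|P_{\eeq}\rangle=\delta_{l0}$; however, because $\Xi_0\subsetneq\Xi$, the projection factor $\Psi_{l0}(\Xi_0)$ remains non-zero for $l>0$ and would collapse to $\delta_{l0}$ only in the limit $\Xi_0\to\Xi$, again by Eq.~\eqref{limit}. Substituting into Eqs.~\eqref{GnonMarkSeq} and~\eqref{Greens3e} then yields the second pair of expressions in the observation.

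The conclusion follows from a direct comparison: the two scenarios produce spectral expansions of identical operator-theoretic structure, differing only by the substitution $\langle\psi_l^L|P^T_{\eeq}\rangle\leftrightarrow\Psi_{l0}(\Xi_0)/\Psi_{00}(\Xi_0)$ of the coefficient weighting the $l$-th non-stationary mode. Applying Theorem~\ref{invariant} and Corollary~\ref{binvariant} to each case shows that, whenever the projection renders the dynamics non-Markovian, non-vanishing coefficients for any $l>0$ suffice to guarantee $\AI_{\Xi_0}(t_a,\tau)>0$ on a set of positive measure, so both scenarios generate DTA through exactly the same spectral mechanism. The main subtlety is conceptual rather than technical: one must clarify that ``equivalent'' here does not mean pointwise equality of the two three-point densities, but rather that both preparations inject weight into the same non-stationary modes, and that in each of the trivial limits ($T\to T_0$ and $\Xi_0\to\Xi$) the respective off-diagonal weights vanish in the same fashion, restoring a Markovian-type factorisation of the three-point density and making DTA disappear.
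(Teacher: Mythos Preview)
Your proposal is correct and follows essentially the same approach as the paper: both arguments specialise the general spectral expressions \eqref{GnonMarkS}--\eqref{Greens3} (respectively \eqref{GnonMarkSeq}--\eqref{Greens3e}) to the two preparations, use \eqref{limit} to collapse the appropriate factor, and then observe that the resulting three-point densities differ only in the mode weights $\langle\psi_l^L|P^T_{\eeq}\rangle$ versus $\Psi_{l0}(\Xi_0)/\Psi_{00}(\Xi_0)$, invoking Theorem~\ref{invariant} and Corollary~\ref{binvariant} for the DTA conclusion. Your closing remark clarifying that ``equivalent'' means structural rather than pointwise identity is a useful addition not made explicit in the paper.
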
  

\setcounter{equation}{0}
\setcounter{figure}{0}
\renewcommand{\thefigure}{D\arabic{figure}}
\renewcommand{\theequation}{D\arabic{equation}}
\section{Physical models, experimental and simulation data}\label{systems}
\subsection{Fictitious dynamical time asymmetry in a time-translation invariant system: the Brownian particle in a box}\label{box}
Consider the propagator (i.e. the probability density) of a Browninan particle with diffusion
coefficient $D1$ confined in a
box of unit length $L$, $G(x,t|x_0,0)$ (without loss of generality we
express length in units of $L$ and time in units of $L^2/D$ such that
$x\to x/L$ and $t\to t D/L^2$) with $x\in [0,1]$, evolving according the Fokker-Planck equation
\begin{equation}
 \partial_t G(x,t|x_0,0)=\partial_x^2 G(x,t|x_0,0),\quad \partial_x G|_{x=0}=\partial_x G|_{x=1}=0,
\end{equation}
with initial condition $G(x,0|x_0)=\delta(x-x_0)$. The spectral
expansion of the Green's
function of the problem reads
\begin{equation}
 G(x,t|x_0,0)=\sum_{k=0}^\infty
 \psi_k(x)\psi_k(x_0)\mathrm{e}^{-\lambda_k t},\quad \psi_k(x)=\sqrt{2-\delta_{0,k}}\cos(k \pi x),\, \lambda_k=k^2\pi^2,
\end{equation}
where we note that the problem is self-adjoint and hence
$\psi_k^R(x)=\psi_k^L(x)=\psi_k(x)$. Let us define
\begin{eqnarray}
I_k&\equiv&
2^{-\delta_{k0}}+(1-\delta_{k0})\sqrt{2}(\cos(k\pi)-1)/k^2\pi^2\nonumber\\
J_{k,l}&\equiv&2^{-\delta_{k0}\delta_{l0}}
+I_k\delta_{l0}+I_l\delta_{k0}+\delta_{kl}/2k\pi +2(1-\delta_{kl})(k^2+l^2)[\cos(k\pi)\cos(l\pi)-1]/[(k^2-l^2)\pi]^2.
\end{eqnarray}
Then we have 
\begin{align}
 &\langle x(t) \rangle =\int_0^1 dx G(x,t|x_0,0)x=\sum_{k=0}^\infty
 I_k\psi_k(x_0)\mathrm{e}^{-\lambda_k t}\nonumber\\
 &\langle x(\tau+t_a)x(t_a) \rangle =\int_0^1 dx \int_0^1 dx_1G(x,\tau+t_a|x_1,t_a)G(x_1,t_a|x_0,0)xx_1=\sum_{k=0}^\infty I_k\mathrm{e}^{-\lambda_k (\tau+t_a)}\sum_{l=0}^\infty J_{k,l}\psi_l(x_0)\mathrm{e}^{-\lambda_l t_a},
\end{align}
which enter the definition of the aging correlation function
\begin{equation}
 C_{t_a,x_0}(\tau)=\frac{\langle x(\tau+t_a)x(t_a) \rangle-\langle x(\tau+t_a)\rangle\langle x(t_a) \rangle}{\langle x(t_a)x(t_a) \rangle-\langle x(t_a)\rangle^2}.
 \label{aging corr}
\end{equation}
When the initial distribution is not a point but is sampled from a
flat distribution between $a$ and $b$ (as in the example in the main
text), i.e. uniformly from a domain $\Omega_0=[a,b]$, we instead define
\begin{equation}
 L_k(b,a)= \delta_{k0} + (1-\delta_{k0})\sqrt{2}[\sin(k\pi b)-\sin(k\pi a)]/k\pi(b-a)
\end{equation}
and then
\begin{eqnarray}
  \langle x(t) \rangle_{\Omega_0}&=&\int_0^1 dx \int_{\Omega_0} dx_0 G(x,t|x_0,0)P_0(x_0)x=\sum_{k=0}^\infty I_kL_k(b,a)\mathrm{e}^{-\lambda_k t}\nonumber\\
\langle x(\tau+t_a)x(t_a) \rangle_{\Omega_0}&=&\!\!\int_0^1 dx \int_0^1 dx_1 \int_{\Omega_0} dx_0G(x,\tau+t_a|x_1,t_a)G(x_1,t_a|x_0,0)P_0(x_0)xx_1\nonumber\\&=& \sum_{k=0}^\infty I_k\mathrm{e}^{-\lambda_k (\tau+t_a)}\sum_{l=0}^\infty J_{k,l}L_l(b,a)\mathrm{e}^{-\lambda_l t_a}.
\label{interval}
\end{eqnarray}
Once inserted in Eq.~(\ref{aging corr}) Eq.~(\ref{interval}) deliver
the aging autocorrelation function shown in Fig.~2 in the main text
that displays \emph{fictitious dynamical time asymmetry} (i.e. trivial dependence on $t_a$). In the
meantime, the relaxation dynamics is time-translation invariant
according to Definition~\ref{inv} as a result of Theorem \ref{invariant} (see also
Eq.~(2) in the main text), since it is Markovian and thus satisfies
the Chapman-Kolmogorov semi-group property (see Lemma \ref{nonconc}). The fictitious dynamical time asymmetry is
thus a result of trivial non-stationarity in Definition \ref{trivial}. 

\subsection{Rouse polymer model}\label{Rouse}
 The Rouse polymer chain \cite{Fixman1,wilemski_diffusioncontrolled_1974} is a flexible macromolecule consisting of harmonic springs
 of zero rest-length. The potential energy of the macromolecule with
 $N+1$ point-like units (here referred to as 'beads') with a
 configuration $\bR\in\mathbb{R}^{3(N+1)}\equiv\{\bR_i\}$, where $\bR_i\in\mathbb{R}^3$ is given by
 $U(\bR)=\sum_{i=1}^{N}\frac{3}{\beta b^2}|\bR_{i+1}-\bR_i|^2$,
 where $b$ is the so-called Kuhn length describing the
 size of a chain segment (i.e. the characteristic distance between two
 beads) and will for convenience (and without any loss of generality)
 here be set to $b=\sqrt{3}$. The dynamics is assumed to evolve
 according to overdamped diffusion (with all beads having a equal
 diffusion coefficient $D$)  in a heat bath with zero mean
 Gaussian white noise, i.e. according to the system of coupled It$\mathrm{\hat{o}}$ equations
 \begin{equation}
 \label{Rouse}  
 d\bR_t=-\beta D\underline{\mathbf{M}}\bR_t dt
 +\sqrt{2D}d\hat{\mathbf{W}}_t,
 \end{equation}
 where $\hat{\mathbf{W}}_t$ denotes for a $3(N+1)$-dimensional vector of independent
Wiener processes whose increments have a Gaussian distribution with
zero mean and variance $dt$: $\mathbb{E}[d\hat{W}_{t,i}d\hat{W}_{t',j}]=\delta_{ij}\delta(t-t')dt$.
 $\mathbb{E}[\cdot]$ denotes the expectation over the ensemble of
Wiener increments. The interaction matrix $\underline{\mathbf{M}}$ is
the $3(N+1)\times 3(N+1)$ tridiagonal Rouse super-matrix whose elements are
$\mathbf{M}_{ij} \mathbbm{1}$ (where $\mathbbm{1}$ denotes the
$3\times 3$ unit matrix) and the $(N+1)\times (N+1)$ matrix $\mathbf{M}$ has elements 
$\mathbf{M}_{ii}=(2-1^{\delta_{i1}}-1^{\delta_{iN}})$ and $\mathbf{M}_{ii+1}=\mathbf{M}_{ii-1}=-1$. On  the level of a probability
density function the It$\mathrm{\hat{o}}$ process \eql\ref{Rouse})
corresponds to the $N$-body Fokker-Planck equation, which, introducing
the operator $\bnabla\equiv\{\nabla_i\}$ reads
\begin{equation}
\partial_t P(\bR,t) = D\left(\bnabla^T\bnabla + \beta \bnabla^T\underline{\mathbf{M}}\bR\right)P(\bR,t), 
\label{RouseFPE}  
\end{equation}
which has the structure of \eql\ref{FPE}) and can be decoupled as follows. We first rotate the coordinate
system to normal
coordinates $\bql\in\mathbb{R}^{3(N+1)}=\{\bql_i\},\forall
i\in[0,N]$ \footnote{Note that $\protect \mathbf {Q}_0$ corresponds to the center
  of mass, which does not affect the dynamics of internal
  coordinates.},
i.e. $\bR_i=\bss\bql_i$ ($\bql_i\in\mathbb{R}^3$) and
$\nabla_i=\bss\nabla_{\bql_i}$ with the $(N+1)\times (N+1)$ orthogonal matrix $\bss$,
$\bss^{-1}=\bss^T$, which diagonalizes the Rouse matrix,
$\blam=\bss^{T}\mathbf{M}\bss$, where
\begin{equation}
\label{normalm}
\blam_{ik}=4\sin^2\left(\frac{k\pi}{2(N+1)}\right)\delta_{ik}\equiv\lambda_k\delta_{ik},\quad
\bss_{ik}=\sqrt{\frac{2}{N+1}}\cos\left(\frac{(2i-1)k\pi}{2(N+1)}\right),\forall
k>0,
\end{equation}
and $\bss_{i0}=(N+1)^{-1/2},\forall i$ (which is not required; see footnote).
Introducing the $3(N+1)\times 3(N+1)$ super-matrix $\bssu$, whose elements are
$\bss_{ik}\mathbbm{1}$, the transformation to normal coordinates is
found to decouple the Fokker-Planck
equation \eql\ref{RouseFPE}):
\begin{eqnarray}
  \partial_t P(\bql,t) &=&
  D\left(\bnabla^T\bssu^T\bssu\bnabla +\beta
    \bnabla^T\bssu^T\underline{\mathbf{M}}\bssu\bql \right)P(\bql,t)\nonumber\\
  &=&D\sum_{i=1}^N\left[\partial^2_{\bql_i} + \beta \lambda_i\partial_{\bql_i}\bql_i\right]P(\bql,t),
\label{RouseFPEd}  
\end{eqnarray}
whose structure implies that the solution factorizes
$P(\bql,t)=\prod_{i=1}^NP(\bql_i,t)$ and $P(\bql_i,t)$ is simply the
well-known solution of a 3-dimensional Ornstein-Uhlenbeck process. In
particular, the density of the invariant measure and Green's function
read
\begin{eqnarray}
  \label{OUP}
  P_\eeq(\bql)&=&\prod_{i=1}^N\left(\frac{\lambda_i}{2\pi}\right)^{3/2}\ee^{-\lambda_i\bql_i^2/2}\\
  Q(\bql,t|\bql',t')&=&\prod_{i=1}^N\left(\frac{\lambda_i}{2\pi(1-\ee^{-2\lambda_i(t-t')})}\right)^{3/2}\exp\left(-\frac{\lambda_i(\bql_i-\bql_i'\ee^{-\lambda_i(t-t')})^2}{2(1-\ee^{-2\lambda_i(t-t')})}\right).
\end{eqnarray}
The Gaussian structure of the solution will permit explicit results
not requiring a spectral decomposition of the Fokker-Planck operator
(which, however, is well-known \cite{Fixman1}).

We are here interested in the dynamics of the end-to-end distance of
the polymer, $q(t)\equiv|\bq(t)|=|\bR_{N+1}(t)-\bR_1(t)|$, which would be typically
probed in a single-molecule FRET or optical tweezers experiment. To
make minimal assumptions we
assume a stationary initial preparation of the full system, i.e. $P_0(\bR)=P_\eeq(\bR)$. Since
$q(t)$ at any instance depends on all other degrees of freedom
$\bR_k(t),\forall k\in[2,N]$ its dynamics is strongly
non-Markovian. In normal coordinates this corresponds to
\begin{equation}
\label{coordinate}
q(t)=\sum_{i=1}^N|\mathcal{A}_i\bql_i|=\sum_{i=1}^N\sqrt{\frac{2}{N+1}}\left|\left[\cos\left(\frac{(2N-1)k\pi}{2(N+1)}\right)-\cos\left(\frac{k\pi}{2(N+1)}\right)\right]\bql_i\right|,  
\end{equation}
having defined $\mathcal{A}_i$ in \eql\ref{coordinate}), such that introducing $d\bql\equiv\prod_{i=1}^Nd\bql_i$ the projection
operator \eql\ref{projection}) can be shown to correspond to
\begin{equation}
 \mcp_{\bql}(\BGamma;q) = q^2\int_0^{2\pi}\!\!d\varphi\int_0^\pi \!\! d\theta \int_{\Omega}d\bql\delta\left(\sum_{i=1}^N\mathcal{A}_i\bql_i-\bq(q,\varphi,\theta)\right),
  \label{projR}
\end{equation}
and the non-Markovian conditional two-point probability density is
calculated according to \eql\ref{projected}) and leads, upon a lengthy
but straightforward computation via a Fourier transform
$\mathrm{FT}\{\bq\to\mathbf{v}\}$, i.e.
$\mcp_{\bql}(\BGamma;q)f(\bql)\xrightarrow{\mathrm{FT}}\tilde{f}(\mathbf{v})\xrightarrow{\mathrm{FT}^{-1}}f(\bq)\xrightarrow{\int\!\!
d\varphi\!\int\!\! d\theta}f(q)$,
to
\begin{equation}
G_{\eeq}(q,t|q_0)=\frac{1}{\sqrt{\pi\gamma}}\frac{\phi(t)^{-1}}{\sqrt{1-\phi(t)^2}}\frac{q}{q_0}\exp\left(-\frac{q^2+q_0^2}{4\gamma(1-\phi(t)^2)}+\frac{q_0^2}{4\gamma}\right)\sinh\left(\frac{qq_0\phi(t) }{2\gamma(1-\phi(t)^2)}\right),
\label{Rouse2}  
\end{equation}
where we have defined 
\begin{equation}
\phi(t)=\sum_{i=1}^N\frac{\MA_i^2}{2\lambda_i}\ee^{-\lambda_it},\quad \gamma=\phi(0).
\end{equation}
The (non-aging) autocorrelation function in \eql\ref{Ncorr}),
$C(t)=\langle q(t)q(0)\rangle - \langle q(t)\rangle\langle q(0)\rangle$,
can in turn be shown to be given by
\begin{equation}
C(t)=\frac{4\gamma}{\pi}\left[3+\left(\frac{1}{\phi(t)}+2\right)\arctan\left(\frac{\phi(t)}{\sqrt{1-\phi(t)^2}}\right)\right]
- \frac{16\gamma}{\pi},
\label{RouseCorr}  
\end{equation}
which decays to zero as $t\to\infty$. We now address the three-point
conditional probability density \eql\ref{Greens3e}) and aging autocorrelation
function \eql\ref{a_corr}), which are much more challenging. As such a complex
calculation has, to the best of our knowledge, not been performed
before for any stochastic system, we here present a more detailed derivation.\\ 
\indent We start with \eql\ref{Mthreej}), plug in
Eqs.~(\ref{OUP}) and use \eql\ref{projR}) to first calculate the three-point
joint density of the vectorial counterpart, i.e. $P_{\eeq}(\bq,t_a+\tau,\bq',t_a,\bq_0)$. We now perform a triple Fourier transform
\begin{equation}
  \label{3Fourier}
\tilde{P}_{\eeq}(\bu,t_a+\tau,\bv,t_a,\bw)\equiv\!\int \!\!   d\bq \ee^{-i\mathbf{u}\cdot \bq}\!\int \!\!  d\bq' \ee^{-i\mathbf{v}\cdot \bq'}\!\int \!\!  d\bq_0 \ee^{-i\mathbf{w}\cdot \bq_0}P_{\eeq}(\bq,t_a+\tau,\bq',t_a,\bq_0)
\end{equation}  
and carry out all integrations over $\bql,\bql'$ and $\bql_0$ (a total
of $3N$ integrals each) and introduce the short-hand notation
$S_{t}=\phi(t)$ to find
\begin{equation}
\label{FT3}
\tilde{P}_{\eeq}(\bu,t_a+\tau,\bv,t_a,\bw)=\frac{1}{(2\pi)^9}\exp\left(-\gamma(\bw^2-\bv^2-\bu^2)-2S_{t_a}\bw^T\bv-2S_t\bu^T\bw-2S_\tau\bu^T\bv\right).
\end{equation}
We now invert back all three Fourier transforms and introduce
auxiliary functions $\mX_{\tau,t_a}\equiv
\gamma^3-\gamma(S_{t_a}^2-S_t^2-S^2_\tau)+2S_{t_a}S_tS_\tau$ as well
as $\mY_{\tau,t_a}\equiv \gamma S_t-S_{t_a}S_\tau$ and $\mZ_{\tau,t_a}\equiv
\gamma S_\tau-S_{t}S_{t_a}$ (keeping in mind that $t=\tau+t_a$) to find
\begin{eqnarray}
\label{3vec}  
\!\!\!\!\!P_{\eeq}(\bq,t,\bq',t_a,\bq_0)&=&(4^3\pi^{3}\mX_{\tau,t_a})^{-3/2}\nonumber\\
&\times&\exp\left(-\frac{(\gamma\mX_{\tau,t_a}+\mY_{\tau,t_a})\bq_0^2+(\gamma\mX_{\tau,t_a}+\mZ_{\tau,t_a})\bq'^2+(\gamma^2-S_{t_a}^2)^2\bq^2}{4(\gamma^2-S_{t_a}^2)\mX_{\tau,t_a}}\right)\nonumber\\
&\times&\exp\left(-\frac{(S_{t_a}\mX_{\tau,t_a}+\mY_{\tau,t_a}\mZ_{\tau,t_a})}{2(\gamma^2-S_{t_a}^2)\mX_{\tau,t_a}}\bq'^T\bq_0+\frac{\mY_{\tau,t_a}}{2\mX_{\tau,t_a}}\bq^T\bq_0+\frac{\mZ_{\tau,t_a}}{2\mX_{\tau,t_a}}\bq^T\bq'\right)\!.
\end{eqnarray}
Before we perform the angular integrations, $(qq'q_0)^2\int_0^{2\pi}\!\!d\varphi\int_0^\pi \!\! d\theta
\int_0^{2\pi}\!\!d\varphi'\int_0^\pi \!\! d\theta'
\int_0^{2\pi}\!\!d\varphi_0\int_0^\pi \!\! d\theta_0$, we introduce the
final set of auxiliary functions (i.e. the third in the hierarchy of our
notation):
\begin{eqnarray}
&&\Lambda_{1}^{\tau,t_a}\equiv\frac{S_{t_a}\mX_{\tau,t_a}-\mY_{\tau,t_a}\mZ_{\tau,t_a}}{2(\gamma^2-S_{t_a}^2)\mX_{\tau,t_a}},\quad 
 \Lambda_{2}^{\tau,t_a}\equiv\frac{\mY_{\tau,t_a}}{2\mX_{\tau,t_a}},\quad
 \Lambda_{3}^{\tau,t_a}\equiv\frac{\mZ_{\tau,t_a}}{2\mX_{\tau,t_a}},
 \nonumber\\
 &&\Omega_{\boldsymbol{\Lambda}_{\tau,t_a}}{q\,\,q'\,q_0 \choose a\,\,b\,\,c }\equiv\mathrm{erfi}\left(\frac{ a\Lambda_{1}^{\tau,t_a}
    \Lambda_{3}^{\tau,t_a}q'+b \Lambda_{2}^{\tau,t_a}|
    \Lambda_{1}^{\tau,t_a}q_0+c \Lambda_{3}^{\tau,t_a}q|}{\sqrt{2
      \Lambda_{1}^{\tau,t_a} \Lambda_{2}^{\tau,t_a}
      \Lambda_{3}^{\tau,t_a}}}\right),
  \label{aux3}
\end{eqnarray}
which, after a long and laborious computation leads to the exact result
\begin{eqnarray}
\label{Rouse3}
&&P_\eeq(q,\tau+t_a,q',t_a,q_0)= \displaystyle{\frac{qq'q_0}{16\pi}\left(\frac{\gamma^2-S_{t_a}^2}{[S_{t_a}\mX_{\tau,t_a}-\mY_{\tau,t_a}\mZ_{\tau,t_a}]\mY_{\tau,t_a}\mZ_{\tau,t_a}}\right)^{1/2}}\nonumber\\
&&\times\exp\left(-\frac{S_{t_a}(\gamma^2-S_{t_a}^2)q^2}{4(S_{t_a}\mX_{\tau,t_a}-\mY_{\tau,t_a}\mZ_{\tau,t_a})}-\frac{(\gamma+S_{t_a}\frac{\mZ_{\tau,t_a}}{\mY_{\tau,t_a}})q'^2+(\gamma+S_{t_a}\frac{\mY_{\tau,t_a}}{\mZ_{\tau,t_a}})q_0^2}{4(\gamma^2-S_{t_a}^2)}\right)\nonumber\\
&&\times\Bigg\{\displaystyle{\Omega_{\boldsymbol{\Lambda}_{\tau,t_a}}{q\,\,q'\,q_0
    \choose - + - }-\Omega_{\boldsymbol{\Lambda}_{\tau,t_a}}{q\,\,q'\,q_0
    \choose + + - }+\Omega_{\boldsymbol{\Lambda}_{\tau,t_a}}{q\,\,q'\,q_0
    \choose + - + }+\Omega_{\boldsymbol{\Lambda}_{\tau,t_a}}{q\,\,q'\,q_0
    \choose + + + }}\Bigg\}.
\end{eqnarray}
The conditional three-point density is in turn obtained from
\eql\ref{Rouse3}) by
\begin{equation}
\label{GRouse3}  
G_{\eeq}(q,\tau+t_a,q',t_a|q_0)=P_\eeq(q,\tau+t_a,q',t_a,q_0)/P_\eeq(q_0).
\end{equation}
Having obtained all quantities required for the computation of the
aging correlation function and the
time asymmetry index $\AI$, the remaining integrals
\begin{equation}
C_{t_a}(\tau)=\int_0^{\infty}\!\!\! dq\int_0^{\infty}\!\!\! dq'qq'G_{\eeq}(q,\tau+t_a,q',t_a|q_0)- \!\int_0^{\infty}\!\!\! dq qG_{\eeq}(q,\tau+t_a|q_0)\int_0^{\infty}\!\!\! dq qG_{\eeq}(q,t_a|q_0)
\label{AgCorr}
\end{equation}  
as well as
\begin{equation}
\AI_{q_0}(t_a,\tau)=\int_0^{\infty} dq\int_0^{\infty} dq'G_{\eeq}(q,\tau+t_a,q',t_a|q_0)\log\left(\frac{G_{\eeq}(q,\tau+t_a,q',t_a|q_0)}{G_{\eeq}(q,\tau|q')G_{\eeq}(q',t_a|q_0)}\right)  
\label{AgRouse}
\end{equation}  
are performed using an adaptive Gauss-Kronrod routine
 \cite{boost_quadrature}. The results for a Rouse chain with $N=50$
 beads are presented in
 Fig.~\ref{rousechain_fig} and the corresponding time asymmetry index
 $\AI_{q_0}(t_a,\tau)$ in Fig.~3a in the manuscript. 
 \begin{figure}[ht!!]
 \begin{center}
  \includegraphics[width=0.8\textwidth]{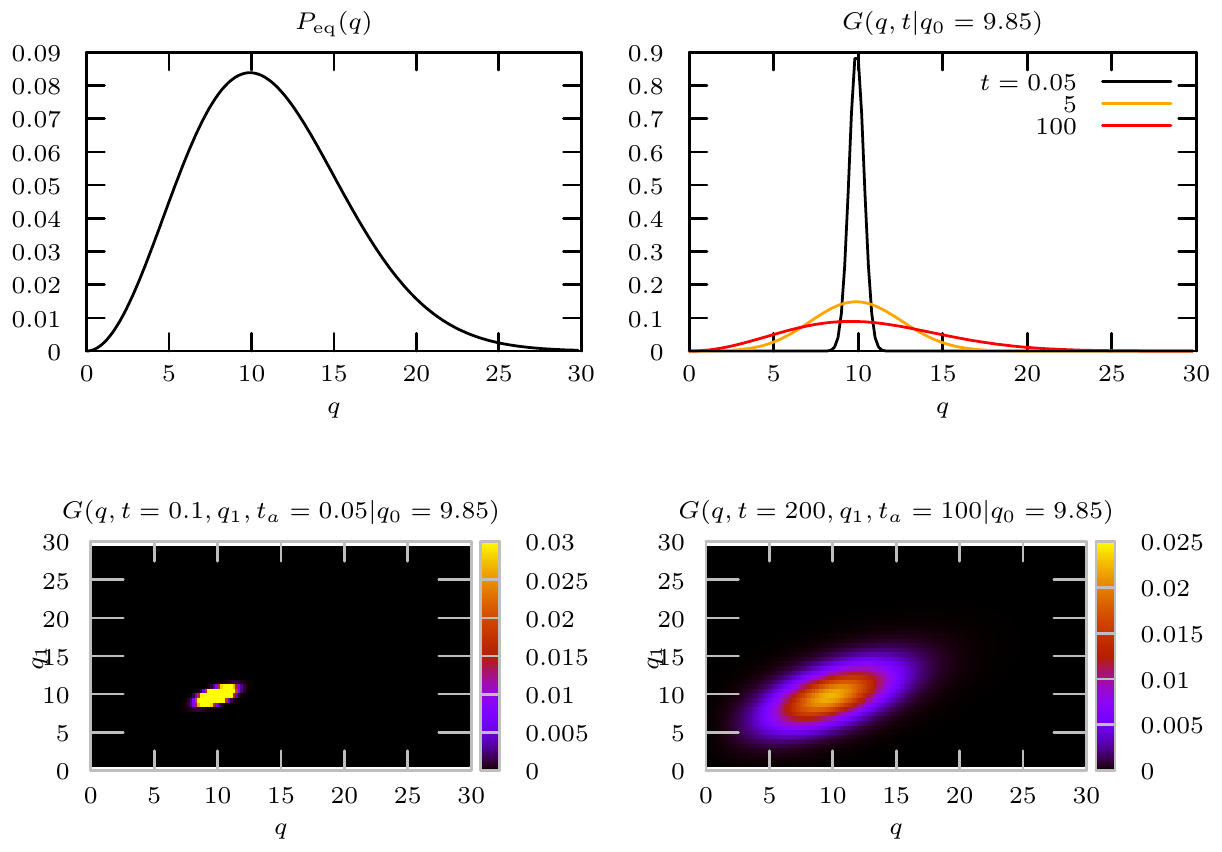}
 \caption{The top left panel shows the density of the invariant
   measure $P_\eeq(q)$, while the top right panel depicts the
   conditional two-point density $G_{\eeq}(q,t|q_0)$ in
   \eql\ref{Rouse2}) for $q_0=9.85$ and three different $t$. The
   bottom panels show the conditional three-point probability density
   function $G_\eeq(q,\tau+t_a,q',t_a|q_0)$ in \eql\ref{GRouse3}) for
   two combinations of $t=\tau+t_a$ and $t_a$.}
 \label{rousechain_fig}
 \end{center}
 \end{figure}

 The density of the invariant measure of the end-to-end distance,
$P_\eeq(q)$ (Fig.~\ref{rousechain_fig}, top left) is concentrated in the regime $0<q<30$ with a maximum at
$q_{\mathrm{peak}}=9.85 \approx 10$. The evolution of the conditional two-point
conditional probability density for an ensemble of trajectories
starting at the typical distance $q_{\mathrm{peak}}$,
$G_{\eeq}(q,t|q_{\mathrm{peak}})$ (Fig.~\ref{rousechain_fig}, top right) evolves smoothly towards $P_\eeq(q)$
with a relaxation time $t_{\mathrm{rel}}=\lambda_1^{-1}\approx
253.4$. Notably, the corresponding three-point density
$G_{\eeq}(q,t,q',t_a|q_{\mathrm{peak}})$ (Fig.~\ref{rousechain_fig}, bottom) shows strong long-time
correlations in the evolution of $q(t)$, e.g. even for aging times
$t_a=100$ (which are already of the order of, but still smaller than,
$t_{\mathrm{rel}}$) the value of $q$ at time $t=2t_a$ is strongly
correlated with its value $q'$ at $t_a$ (Fig.~\ref{rousechain_fig},
bottom right). This long-lasting correlations, which are the result of
the projection of the full $3(N+1)$-dimensional dynamics of the
polymer onto a single distance coordinate $q(t)$, are responsible for
the dynamical time asymmetry.

Note that the relaxation time scales quadratically with the length of
the chain, in.e. $t_{\mathrm{rel}}\propto N^2$ and therefore the
dynamical time asymmetry extends, for long polymers $N\gtrsim 10^4$
over many orders in time. However, for such long chains the
computation or $\AI$ at short $t_a,\tau$ becomes numerically
unstable. In Fig.~\ref{size} we demonstrate the quadratic growth of
relaxation time-scales displaying dynamical time asymmetry with
increasing $N$.  
 \begin{figure}[ht!!]
 \begin{center}
  \includegraphics[width=0.98\textwidth]{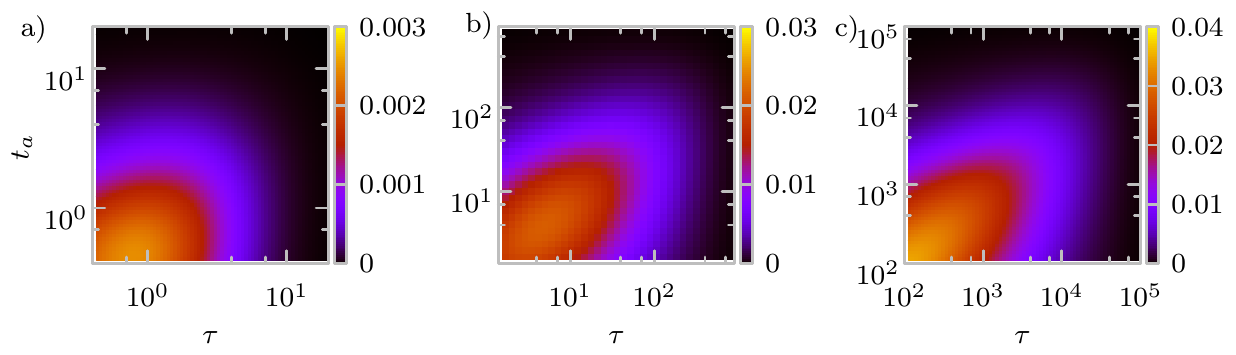}
 \caption{$\AI(t_a,\tau)$ for a) $N=10$, b) $N=10^2$, and c) $N=10^3$
   portraying a growing time-scale of dynamical time asymmetry.}
 \label{size}
 \end{center}
 \end{figure}

\subsection{Single file diffusion}\label{sfile}
The single file model refers to the overdamped Brownian motion of a
system of $N$ particles with hard core exclusion interactions, which
for simplicity (and because the finite-size scenario is obtained by a
simple re-scaling of space) we assume to be point-like and confined to
an interval of unit length $L=1$
\cite{lizana_single-file_2008,lapolla_unfolding_2018,Lapolla_CPC}. We express
length in units of $L$ and time
in units of $\tau=D/L^2$, where $D$ corresponds to the diffusion
coefficient which is assumed to be equal for all particles. The state of the
system is completely described with the vector of particle positions $\bx=(x_1,\ldots,x_N)$.
We are
interested in tagged-particle dynamics and therefore our projected observable
corresponds to the position of the $i$-th particle, $q(t)=x_i(t)$. The
full system's dynamics is driven solely by entropic driving forces
because the potential energy is strictly zero. In turn, the free
energy landscape (i.e. the potential of the mean force acting on the
tagged particle) corresponds to the entropic landscape, whereas the
potential energy hypersurface is perfectly flat.

The Fokker-Planck equation for the Green's function with initial condition $Q(\bx,t=0|\bx_0)$ describing the dynamics of $N$ reads
\begin{equation}
(\partial_t- \sum_i\partial^2_{x_i})Q(\bx,t|\bx_0)=0,\quad Q(\bx,t=0|\bx_0)=\prod_{i=1}^N\delta(x_{i,0}-x_i),
\end{equation}
which is solved under $N-1$ non-crossing boundary conditions
\begin{equation}
\lim_{x_{i+1}\to
  x_i}(\partial_{x_{0,i+1}}-\partial_{x_{0,i}})Q(\mathbf{x},t|\mathbf{x}_0)=0,
\quad \forall i.
\end{equation}
The system is exactly solvable with the coordinate Bethe ansatz, which
yields explicit results for the spectral expansion of
$Q(\mathbf{x},t|\mathbf{x}_0)$ \cite{lapolla_unfolding_2018,Lapolla_CPC},
i.e. $Q(\mathbf{x},t|\mathbf{x}_0)=\sum_{\mathbf{k}}\psi_{\mathbf{k}}^R(\bx)\psi_{\mathbf{k}}^L(\bx_0)\ee^{-\lambda_{\mathbf{k}}t}$
according to \eql\ref{SGreen}), where we introduced the $N$-tuple $\mathbf{k}=(k_1,\ldots,k_N),
k_i\in\mathbb{N},\forall i$. Expressions for the eigenfunctions
$\psi_{\mathbf{k}}^R(\bx),\psi_{\mathbf{k}}^L(\bx)$ are given in
\cite{lapolla_unfolding_2018,Lapolla_CPC} and the eignevalues corresponding to
$\lambda_{\mathbf{k}}=\sum_{i=1}^N k_i \pi^2$. Note that the
relaxation time $t_{\rm rel}=1/\lambda_\mathbf{1}$ once re-scaled to
natural units in terms of the collision time (i.e. $t_{\rm
  col}=(L/N)^2/D$ scales as $\propto N^2$. 

The projection operator is turn defined by \eql\ref{projection}) with $\delta(x_i-q)$, which according
to \eql\ref{element}) yields, upon some tedious algebra,
$G_\eeq(q,t|q_0\in\Xi_0)$ in \eql\ref{GnonMarkSeq})  and $G_\eeq(q,t,q',t_a|q_0\in\Xi_0)$ in \eql\ref{Greens3e}) with matrix elements
\begin{equation}
\Psi_{\mathbf{kl}}(x)=\frac{\mathbf{m_{l}}!}{N_L!N_R!}\sum_{\{n_i\}}T_j(x)\prod_{i=1}^{j-1}
L_i(x)\prod_{i=j+1}^{N} R_i(x)
\label{sfel}
\end{equation}
where $\mathbf{m_{l}}!=\prod_im_{k_i}$ is the multiplicity of the
eigenstate with $m_{k_i}$ corresponds to the number of times a
particular value of $k_1$ appears in the tuple and $N_L,N_R$ are the
number of particles to the left and right from the tagged particle,
respectively. The sum $\sum_{\{n_i\}}$ is over all permutations of the
elements of the $N$-tuple $\mathbf{k}$. For the equilibrium density we
find $P_\eeq(q)=\frac{N!}{N_L!N_R!}q^{N_L}(1-q)^{N_R}$. In \eql\ref{sfel}) we
have defined the auxiliary functions
\begin{eqnarray}
 T_j(x)&=&\begin{cases}
          1 \quad \lambda_k=\lambda_l=0\\
          \displaystyle{\sqrt{2}\cos(\lambda_{k/l}\pi x)} \quad \lambda_{k}=0 \, \text{or} \, \lambda_l=0\\
          \displaystyle{2\cos(\lambda_{k}\pi x)\cos(\lambda_{l}\pi x)} \quad \text{otherwise}\\
 \end{cases}\\
\end{eqnarray}
\begin{eqnarray}
L_j(x)&=&\begin{cases}
          x \quad \lambda_k=\lambda_l=0\\
         \displaystyle{\sqrt{2}\frac{\sin(\lambda_{k/l}\pi x)}{\lambda_{k/l}\pi} \quad \lambda_{k}=0} \, \text{or} \, \lambda_l=0\\
          \displaystyle{x+\frac{\sin(2\lambda_k\pi x)}{2\lambda_k\pi x}} \quad \lambda_k=\lambda_l\\
          \displaystyle{2\frac{\lambda_k\cos(\lambda_l\pi x)\sin(\lambda_k \pi x)-\lambda_l\cos(\lambda_k\pi x)\sin(\lambda_l \pi x)}{\pi(\lambda_k^2-\lambda_l^2)}} \quad \text{otherwise}\\
         \end{cases}\\
  R_j(x)&=&\begin{cases}
          \displaystyle{1-x} \quad \lambda_k=\lambda_l=0\\
          \displaystyle{-\sqrt{2}\frac{\sin(\lambda_{k/l}\pi x)}{\lambda_{k/l}\pi}} \quad \lambda_{k}=0 \, \text{or} \, \lambda_l=0\\
          \displaystyle{1-x-\frac{\sin(2\lambda_k\pi x)}{2\lambda_k\pi x}} \quad \lambda_k=\lambda_l\\
          \displaystyle{2\frac{-\lambda_k\cos(\lambda_l\pi x)\sin(\lambda_k \pi x)+\lambda_l\cos(\lambda_k\pi x)\sin(\lambda_l \pi x)}{\pi(\lambda_k^2-\lambda_l^2)}} \quad \text{otherwise}\\
         \end{cases}
\end{eqnarray}
The autocorrelation functions $C(t)$ and $C_{t_a}(\tau)$ can now be
calculated using Eqs.~(\ref{EQCorr}) and (\ref{a_corr}),
respectively, where trivially $\langle q(t)\rangle=(N_l+1)/(N+1)$ and $\langle q(0)^2\rangle=(N_L+2)(N_L+1)/[(N+2)(N+1)]$.  As it was impossible to carry out this final step
analytically, we carried out the integrals in Eqs.~(\ref{EQCorr}) and
(\ref{a_corr}) depicted in Fig.~3b in the main text  numerically according to  the trapezoidal rule. 
 \begin{figure}[ht!!]
 \begin{center}
  \includegraphics[width=0.8\textwidth]{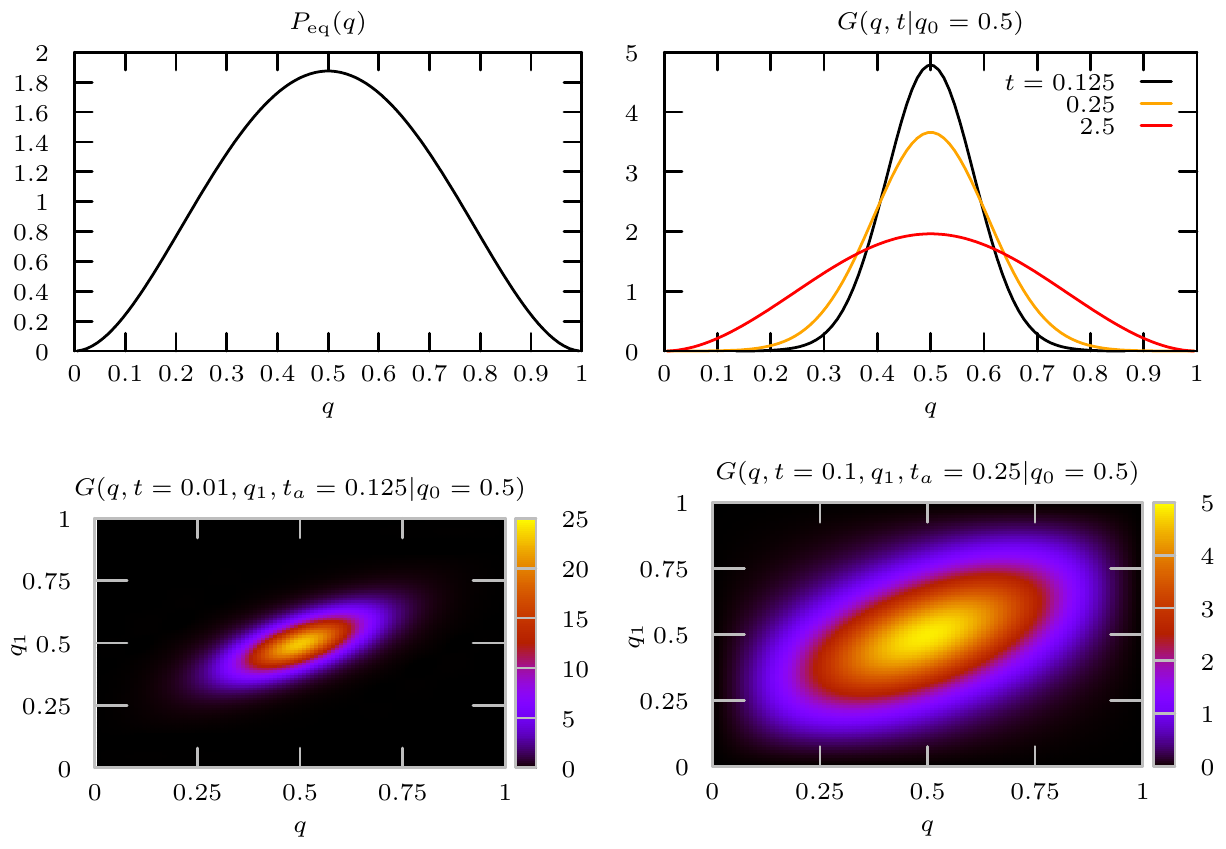}
  \caption{\textbf{Single file.} The top left panel depicts the density of the
    equilibrium measure $P_\eeq(q)$ of the third particle in a single
    file of five particles, $q(t)=x_3(t)$. The top right panel
    shows a two-point conditional probability density function
    $G(q,t|q_0)$ for different values of $t$, where
    $q_0=0.5$ (i.e. the maximum of the $P_\eeq(q)$). The bottom panels depict the
    three-point density $G(q,t,q',t_a|q_0)$ at
    different $\tau$ and $t_a$ evolving from the same initial
    condition. To produce $G(q,t,q',t_a|q_0)$ the spectral expansion \eql\ref{SGreen}) was truncated at maximum Bethe eigenvalue $225\pi^2$.}
  \label{single_fig}
  \end{center}
 \end{figure}\\
The computation of the time asymmetry index $\AI$ in \eql\ref{AI}), which was as well performed using the trapezoidal rule on a grid of
100 points, is extremely challenging even for moderate values of $N$. By repeating the integration using a smaller grid of 50
points we double-checked that the integration routine converged to a
sufficient degree. The results for a single file of $N=5$ particles
tagging the third particle are presented in Fig.~3b in the main
text and in Fig.~\ref{single_fig}.

The density of the invariant measure of the tagged central particle
(Fig.~\ref{single_fig}, top left) peaks in the center of the unit
box, $q_{\mathrm{peak}}=0.5$, and decays towards the borders due to the entropic repulsion with
the neighbors. The evolution of the conditional two-point
conditional probability density for an ensemble of trajectories
starting at the typical distance $q_{\mathrm{peak}}$,
$G_{\eeq}(q,t|q_{\mathrm{peak}})$ (Fig.~\ref{single_fig}, top right) evolves smoothly towards $P_\eeq(q)$
with a relaxation time $t_{\mathrm{rel}}=\lambda_1^{-1}\approx 2.5$.

Similar to the end-to-end distance of the Rouse polymer the
corresponding three-point density of the tagged particle,
$G_{\eeq}(q,t,q',t_a|q_{\mathrm{peak}})$ (Fig.~\ref{single_fig}, bottom) shows strong long-time
correlations in the evolution of $q(t)$, e.g. even for aging times
$t_a=0.125$ (which are already of the order of, but still smaller than,
$t_{\mathrm{rel}}$) the value of $q$ at time $t=2t_a$ is strongly
correlated with its value $q'$ at $t_a$ (Fig.~\ref{rousechain_fig},
bottom right). This long-lasting correlation reflects prolonged entropic
bottlenecks (i.e. 'traffic-jams'), which require collective
rearrangements of many particles and thus decorrelate slowly, giving
rise to strong memory effects and dynamical time asymmetry (see Fig.~3b in the main text) as soon as $q_0$ is initially not
sampled from $P_\eeq(q_0)$.

 \subsection{Analysis of experimental and simulation data}\label{experimental}
We now consider the time series of a
low-dimensional projected observable sampled at discrete time
steps that is frequently encountered in the analysis of experimental data. Therefore we first translate all definitions in
Sec.~\ref{lower} to discrete time series and explain in detail how to
carry out the complete analysis of aging dynamics for such systems. To
ease the application of these new concepts we also provide a C++
routine \textsf{TSymmetryFinder} that will be made available on GitHub.

 We consider a discrete time series of length $N$ --  in our case a
 one-dimensional physical observable $q(t_i)$ -- sampled at constant time
 intervals with spacing $t_{i+1}-t_i=\Delta t,\forall i$. We pre-process the data by
 evaluating the mean value over the time series
 $\overline{q}=N^{-1}\sum_{i=1}^Nq(t_i)$ and then center the data by
 subtracting the mean value, $q(t_i)\to q(t_i)-\overline{q}$. 
In the first stage we determine the (non-aging) autocorrelation
 function
 \begin{equation}
 C(\tau=n_\tau \Delta t)=\frac{1}{(N-n_\tau ) \Delta
   t}\sum_{i=1}^{N-n_\tau}q(t_{i+n_\tau})q(t_i)\Delta t,\qquad
 n_\tau\ll N
 \end{equation}
and determine the relaxation time as
$t_r:\displaystyle{\min_{t_i}}[\mathcal{C}(t_i)/\mathcal{C}(0)<\epsilon]$,
where we choose $\epsilon=0.05$. All data are
henceforth analyzed such that $n_{\mathrm{max}}\equiv
n_{\tau_{max}}\approx  n_r=t_r/(\Delta t)$ in order to assure sufficient
sampling when evaluating sliding averages.  

Next we determine the equilibrium probability density function  and two-point conditional probability density (Eq.~(\ref{projected})) as a
histogram taken over the data. We introduce bins $\mathcal{B}_i$
centered at $q_i$ with a width $\delta q$ and define
the characteristic function of a bin
$\mathbbm{1}_{\mathcal{B}_i}[q(t_i)]=1$ if $q_i -\delta q/2\le
  q(t_i)<q_i+\delta q/2$ and zero otherwise and let
  $\mathbbm{1}_{\Xi_0}[q]$ be the indicator function of the initial condition. Let us further define
$n_l(n_\tau)=\displaystyle{\max_i q(t_i)\in\Xi_0}:n_l+n_\tau\le
  N$. The equilibrium probability density function is then determined
  as
\begin{equation}
P_{\eeq}(q_i)=(N\delta q)^{-1}\sum_{i=1}^N\mathbbm{1}_{\mathcal{B}_i}[q(t_i)]
\end{equation}  
and the 2-point conditional probability density as
\begin{eqnarray}
  G(q_i,n_\tau|q_0\in\Xi_0)&=&\frac{\sum_{i=1}^{n_l(n_\tau)}\mathbbm{1}_{\mathcal{B}_i}[q(t_{i+n_\tau})]\mathbbm{1}_{\Xi_0}[q(t_i)]}{\delta
    q\sum_{i=1}^{n_l(n_\tau)}\mathbbm{1}_{\Xi_0}[q(t_i)]},\\
  G(q_i,n_\tau|q_j)&=&\frac{\sum_{i=1}^{n_l(n_\tau)}\mathbbm{1}_{\mathcal{B}_i}[q(t_{i+n_\tau})]\mathbbm{1}_{\mathcal{B}_j}[q(t_i)]}{\delta
    q\sum_{i=1}^{n_l(n_\tau)}\mathbbm{1}_{\Xi_0}[q(t_i)]},
\end{eqnarray}
The three-point conditional density \eql\ref{Greens3e}) is defined for
$n_{\tau'}\le n_{\tau}$ analogously as
\begin{equation}
G(q_i,n_\tau,q_j,n_{\tau'}|q_0\in\Xi_0)=\frac{\sum_{i=1}^{n_l(n_\tau)}\mathbbm{1}_{\mathcal{B}_i}[q(t_{i+n_\tau})]\mathbbm{1}_{\mathcal{B}_j}[q(t_{i+n_{\tau'}})]\mathbbm{1}_{\Xi_0}[q(t_i)]}{\delta
  q^2\sum_{i=1}^{n_l(n_\tau)}\mathbbm{1}_{\Xi_0}[q(t_i)]}.
\end{equation}
Introducing $\delta n=n_\tau-n_a$ the time-asymmetry index is in turn determined as a double sum
\begin{equation}
  \AI_{\Xi_0}(t_a,\tau)=\delta
  q^2\sum_{i,j}G(q_i,\delta
  n+n_a,q_j,n_a|q_0\in\Xi_0)\log\frac{G(q_i,\delta
    n+n_a,q_j,n_a|q_0\in\Xi_0)}{G(q_i,\delta n+n_a|q_j)G(q_i,n_a|q_0\in\Xi_0)},
\end{equation}
whereas the normalized aging correlation function (\eql\ref{a_corr}) is determined
according to
\begin{eqnarray}
  \hat{C}_{t_a}(\tau)&=&\frac{\langle q(t_{\delta n + n_a}) q(t_{n_a})\rangle-\langle q(t_{\delta n + n_a})\rangle\langle q(t_{n_a})\rangle}{\langle q^2(t_{n_a})\rangle-\langle q(t_{n_a})\rangle^2},\\
\langle q(t_{\delta n + n_a})
    q(t_{n_a})\rangle &\equiv& \sum_{i,j}q_iq_jG(q_i,\delta
    n+n_a,q_j,n_a|q_0\in\Xi_0),\\
\langle q(t_i)\rangle &\equiv& \sum_{i,j}q_iG(q_i,n_i|q_0\in\Xi_0),    
\end{eqnarray}
where $\tau=\delta n\Delta t$ and $t_a=n_a\Delta t$ and we note that
by construction (i.e. due to the centering of data) $\langle
q(t_{i})\rangle=0, \forall i$. 100 bins in
$q(t_{\delta n + n_a})$ and 100 bins in
$q(t_{n_a})$ were used for each combination of $\tau$ and $t_a$ to determine $G(q_i,\delta
  n+n_a,q_j,n_a|q_0\in\Xi_0),G(q_i,\delta n+n_a|q_j)$ and
  $G(q_i,n_a|q_0\in\Xi_0)$ and in turn
  $\AI_{\Xi_0}(t_a,\tau)$.
 
 \subsubsection{DNA-hairpin}\label{DNA}
 Dual optical tweezers data of the DNA hairpin were kindly provided by
 the Woodside group \cite{Woodside2} in the form of a
 constant trap measurements of the DNA hairpin 30R50T4, sampled at
 $400$ kHz, for trap stiffness $0.63 pN/nm$ in one optical trap and
 $1.1 pN/nm$ in the other. The time series was $2.75\cdot 10^4
 \mathrm{ms}$ long. The normalized aging correlation function
 $C_{t_a}(\tau)$ and dynamical time asymmetry index $\AI$ are depicted in
 Figs.~3c and 4b in the manuscript. Here, we additionally present in
 Fig.~\ref{hairpin_fig}, for illustrative purposes and for the sake of
 completeness, the density of the invariant (equilibrium) measure
 $P_\eeq(q)$ and exemplary two-point conditional probability
 $G(q,t|q_0\in\Xi_0)$ and the three-point conditional density $G(q,\tau+t_a,q',t_a|q_0\in\Xi_0)$,
 respectively, for various $t$. The histograms in the
 relative deviations $q(t_i)=q_{\mathrm{raw}}(t_i)-\overline{q}$ were
 determined by binning the interval from -25 nm to +15 nm into 100 bins
 and $\overline{q}=3.47$ nm. These probability density functions are shown
 in Fig.~\ref{hairpin_fig}.
 \begin{figure}[ht!!]
 \begin{center}
  \includegraphics[width=0.8\textwidth]{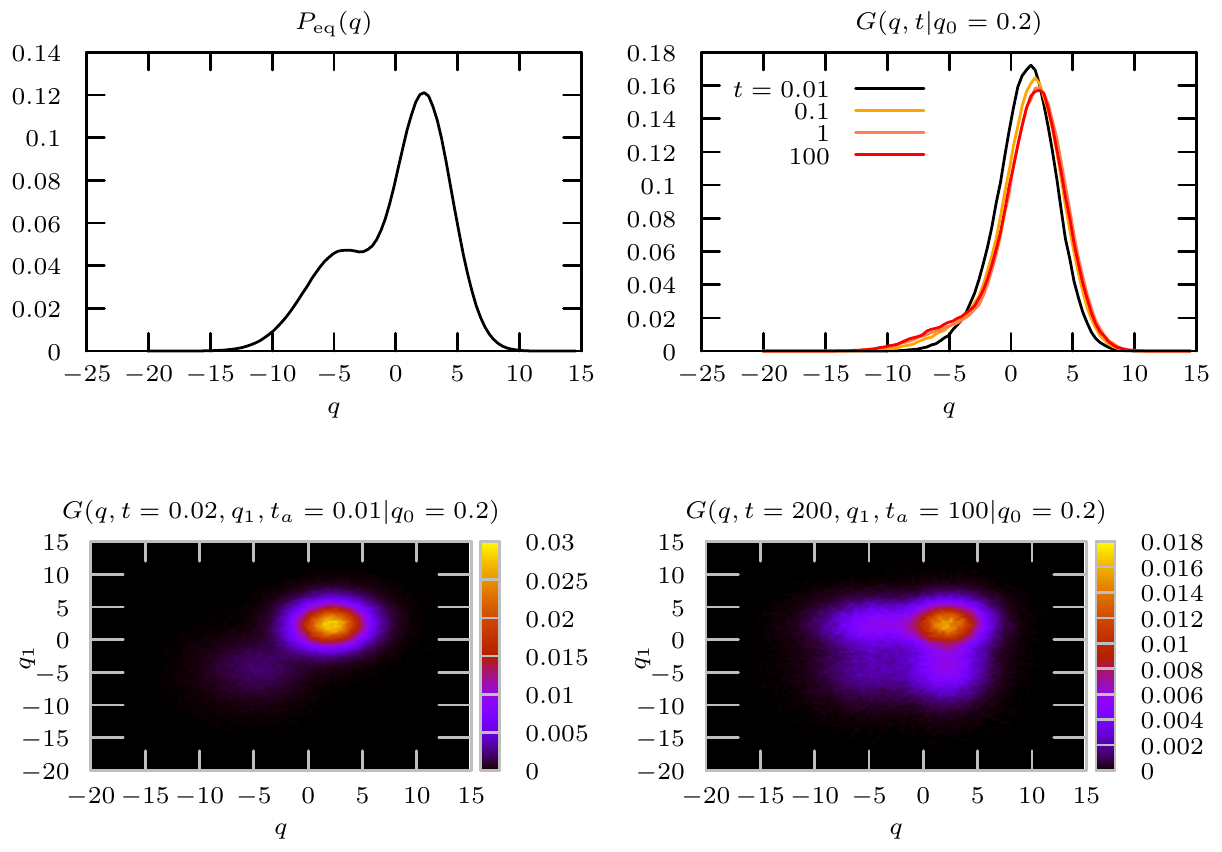}
  \caption{\textbf{DNA Hairpin.} In the top left panel depicts the density of the
    equilibrium measure $P_\eeq(q)$
    of the centered time series
    $q(t_i)=q_{\mathrm{raw}}(t_i)-\overline{q}$, while the top right
    shows a two-point conditional probability density function
    $G(q,t|q_0\in\Xi_0)$ for different values of $t$, where
    $\Xi_0=[0.2-0.2,0.2+0.2]$ nm. The bottom panels depict the
    three-point density $G(q,t,q',t_a|q_0\in\Xi_0)$ at
    different $\tau$ and $t_a$ evolving from the same initial condition.}
  \label{hairpin_fig}
  \end{center}
 \end{figure}
 \begin{figure}[ht!!]
  \begin{center}
   \includegraphics[width=0.8\textwidth]{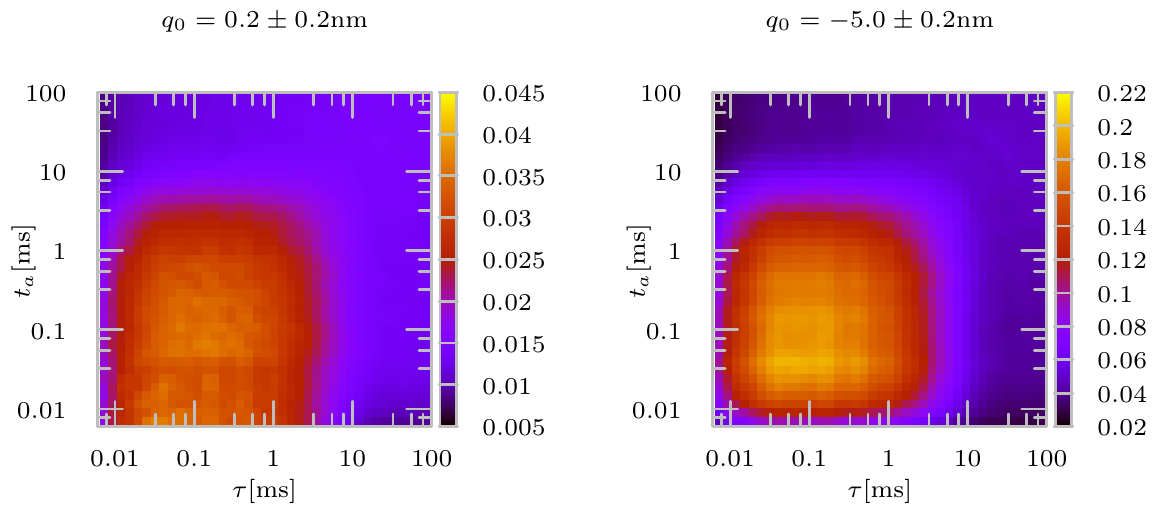}
   \caption{\textbf{DNA Hairpin, second example.} The time asymmetry index for the DNA hairpin data as in
     Fig.~\ref{hairpin_fig} but with the initial condition
     $\Xi_0=[-5-0.2,-5+0.2]$ nm.}
   \label{hairpinaddkl_fig}
  \end{center}
 \end{figure}

 The density of the invariant measure of the extension of the hairpin
 is bimodal, reflecting the existence of two long-lived conformational states
 (Fig.~\ref{hairpinaddkl_fig}, top left). 
The evolution of the conditional two-point
conditional probability density for an ensemble of trajectories
starting at the typical distance $q_{\mathrm{peak}}$,
$G_{\eeq}(q,t|q_{\mathrm{peak}})$ (Fig.~\ref{hairpinaddkl_fig}, top right) evolves smoothly towards $P_\eeq(q)$
with a relaxation time $t_{\mathrm{rel}}=\lambda_1^{-1}\approx 15 ms$,
and nicely depicts the onset of conformational transitions (see red
line). 

Another striking feature of hairpin dynamics is seen in the
corresponding three-point density,
$G_{\eeq}(q,t,q',t_a|q_{\mathrm{peak}})$ (Fig.~\ref{hairpinaddkl_fig},
bottom), which depicts, alongside the linear correlations along and near
the diagonal $q=q'$ that were also present in the Rouse polymer and
tagged-particle diffusion in a single-file, prominent non-linear correlations
(see off-diagonal peaks). This readily reveals that temporal
correlations in the motion persist beyond the time-scale of
conformational transitions, that is, the hairpin relaxation dynamics post transition remembers
the configurations prior to the transition even on time-scales of
$\gtrsim 200$ ms, which reflects a very long range of broken
Markovianity.  However, a comparison with the corresponding time asymmetry
index in Fig.~3c in the main text shows that at aging times $t_a=100$
ms the dynamics is already time-translation invariant. This is a nice and clear practical demonstration of the
important conceptual difference
between the notion of relaxation with a broken time-translation
invariance and memory effects in time-translation
invariant relaxation of a low-dimensional physical
observable.

In order to demonstrate the robustness of these
observations with respect to specific the initial condition $q_0=1(0)$
(as long as $p_0(q_0)\ne P_\eeq(q_0)$ that is) we also present in Fig.~\ref{hairpinaddkl_fig} the results for a different set of initial conditions.
 The results in Fig.~\ref{hairpinaddkl_fig} show qualitatively the same
 features and are fully consistent with the
 statements in the manuscript.

 Finally, we asses the statistical uncertainty of determining
 $\AI$ from the experimental time-series. We do so by
 performing the analysis on an ensemble of trajectories obtained by randomly
 removing 10 (from the total of 50, i.e. $20$\%\ of the data) trajectories and averaging over 20
 repetitions of a data-set created in this manner. We quantify the
 statistical error by determining the local standard deviation of
 $\AI$, i.e. $\sigma_{\AI}=\sqrt{N_i^{-1}\sum_{i=1}^{N_i}\AI_i^2-(N_i^{-1}\sum_{i=1}^{N_i}\AI_i)^2}$
 where $N_i$=20.
 The results are
 shown in Fig.~\ref{err_an} and depict a local error that is smaller
 than 1\%. 
 \begin{figure*}[ht!!]
  \begin{center}
   \includegraphics[width=0.8\textwidth]{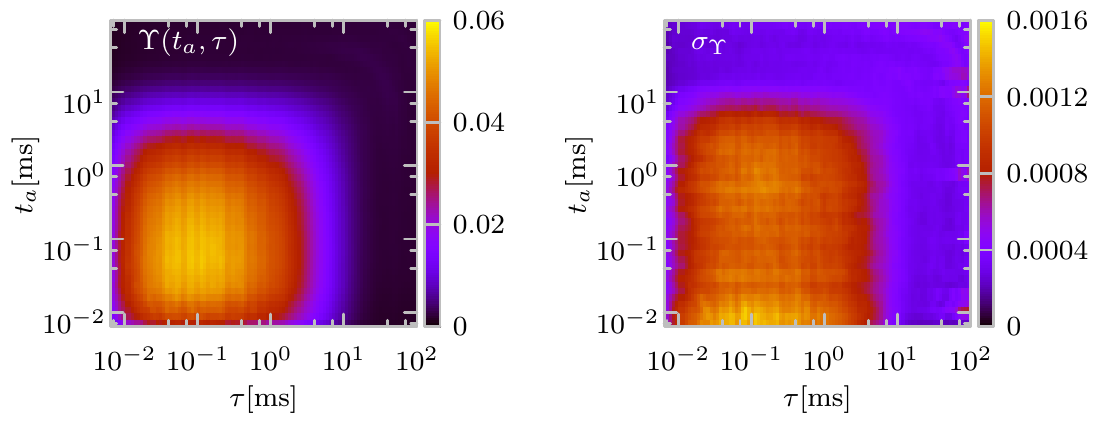}
   \caption{\textbf{Statistical error in the DNA-hairpin analysis.} a)
   average $\AI(\tau,t_a)$ determined from an ensemble with forced
   under-sampling (i.e. by omiting 20\% of the data); b) the standard
   deviation of the local $\AI(\tau,t_a)$.}
   \label{err_an}
  \end{center}
 \end{figure*}
 
  \subsubsection{Yeast 3-phosphoglycerate kinase (PGK)}\label{PGK}
  Atomistic Molecular Dynamics (MD) simulation of yeast PGK were
  carried out by Hu et al. \cite{Jeremy}, starting from the
  PDB structure 3PGK with a duration of $1.71\cdot 10^5
  \mathrm{ps}$. The observable $q(t_i)$ refers here to the distance
  between the center of mass of the N-terminal domain (residues 1-185)
  and the center of mass of C-terminal domain (residues 200-389).  In
  Fig.~\ref{3PGK_fig} we depict the density of the invariant (equilibrium) measure
 $P_\eeq(q)$ and exemplary two-point conditional probability
 $G(q,t|q_0\in\Xi_0)$ and the three-point conditional density $G(q,\tau+t_a,q',t_a|q_0\in\Xi_0)$,
 respectively, for various $t$. The histograms in the
 relative deviations $q(t_i)=q_{\mathrm{raw}}(t_i)-\overline{q}$ were
 determined by binning the interval from -0.2 nm to +0.3 nm into 100 bins
 and $\overline{q}=0.67$ nm.\\
  \begin{figure}[ht!!]
   \begin{center}
    \includegraphics[width=0.9\textwidth]{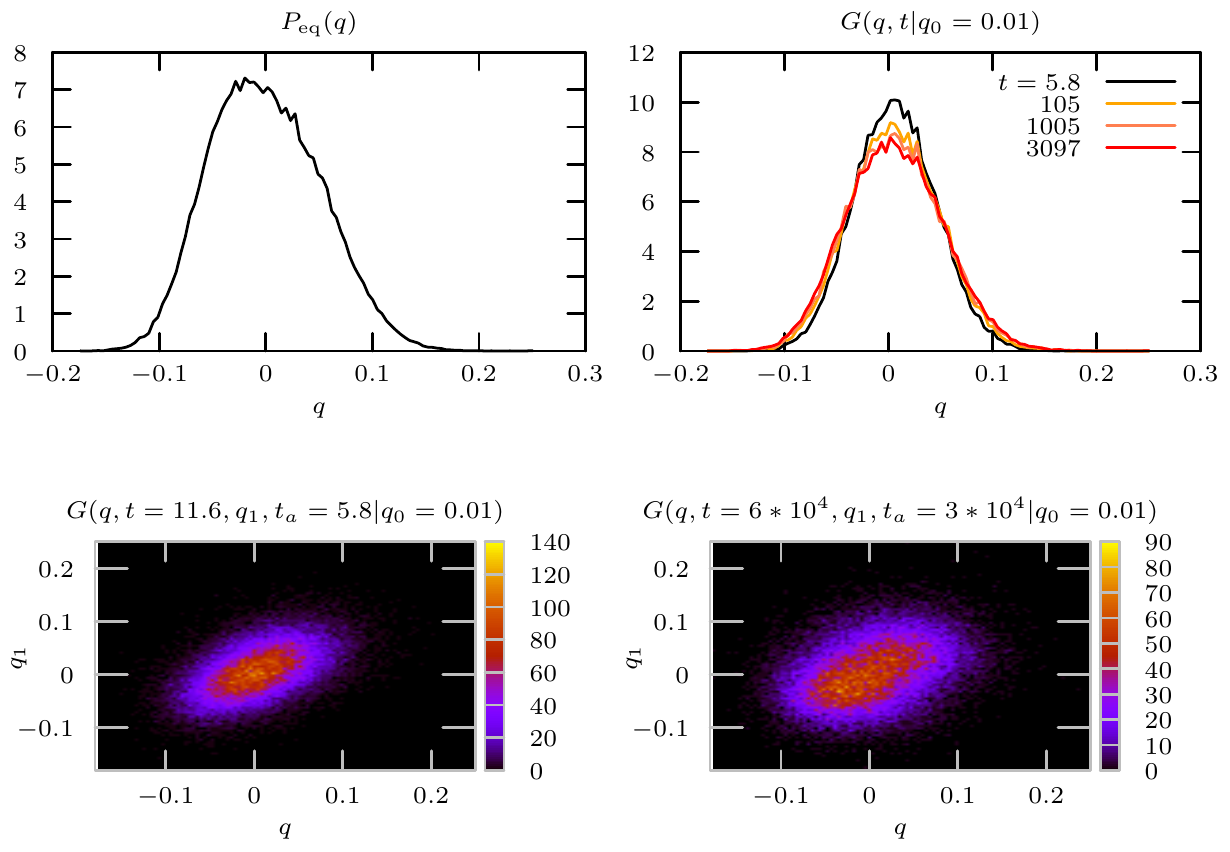}
    \caption{\textbf{PGK.} The top left panel depicts the density of the
    equilibrium measure $P_\eeq(q)$
    of the centered time series
    $q(t_i)=q_{\mathrm{raw}}(t_i)-\overline{q}$, while the top right
    shows a two-point conditional probability density function
    $G(q,t|q_0\in\Xi_0)$ for different values of $t$, where
    $\Xi_0=[0.01-0.005,0.01+0.005]$ nm. The bottom panels depict the
    three-point density $G(q,t,q',t_a|q_0\in\Xi_0)$ at
    different $\tau$ and $t_a$ evolving from the same initial condition.}
    \label{3PGK_fig}
   \end{center}
  \end{figure}
The density of the invariant measure $P_\eeq(q)$
(Fig.~\ref{3PGK_fig}, top left) is unimodal and effects of a poorer
statistics are readily discernable through the roughness of the
curve. The evolution of the conditional two-point
conditional probability density for an ensemble of trajectories
starting at the typical distance $q_{\mathrm{peak}}$,
$G_{\eeq}(q,t|q_{\mathrm{peak}})$ is shown in Fig.~\ref{3PGK_fig} (top
right panel) and reveals that the the dynamics along $q$ is strongly
localized (i.e. $G_{\eeq}(q,t|q_{\mathrm{peak}}$ barely changes between
$t=100$ ps and $t=3000$). Note that PGK did not relax within the duration of the trajectory.  The
corresponding three-point density,
$G_{\eeq}(q,t,q',t_a|q_{\mathrm{peak}})$ (Fig.~\ref{3PGK_fig},
bottom), shows that the observable almost does not relax at all within
$6\times 10^4$ ps (compare left and right panel). As in the case of
the Rouse polymer $G_{\eeq}(q,t,q',t_a|q_{\mathrm{peak}})$ shows
strong and long-lasting correlations between positions, and comparison
with the dynamical time asymmetry index in Fig.~3d in the main text reveals that the
dynamics has a strongly time-translation invariance. These findings corroborate the original
analysis of Hu et al. \cite{Jeremy} who observed aging effects. 

Similar to the DNA hairpin we also present in Fig.~\ref{3pgkaddkl_fig} the
results for the dynamical time asymmetry  index for a different set of initial conditions and two different
choices of the observable $q(t)$ for PGK, which demonstrate the
robustness of the results. 
   \begin{figure}[ht!!]
  \begin{center}
   \includegraphics[width=0.9\textwidth]{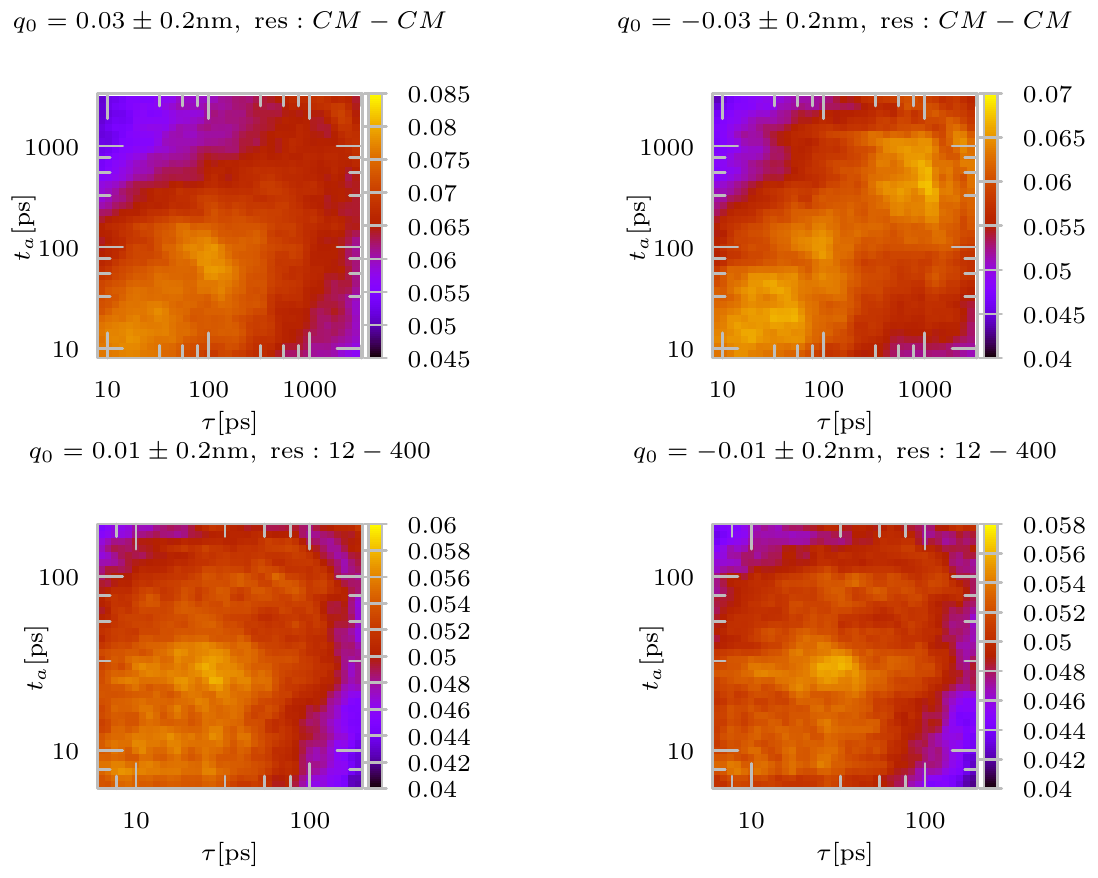}
   \caption{\textbf{PGK, second example.} Top: dynamical time asymmetry  index for yeast PGK when $q(t)$ corresponds to
     distance between the center of masses of the N- and C- terminal
     domains (respectively residues 1-185 and 200-389) for a pair
     different initial conditions. Bottom: dynamical time asymmetry  index for yeast PGK when $q(t)$ corresponds to the distance between two specific residues (the 12th and the 400th) for two different initial conditions.}
   \label{3pgkaddkl_fig}
  \end{center}
 \end{figure}

PGK  obviously does not relax within the duration of the trajectory and
more generally it is conceivable that larger, complex proteins do not relax at all during
their life-time \cite{Jeremy}, which makes them virtually 'forever aging'
\cite{SRalf}, which may have important consequences for their
biological function. Such aging effects on function were
observabed in single-enzyme turnover statistics
\cite{Xie1,Xie2,Xie3} and have so-far been
rationalized only with ad-hoc phenomenological models
\cite{Xie2,Xie3}. The present theoretical framework
provides a unifying mechanistic understanding dynamics with broken
time-translation invariance in soft and
biological matter and will pave the way for deeper and more systematic
investigations of the potential biological relevance of memory and dynamical time asymmetry for
enzymatic catalysis. 

\clearpage
\newpage
\twocolumngrid
\bibliography{bib_main}
\end{document}